\newcommand{\lf}{\lfloor}
\newcommand{\cov}{\mbox{cov}}
\newcommand{\rf}{\rfloor}
\newcommand{\bdelta}{{\mbox{\boldmath $\psi$}}}
\newcommand{\balpha}{{\mbox{\boldmath $\delta$}}}
\newcommand{\bgamma}{{\mbox{\boldmath $\gamma$}}}
\newcommand{\bGamma}{{\mbox{\boldmath $\Gamma$}}}
\newcommand{\bLambda}{{\mbox{\boldmath $\Lambda$}}}
\newcommand{\bO}{{\bf O}}
\newcommand\T{\top}
\newcommand{\bX}{{\bf X}}
\newcommand{\bZ}{{\bf Z}}
\newcommand{\be}{{\bf e}}
\newcommand{\bH}{{\bf H}}
\newcommand{\G}{{\bf G}}
\newcommand{\U}{{\bf U}}
\newcommand{\V}{{\bf V}}
\newcommand{\br}{{\bf r}}
\newcommand{\vare}{{\nu}}
\newcommand{\A}{{\bf A}}
\newcommand{\fe}{{\mathfrak e}}
\newcommand{\hfe}{\hat{{\mathfrak e}}}
\newcommand{\C}{{\bf C}}
\newcommand{\hC}{\hat{\bf C}}
\newcommand{\tC}{\tilde{\bf C}}
\newcommand{\sa}{\begin{align*}
}
\newcommand{\se}{\end{align*}
}
\numberwithin{table}{section}
\numberwithin{figure}{section}
\newcommand{\bG}{{\bf G}}
\newcommand{\bv}{{\bf v}}
\newcommand{\W}{{\bf W}}
\newcommand{\baX}{\bar{\bf X}}
\newtheorem{rem}{Remark}[section]
\newtheorem{cor}{Corollary}[section]
\newtheorem{theorem}{Theorem}[section]
\newtheorem{lemma}{Lemma}[section]
\newtheorem{assumption}{Assumption}[section]
\newcommand\X{{\bf X}}
\newtheorem{definition}{Definition}[section]
\def\beq{\begin{equation}}
\def\eeq{\end{equation}}
\numberwithin{equation}{section}
\numberwithin{theorem}{section}
\begin{document}

\title[Eigenvalue Based Testing for Structural Breaks]{Empirical eigenvalue based testing for structural breaks in linear panel data models}

\author {Lajos Horv\'ath}

\address{ Department of Mathematics, University of Utah, Salt Lake City, UT, USA
}

\author{Gregory Rice}
\address{ Department of Statistics and Actuarial Science, University of Waterloo, Waterloo, ON, Canada}


\keywords{panel data, change point detection, time series, empirical eigenvalues, CUSUM process, weak convergence}


\begin{abstract}
Testing for stability in linear panel data models has become an important topic in both the statistics and econometrics research communities. The available methodologies address testing for changes in the mean/linear trend, or testing for breaks in the covariance structure by checking for the constancy of common factor loadings. In such cases when an external shock induces a change to the stochastic structure of panel data, it is unclear whether the change would be reflected in the mean, the covariance structure, or both. In this paper, we develop a test for structural stability of linear panel data models that is based on monitoring for changes in the largest eigenvalue of the sample covariance matrix. The asymptotic distribution of the proposed test statistic is established under the null hypothesis that the mean and covariance structure of the panel data's cross sectional units remain stable during the observation period. We show that the test is consistent assuming common breaks in the mean or factor loadings. These results are investigated by means of a Monte Carlo simulation study, and their usefulness is demonstrated with an application to U.S. treasury yield curve data, in which some interesting features of the 2007-2008 subprime crisis are illuminated.
\end{abstract}

\maketitle

\section{Introduction }\label{first}


We consider in this paper the problem of testing for the presence of a structural break in linear panel data models. Structural breaks in panel data may result from any of a number of sources. For example, if the data under consideration consists of U.S. macroeconomic indicators, then the onset of a recession, or the introduction of a new technology, may be evidenced by changes in the correlations between indicators or linear model parameters fitted from the data.



Change point analysis has been extensively developed to study such features in data; we refer to Aue and Horv\'ath (2012) for a recent survey of the field in the context of time series. Adapting change point methodology to the panel data setting presents a difficulty since the dimension, or number of cross sectional units ($N$), may be larger in relation to the sample size ($T$) than is typical in classical change point analysis. This encourages asymptotic frameworks in which both $N$ and $T$ tend jointly to infinity.

Most of the literature in this direction address either testing for changes in the mean, or testing for changes in the correlation structure as measured by changes in common factor loadings. With regards to testing for and estimating changes in the mean, we refer to Bai (2010), who derives a least squares change point estimator. Kim (2011, 2014), and Baltagi et al. (2015) extend this methodology to account for changes in linear trends in the presence of cross sectional dependence modeled by common factors. Horv\'ath and Hu\v{s}kov\'a (2012) develop a test for a structural change in the mean based on the CUSUM estimator. Li et al. (2014) and Qian and Su (2014) consider multiple structural breaks in panel data, and Kao et al. (2014) considers break testing under cointegration.


 Estimating and testing for changes in the covariance of scalar and vector valued time series of a fixed dimension are considered in Galeano and Pe\~na (2007), Aue et al.\ (2009), and Wied et al\ (2012).  With regards to testing for changes in the factor structure of panel data, Breitung and Eickmeier (2011) develop methodology that relies on testing for constancy of the least squares estimates obtained by regression on the principal component factors. Their test depends on estimating the number of common factors according to the information criterion developed in Bai and Ng (2002). In both the testing procedure, and the method used to determine the number of common factors, it is presumed that the mean remains constant.

In such instances when external shocks induce a change to the stochastic structure of panel data, it is unclear whether or not the change would affect the mean, the covariance structure, or both. Methods for detecting changes in the mean appear to be somewhat robust to small changes in the covariance structure of the panels, however the methods proposed in Breitung and Eickmeier (2011) to test for changes in the common factor loadings are sensitive to both changes in the mean, and large changes in the covariance, evidenced by non-monotonic power. This was recently addressed in Yamamoto and Tanaka (2015), in which a correction is proposed, but it raises the question of whether alternative methods to estimating principal components, and the number of common factors, might be effective in terms of detecting instability in panel data.

The alternative that we explore here relies on analyzing the largest eigenvalues of the covariance matrix. Using the largest eigenvalues of a covariance matrix as a simplified summary of the covariance structure of multivariate time series has served an important role in finance and econometrics for quite some time. This idea is utilized in Markowitz portfolio optimization (cf.\ Markowitz (1952, 1956)), and to model co--movements of markets and stocks as a barometer for risk (cf. Keogh et al.\ (2004) and Zovko and Farmer (2007)), among other applications.

In this paper, we propose methodology for testing structural stability in linear panel data models that is based on a process derived from the largest eigenvalue of the covariance matrix based on an increasing proportion of the total sample.  The asymptotic distribution of the eigenvalue process is established assuming structural stability. Furthermore, we show that functionals of the eigenvalue process diverge when there is a common break in the mean or covariance as measured by the common factor loadings.


The rest of the paper is organized as follows. In Section \ref{main-1}, we present the linear panel data models and assumptions considered in the paper, as well as the main asymptotic results for the largest eigenvalue under the null hypothesis of stability of the model parameters. Section \ref{alt} contains the details of applying the results of Section \ref{main-1} to the change point problem, including asymptotic consistency results under the mean break and factor loading break alternatives. In Section \ref{monte}, we discuss the practical implementation of the test, and present the results of a Monte Carlo simulation study. Section \ref{app} contains an application of the methodology developed in the paper to US treasury yield curve data. Analogous results for smaller eigenvalues are considered in Section \ref{all-eig}. All proofs of the technical results are collected in Section \ref{proofs}.

\section{Models, assumptions, and asymptotics under $H_0$}\label{main-1}

We consider the model

\beq\label{model-full}
X_{i,t}=(\mu_i+\delta_iI\{t\geq t^*\})+(\gamma_i+\psi_iI\{t\geq t^*\})\eta_t+e_{i,t},\;\;1\leq i \leq N, 1\leq t \leq T,
\eeq

where $X_{i,t}$ denotes the $i^{\mbox{th}}$ cross section of the panel at time $t$, $\mu_i$ denotes the initial mean of the $i^{\mbox{th}}$ cross section that changes to $\mu_i+\delta_i$ at the unknown time $t^*$, $\eta_t$ denotes a real valued common factor with initial loadings $\gamma_i$ that may change to $\gamma_i+\psi_i$, and $e_{i,t}$ denote the idiosyncratic errors. It is presumed that both the common factor and idiosyncratic errors may be serially correlated. As we develop asymptotics, we assume that the number of cross sections $N$ depends on the observation period $T$, and $N$ is allowed to tend to infinity with $T$. We make the assumption that $\eta_t \in {\mathbb R}$ for the sake of simplicity; these results could be extended to the more general case of a vector valued common factor and factor loading.

We are interested in testing the null hypothesis that the model parameters remain stable during the observation period $1\leq t \leq T$, i.e.
$$
H_0:\; t^*>T.
$$
When $H_0$ holds, the model of \eqref{model-full} reduces to
\beq\label{model-null}
X_{i,t}=\mu_i+\gamma_i\eta_t+e_{i,t},\;\;1\leq i \leq N, 1\leq t \leq T.
\eeq
Let $\cdot^\T$ denote the matrix transpose, and define the vectors $\X_t=(X_{1,t}, X_{2,t},\ldots ,X_{N,t})^\T\in {\mathbb R}^N$. We define

\beq\label{c-def}
\hC_{N,T}(u)=\frac{1}{\lf Tu \rf}\sum_{t=1}^{\lf Tu\rf}(\X_t-\baX_T)(\X_t-\baX_T)^\T,\;\;1/T\leq u\leq 1,
\eeq

to be the sample covariance matrix based on the proportion $u$ of the sample, where

$$
\baX_T=\frac{1}{T}\sum_{t=1}^T\X_t.
$$


In order to test $H_0$, we utilize the processes derived from the $K$ largest eigenvalues $\hat{\lambda}_{1}(u)\geq \hat{\lambda}_{2}(u)\geq\ldots \geq \hat{\lambda}_{K}(u)$ of $\hat{C}_{N,T}(u)$. We focus our attention at first on the process derived from the largest eigenvalue, and make the primary objective of this section is to establish the weak convergence of  $\hat{\lambda}_{1}(u)$ under $H_0$. Analogous results for processes derived from the smaller eigenvalues are provided in Section \ref{all-eig}. We note that an alternative to using $\hat{\lambda}_i(u)$ is to use $\tilde{\lambda}_i(u)=(\lf Tu\rf/T)\hat{\lambda}_i(u)$, which are equivalent with the largest eigenvalues of

\beq\label{c-deff}
\tC_{N,T}(u)=\frac{1}{ T}\sum_{t=1}^{\lf Tu\rf}(\X_t-\baX_T)(\X_t-\baX_T)^\T,\;\;0\leq u\leq 1.
\eeq

Assuming that $H_0$ holds, $\C=\mbox{cov}(\X_t)$ does not depend on $t$, and in this case we define the eigenvalues and eigenvectors of $\C$ by

\beq\label{ei-1}
\lambda_i\fe_i=\C\fe_i,\;\;1\leq i \leq N,
\eeq
where $\|\fe_i\|=1,\;\;1\le i \le N$, and $\| \cdot \|$ denotes the Euclidean norm in ${\mathbb R}^N$. Since $N$ is allowed to depend on $T$, both the eigenvalues $\lambda_i$ and eigenvectors $\fe_i$ may evolve as $T\to \infty$. Throughout this paper, we make use of the following assumptions:


\begin{assumption}\label{inc}
The eigenvalues $\lambda_1,\lambda_2,\ldots,\lambda_K$ satisfy that $\min_{1\leq i \leq K}(\lambda_{i}-\lambda_{i+1})\geq c_0$ for some constant $c_0>0$.
\end{assumption}

\begin{assumption}\label{b-g}
The common factor loadings satisfy that $|\gamma_i|\leq c_1\;\;\mbox{for all}\;\;1\leq i \leq N\;\;\mbox{with some}\;\; c_1>0.$
\end{assumption}

Assuming that the eigenvalues of $\C$ are distinct is necessary to derive a normal approximation for their estimates, and is a common assumption in the literature. We assume that the common factors and idiosyncratic errors satisfy a fairly general weak dependence condition.

\begin{definition}
{\rm We say that a stationary time series $\{\varepsilon_t,\; -\infty < t < \infty\}$ is an $L^p-m-$approximable Bernoulli shift with rate function $\chi$ if $E\varepsilon_t=0, \;\;E\varepsilon_t^{p}<\infty,$ and $\varepsilon_t=g(\vare_t, \vare_{t-1},\ldots)$ for some measurable function $g:{\mathbb R}^\infty \to {\mathbb R}$ where $\{\vare_s, -\infty<s<\infty\}$ are independent and identically distributed random variables, and $(E(\eta_t-\eta_t^{(m)})^{p})^{1/p}=\chi(m)$ with $\eta_t^{(m)}=g(\vare_t,\vare_{t-1}, \ldots, \vare_{t-m}, \vare^*_{t-m-1,t,m},\vare^*_{t-m-2,t,m},\ldots)$ and the $\vare^*_{i,j,\ell}$ are independent and identically distributed copies of $\vare_0$.}
\end{definition}

The space of stationary processes that may be represented as Bernoulli shifts is enormous; we refer to Wu (2005) for a discussion. Examples include stationary ARMA, ARCH, and GARCH processes. The rate function describes the rate at which such processes can be approximated with sequences exhibiting a finite range of dependence. In many examples of interest, the rate function may be taken to decay exponentially in the lag parameter.

\begin{assumption}\label{error-as}
{\rm
\begin{enumerate}
\item[(a)] $\{\eta_t,\;\; -\infty < t < \infty\}$ is $L^{12}-m-approximable$ with rate function $\chi_\eta(m) = c_2 m^{-\alpha_\eta}$ for constants $c_2>0$ and $\alpha_\eta>1$, and $E\eta_t^2=1$.
\item[(b)] The sequences $\{e_{i,t},\;\; -\infty < t < \infty\},\; 1 \le i \le N$, are each $L^{12}-m-approximable$ with rate functions $\chi_{e,i}(m) \le c_3 m^{-\alpha_e}$ for constants $c_3>0$ and $\alpha_e>1$. There exist constants $c_4$ and $c_5$ such that $0 < c_4 \le Ee^2_{i,_t}=\sigma_i^2\le c_5 <\infty$.
\item[(c)] The sequences $\{\eta_t,\;\; -\infty < t < \infty\},$ and $\{e_{i,t},\;\; -\infty < t < \infty\},\; 1 \le i \le N$ are independent.
\end{enumerate}
}

\end{assumption}

The least restrictive moment condition that could be assumed in order to obtain a normal approximation for the empirical eigenvalues is four moments. Our assumption of twelve moments comes from the fact that we apply a third order Taylor series expansion for the difference between the empirical eigenvalue process $\hat{\lambda}_i(u)$ and $\lambda_i$, (cf.\ Hall and Hosseini--Nasab (2009)) and twelve moments are needed to get an upper bound for the highest order term that is uniform with respect to $u$. The condition in Assumption \ref{error-as} that $E \eta_t^2 =1 $ is nonrestrictive; it makes the model \eqref{model-null} identifiable. In order to state the main result, we define


$$
\xi_{i,t} = \fe_i^\T (\bX_t - E \bX_0 ) (\bX_t - E \bX_0 )^\T \fe_i.
$$

\begin{theorem}\label{main} If $H_0$ and Assumptions \ref{inc}, \ref{b-g}, and \ref{error-as} hold, and

\beq\label{n-t-1}
\frac{N(\log T)^{1/3}}{T^{1/2}}\to 0, \;\;\mbox{as}\;\;T\to \infty,
\eeq
then
$$\frac{T^{1/2}}{\sigma_1}u(\hat{\lambda}_1(u)-\lambda_1)\stackrel{{\mathcal D[0,1]}}{\longrightarrow} W(u),$$
where $W(u)$ is a Wiener process, $\stackrel{{\mathcal D[0,1]}}{\longrightarrow}$ denotes weak convergence in the Skorokhod topology, and
$$
\sigma_1^2 =\sigma_1^2(T)= \sum_{t = -\infty}^\infty \mbox{{\rm cov}}(\xi_{1,0},\xi_{1,t}).
$$
\end{theorem}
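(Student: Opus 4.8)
The plan is to linearize the largest empirical eigenvalue around $\lambda_1$ by a perturbation expansion, to show that the leading term is a partial sum process of the stationary sequence $\xi_{1,t}$, and then to invoke a functional central limit theorem. Since $\tilde\lambda_1(u)=(\lf Tu\rf/T)\hat\lambda_1(u)$ differs from $u\hat\lambda_1(u)$ by a term of order $T^{-1}\hat\lambda_1(u)$ that is negligible after multiplication by $T^{1/2}$, it is convenient to work throughout with $\tilde\lambda_1(u)$, the largest eigenvalue of the matrix $\tC_{N,T}(u)$ of \eqref{c-deff}, whose mean is asymptotically $u\C$ with leading eigenpair $(u\lambda_1,\fe_1)$. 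Writing $\Delta(u)=\tC_{N,T}(u)-u\C$ for the perturbation, the third order Taylor expansion for symmetric perturbations of eigenvalues (cf.\ Hall and Hosseini--Nasab (2009)) gives
$$
\tilde\lambda_1(u)-u\lambda_1=\fe_1^\T\Delta(u)\fe_1+\sum_{j\geq 2}\frac{(\fe_1^\T\Delta(u)\fe_j)^2}{u(\lambda_1-\lambda_j)}+\rho(u),
$$
where $\rho(u)$ collects the third and higher order contributions.

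First I would treat the linear term. Setting $Y_t=\fe_1^\T(\bX_t-E\bX_0)$, so that $\xi_{1,t}=Y_t^2$ and $E\xi_{1,t}=\fe_1^\T\C\fe_1=\lambda_1$, the identity $\fe_1^\T\tC_{N,T}(u)\fe_1=T^{-1}\sum_{t=1}^{\lf Tu\rf}(Y_t-\bar Y_T)^2$ with $\bar Y_T=T^{-1}\sum_{s=1}^{T}Y_s$ yields
$$
\fe_1^\T\Delta(u)\fe_1=\frac1T\sum_{t=1}^{\lf Tu\rf}(\xi_{1,t}-\lambda_1)-\frac{2\bar Y_T}{T}\sum_{t=1}^{\lf Tu\rf}Y_t+\frac{\lf Tu\rf}{T}\bar Y_T^2+\Big(\frac{\lf Tu\rf}{T}-u\Big)\lambda_1.
$$
Here $Y_t$ is a scalar $L^{12}$--$m$--approximable Bernoulli shift, being a linear combination of $\eta_t$ and the $e_{k,t}$, which are independent by Assumption \ref{error-as}, with loadings bounded through Assumption \ref{b-g}. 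Consequently the two mean--correction terms are $O_P(T^{-1})$ and the deterministic discretization term is $O(\lambda_1/T)$, both uniformly in $u$ by a maximal inequality for the partial sums of $Y_t$, and hence negligible after scaling by $T^{1/2}/\sigma_1$. The sequence $\xi_{1,t}=Y_t^2$ is itself $m$--approximable with summable rate, so the invariance principle for weakly dependent sequences gives $\sigma_1^{-1}T^{-1/2}\sum_{t=1}^{\lf Tu\rf}(\xi_{1,t}-\lambda_1)\stackrel{{\mathcal D}[0,1]}{\longrightarrow}W(u)$, with long run variance $\sigma_1^2=\sum_{t}\cov(\xi_{1,0},\xi_{1,t})$ exactly as in the statement. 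Because $N$, $\fe_1$, $\lambda_1$ and $\sigma_1$ all depend on $T$, this step must be carried out as a triangular array result, the normalization by $\sigma_1(T)$ absorbing the $T$--dependence.

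The main obstacle is showing that the quadratic and higher order remainder, namely $T^{1/2}\sigma_1^{-1}$ times $\sum_{j\geq2}(\fe_1^\T\Delta(u)\fe_j)^2/(u(\lambda_1-\lambda_j))+\rho(u)$, is uniformly negligible. For the quadratic term, Assumption \ref{inc} gives $\lambda_1-\lambda_j\geq\lambda_1-\lambda_2\geq c_0$ for all $j\geq 2$, so by Parseval's identity it is at most $(uc_0)^{-1}\|\Delta(u)\fe_1\|^2$. The coordinate $(\Delta(u)\fe_1)_k$ equals, up to lower order, the centered partial sum $T^{-1}\sum_{t=1}^{\lf Tu\rf}(Z_{k,t}Y_t-\lambda_1\fe_1(k))$ with $Z_{k,t}=X_{k,t}-EX_{k,0}$; summing the coordinatewise bounds over the $N$ coordinates produces a factor $N$, giving $\sup_u u^{-1}\|\Delta(u)\fe_1\|^2=O_P(N/T)$, so that the quadratic term contributes $O_P(N/T^{1/2})\to0$ precisely under the rate condition \eqref{n-t-1}.

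Controlling $\rho(u)$ uniformly is the most delicate point and is the reason for assuming twelve moments in Assumption \ref{error-as}: the third order term involves three factors of $\Delta(u)$ together with the resolvent, and obtaining a bound valid uniformly over $u\in[0,1]$ --- in particular in the regime of small $u$, where $\tC_{N,T}(u)$ is built from few observations and the factor $u^{-1}$ in the resolvent must be offset --- requires the higher order moment and maximal inequalities of the type in Hall and Hosseini--Nasab (2009). Assembling the weak limit of the linear term with the uniform negligibility of all remainders, and finally passing back from $\tilde\lambda_1(u)$ to $u\hat\lambda_1(u)$, delivers the claimed convergence.
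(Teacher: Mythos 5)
Your overall architecture matches the paper's: center the data, pass to $\tC_{N,T}(u)$, apply the third order eigenvalue expansion of Hall and Hosseini--Nasab, show the quadratic and cubic remainders are uniformly negligible, and prove an FCLT for the linear term. However, there is a genuine gap at the step you identify as routine: the uniform bound on the quadratic remainder. You claim $\sup_{0\le u\le 1}u^{-1}\|\Delta(u)\fe_1\|^2=O_P(N/T)$ by ``summing the coordinatewise bounds over the $N$ coordinates.'' Coordinatewise second--moment bounds give this rate only \emph{pointwise} in $u$; they do not control the supremum over $[0,1]$, and the difficulty is concentrated exactly at small $u$, where $\tC_{N,T}(u)$ is built from $\lf Tu\rf$ observations and the factor $u^{-1}$ must be offset by the partial sum having only $\lf Tu\rf$ terms. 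The paper's Lemma \ref{hall-2} handles this with a geometric blocking argument: it bounds $\max_{1\le v\le T}v^{-1/2}|\sum_{s=1}^{v}\fe_1^\T\X_s\X_s^\T\fe_\ell|$ by a union bound over $O(\log T)$ dyadic blocks using sixth--moment maximal inequalities, which is what produces the $(\log T)^{1/3}$ in the final rate $O_P(N(\log T)^{1/3}/T)$. That logarithm is not decorative --- it is the reason hypothesis \eqref{n-t-1} carries the factor $(\log T)^{1/3}$ rather than the weaker condition \eqref{n-t-2}, which in the paper only buys convergence on $[c,1]$ (Theorem \ref{main-2}). Your claim that the quadratic term is $O_P(N/T^{1/2})$ ``precisely under \eqref{n-t-1}'' is therefore internally inconsistent: if that bound held, $N/T^{1/2}\to0$ would suffice and the theorem's hypothesis would be needlessly strong.

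A secondary gap concerns the linear term. You invoke a triangular--array invariance principle for $\xi_{1,t}/\sigma_1$ directly, but when $\|\bgamma\|\to\infty$ one has $\lambda_1\sim\|\bgamma\|^2$ and $\sigma_1^2\sim V_1\|\bgamma\|^4\to\infty$, so the summands $\xi_{1,t}$ do not have moments bounded uniformly in $T$ and a generic FCLT for $L^p$--$m$--approximable sequences does not apply off the shelf. The paper resolves this by decomposing $\fe_1^\T(\tC_{N,T}(u)-u\C)\fe_1$ into the components $D_{N,T}$, $F_{N,T;1}$, $G_{N,T;1}$, showing via Lemma \ref{lin-1} that $\fe_1\approx\bgamma/\|\bgamma\|$ so that the divergent part is carried entirely by $(\fe_1^\T\bgamma)^2D_{N,T}(u)$ (a partial sum of $\eta_t^2-1$, free of $N$), and proving the joint FCLT by $m$--dependent approximation and the Cram\'er--Wold device; Theorem \ref{main} is then assembled from the two regimes $\|\bgamma\|=O(1)$ and $\|\bgamma\|\to\infty$. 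To repair your argument you would need either this decomposition or an explicit triangular--array FCLT that tolerates the diverging variance, together with the blocking maximal inequality for the quadratic term.
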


Theorem \ref{main} shows that the distribution of the largest eigenvalue process may be approximated by a Brownian motion. We note that the norming sequence $\sigma_1^2$, which is essentially the long run variance of the quadratic forms $\xi_{1,t}$, may change with $N$. In fact, we show in Section \ref{proofs} that if $\bgamma=(\gamma_1,\gamma_2, \ldots, \gamma_N)^\T$, then under $H_0$ $\sigma_1^2 \to \infty$, as $T\to \infty$, if $\|\bgamma\| \to \infty$. The necessity of including the logarithm term in the rate condition \eqref{n-t-1} comes from the fact that we establish weak convergence on the entire unit interval. This condition can be improved by considering convergence on an interval that is bounded away from zero.

\begin{theorem}\label{main-2}
If the conditions of Theorem \ref{main} are satisfied and \eqref{n-t-1} is replaced with
\beq\label{n-t-2}
\frac{N}{T^{1/2}}\to 0,\;\;\;\mbox{as}\;\;T\to \infty,
\eeq
then for all $c \in (0,1]$,
$$\frac{T^{1/2}}{\sigma_1}u(\hat{\lambda}_1(u)-\lambda_1)\stackrel{{\mathcal D[c,1]}}{\longrightarrow} W(u).$$
where $\sigma_1^2$ is defined as in Theorem \ref{main}.
\end{theorem}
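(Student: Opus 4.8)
The plan is to reuse the machinery developed for Theorem \ref{main} and to pinpoint exactly where the logarithmic factor in \eqref{n-t-1} is spent, showing that it can be dispensed with once $u$ is bounded away from $0$. Recall that the proof of Theorem \ref{main} proceeds through a third order perturbation expansion (in the spirit of Hall and Hosseini--Nasab (2009)) of the empirical eigenvalue around its population counterpart, of the form
\[
\hat\lambda_1(u)-\lambda_1 = \fe_1^\T\left(\hC_{N,T}(u)-\C\right)\fe_1 + R_{N,T}(u),
\]
where $R_{N,T}(u)$ collects the quadratic and cubic terms, each controlled through the spectral gap guaranteed by Assumption \ref{inc}. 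The key structural observation is that multiplication by $u$ converts the ratio $\fe_1^\T\hC_{N,T}(u)\fe_1$ into the partial sum $\fe_1^\T\tC_{N,T}(u)\fe_1$ from \eqref{c-deff}: after the normalisation and after replacing $\baX_T$ by $E\bX_0$,
\[
\frac{T^{1/2}}{\sigma_1}\,u\,\fe_1^\T\left(\hC_{N,T}(u)-\C\right)\fe_1 = \frac{1}{\sigma_1 T^{1/2}}\sum_{t=1}^{\lf Tu\rf}\left(\xi_{1,t}-E\xi_{1,0}\right) + o_P(1),
\]
since $\lambda_1=\fe_1^\T\C\fe_1=E\xi_{1,0}$. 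As $\xi_{1,t}$ is a nonlinear functional of the underlying Bernoulli shifts and inherits moments from Assumption \ref{error-as}, an invariance principle for weakly dependent ($L^p$--$m$--approximable) sequences gives weak convergence of this leading term to $W(u)$. This step is identical to Theorem \ref{main} and, thanks to the taming effect of the $u$ multiplier, already holds on all of $[0,1]$; it is unaffected by the choice of rate condition.

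The substance of the theorem is therefore entirely contained in the remainder $\tfrac{T^{1/2}}{\sigma_1}u R_{N,T}(u)$. First I would record that the only place the factor $(\log T)^{1/3}$ enters the proof of Theorem \ref{main} is in the uniform control over $1/T\le u\le 1$ of the second order term, whose size is governed by quantities of the form
\[
\frac{1}{T}\,\frac{1}{\lf Tu\rf}\,\Bigl|\sum_{t=1}^{\lf Tu\rf}\fe_1^\T(\bX_t-E\bX_0)(\bX_t-E\bX_0)^\T\fe_j\Bigr|^2
\]
summed over $2\le j\le N$. Near $u=0$ the weight $1/\lf Tu\rf$ is large, and bounding the supremum over the full range $k\in\{1,\dots,T\}$ requires a maximal inequality over the small--sample block (a dyadic union bound of the Menshov--Rademacher type), which is precisely what generates the logarithm.

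Restricting attention to $u\in[c,1]$, we have $\lf Tu\rf\ge \lf Tc\rf\sim cT$, so the offending weight is uniformly $O(1/T)$ and no maximal inequality over small scales is needed: the supremum over $u\in[c,1]$ reduces to $\max_{cT\le k\le T}$ of a partial sum and is handled by an ordinary Doob maximal inequality, equivalently by the continuity and tightness of the limiting Gaussian process. Re-running the bound in this regime shows the second order term is $O_P(N/\lf Tu\rf)$ uniformly, so after the normalisation by $T^{1/2}$ it is $O_P\!\left(N/(c\,T^{1/2})\right)$, which tends to $0$ exactly under \eqref{n-t-2}; the cubic term and the sample-mean correction are of strictly smaller order by the same computation. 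Combining the convergence of the leading term with the negligibility of the remainder on $[c,1]$ yields the claim.

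The main obstacle, and essentially the only new bookkeeping relative to Theorem \ref{main}, is to verify that \emph{every} remainder estimate in the earlier proof that carried a $(\log T)^{1/3}$ penalty can be re-derived at the single scale $k\asymp T$ and tightened to depend only on $N/T^{1/2}$. One must also confirm that the contributions from replacing $\baX_T$ by $E\bX_0$ --- which involve $\|\baX_T-E\bX_0\|$ together with cross terms in $\bgamma$ --- remain negligible on $[c,1]$ under \eqref{n-t-2} alone. Since these are again averages over $\asymp T$ observations the outcome is expected, but this is the step most likely to conceal a residual logarithmic or dimension-dependent factor, and so deserves the most careful checking.
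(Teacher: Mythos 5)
Your proposal is correct and follows essentially the same route as the paper: the authors also reduce the theorem to showing that the quadratic remainder $Z_{N,T;i}(u)$ in the Hall--Hosseini-Nasab expansion, which costs $O_P(N(\log T)^{1/3}/T)$ uniformly on $[0,1]$ because of the dyadic maximal inequality near $u=0$, improves to $O_P(N/T)$ on $[c,1]$ via the single-scale moment bound \eqref{ineq**} and Markov's inequality, so that after the $T^{1/2}$ normalisation it vanishes under \eqref{n-t-2}. Your identification of where the logarithm is spent, and your treatment of the leading term and the lower-order corrections, match the paper's argument.
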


Conditions \eqref{n-t-1} and \eqref{n-t-2} require that the sample size $T$ is asymptotically larger than the squared dimension $N^2$. The case when $N$ is proportional to $T$ has received considerable attention in the probability and statistics literature. Assuming that $\hat{C}_{N,T}(1)$ is based on independent and identically distributed entries, the distribution of $\hat{\lambda}_1(1)$ converges to a Tracy-Widom distribution (cf. Johnstone (2008)). For a survey of the theory of eigenvalues of large random matrices, we refer to Aue and Paul (2014).

\section{Changepoint detection}\label{alt}

\subsection{Estimating the norming sequence}
Consistent estimation of $\sigma_1^2$ is required in order to apply Theorems \ref{main} and \ref{main-2} to test $H_0$. As $\sigma_1^2$ is defined as the long run covariance of the quadratic forms $\xi_{i,t}$, we propose a natural nonparametric estimator. We define $\hat{\fe}_i$ by
$$
\hat{\lambda}_i(1)\hat{\fe}_i=\hat{\C}_{N,T}(1)\hat{\fe}_i,\;\;1\leq i \leq N.
$$

Let $\hat{\xi}_{i,t}=(\hat{\fe}_i^\T(\X_t-\bar{\X}^*_{T,t}))^2,$ where

\begin{displaymath}
\bar{\X}^*_{T,t}=
\left\{
\begin{array}{ll}
\displaystyle \frac{1}{\hat{t}^*}\sum_{t=1}^{\hat{t}^*} \X_t ,\;\;\mbox{if}\;\;1 \le t \le \hat{t}^* \\
\displaystyle \frac{1}{T- \hat{t}^*}\sum_{t=\hat{t}^*+1}^{T} \X_t ,\;\;\mbox{if}\;\;\hat{t}^*+1 \le t \le T,
\end{array}
\right.
\end{displaymath}
and $\hat{t}^*$ is the least squares change point estimator for a change in the mean defined in Section 3 of Bai (2010). Estimating the mean under the alternative of a mean change is done to ensure monotonic power in that case. Let $J$ be a kernel/weight function that is continuous and symmetric about the origin in ${\mathbb R}$ with bounded support, and satisfying $J(0)=1$. Examples of such functions include the Bartlett and Parzen kernels; further examples and discussion may be found in Taniguchi and Kakizawa (2000). We define the estimator $\hat{v}^2_{1,T}$ for $\sigma_1^2$ by

\begin{align}\label{var-est-def}
\hat{v}^2_{1,T}=\sum_{s=-N+1}^{N-1}J\left(\frac{s}{h}\right)\hat{r}_{1,s},
\end{align}
where $h$ denotes a smoothing bandwidth parameter, and
\begin{displaymath}
\hat{r}_{1,s}=
\left\{
\begin{array}{ll}
\displaystyle \frac{1}{T-s}\sum_{t=1}^{T-s} (\hat{\xi}_{1,t}-\bar{\xi}_{1,T})(\hat{\xi}_{1,t+s}-\bar{\xi}_{1,T}),\;\;\mbox{if}\;\;s\geq 0
\vspace{.2cm}\\
\displaystyle\frac{1}{T-|s|}\sum_{t=-s}^{T} (\hat{\xi}_{1,t}-\bar{\xi}_{1,T})(\hat{\xi}_{1,t+s}-\bar{\xi}_{1,T}),\;\;\mbox{if}\;\;s< 0,
\end{array}
\right.
\end{displaymath}
with
$$
\bar{\xi}_{1,T}=\frac{1}{T}\sum_{t=1}^T\xi_{1,t}.
$$

\begin{theorem}\label{estivar} If $H_0$ and the conditions of Theorem \ref{main} are satisfied, and

\beq\label{h-1}
h=h(T)\to\infty \;\;\mbox{and}\;\;hN^3/T^{1/2}\to 0,\;\;\mbox{as}\;\;T\to \infty,
\eeq
then
\beq\label{cos-1-1}
\frac{\hat{v}^2_{1,T}}{\sigma_1^2}\;\;\stackrel{P}{\to}\;1, \;\;\;\mbox{as}\;\;T\to \infty.
\eeq
\end{theorem}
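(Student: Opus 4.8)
The plan is to benchmark $\hat v^2_{1,T}$ against an infeasible oracle estimator and control the plug-in error after normalising by $\sigma_1^2$. Under $H_0$ we have $\X_t - E\X_0 = \bgamma\eta_t + \be_t$ with $\be_t=(e_{1,t},\dots,e_{N,t})^\T$, so writing $a_1=\fe_1^\T\bgamma$ and $Z_t=\fe_1^\T\be_t$ the defining quadratic form collapses to the scalar
$$
\xi_{1,t}=\bigl(\fe_1^\T(\X_t-E\X_0)\bigr)^2=(a_1\eta_t+Z_t)^2 .
$$
For each $T$ this is a strictly stationary scalar $L^{6}$ sequence (the square of a combined $L^{12}$ Bernoulli shift), and its long-run variance is exactly $\sigma_1^2$. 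Note that $|a_1|\le\|\bgamma\|$ while $\mathrm{Var}(Z_t)\le c_5$, so the growth of $\sigma_1^2$ is driven entirely by $a_1$; this is why I keep $\sigma_1^2$ as a running normaliser rather than treating it as a fixed constant. Let $\tilde v^2_{1,T}$ denote the estimator obtained from \eqref{var-est-def} with $\hat\fe_1$ replaced by $\fe_1$ and $\bar\X^*_{T,t}$ replaced by $E\X_0$, and decompose
$$
\frac{\hat v^2_{1,T}}{\sigma_1^2}=\frac{\tilde v^2_{1,T}}{\sigma_1^2}+\frac{\hat v^2_{1,T}-\tilde v^2_{1,T}}{\sigma_1^2}.
$$

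The first step is to show $\tilde v^2_{1,T}/\sigma_1^2\to_P 1$, a triangular-array version of the standard consistency result for kernel long-run variance estimators. I would prove it by a mean-plus-variance argument for the centred array $\{\xi_{1,t}/\sigma_1\}$. Because $J$ is continuous with $J(0)=1$ and compact support, and the autocovariances $\gamma_\xi(s)=\cov(\xi_{1,0},\xi_{1,s})$ are absolutely summable and satisfy $\sum_{s}\gamma_\xi(s)=\sigma_1^2$ (summability following from $\alpha_\eta,\alpha_e>1$ with the $L^{12}$ bound, via the usual covariance estimate for products of Bernoulli shifts), dominated convergence gives $\sigma_1^{-2}E\tilde v^2_{1,T}=\sum_s J(s/h)\gamma_\xi(s)/\sigma_1^2\to 1$ as $h\to\infty$, with centring and edge effects of order $h/T=o(1)$. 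The variance term $\sigma_1^{-4}\mathrm{Var}(\tilde v^2_{1,T})$ is $O(h/T)$ by the fourth-order cumulant bound for sample autocovariances (the needed eighth moments of $a_1\eta_t+Z_t$, uniform in $T$ relative to $\sigma_1^4$, come from the twelve-moment assumption), which is $o(1)$ since \eqref{h-1} forces $h=o(T^{1/2})$. The only nonstandard feature is that bias and variance must be tracked relative to the possibly diverging $\sigma_1^2$, but since both scale correctly in $\sigma_1^2$ this is harmless.

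The second, and harder, step is $(\hat v^2_{1,T}-\tilde v^2_{1,T})/\sigma_1^2=o_P(1)$, which splits into the effect of replacing $\fe_1$ by $\hat\fe_1$ and of replacing $E\X_0$ by $\bar\X^*_{T,t}$. For the mean, under $H_0$ the split-sample least-squares estimator $\bar\X^*_{T,t}$ targets the common mean on each regime; since both partial averages are $\sqrt T$-consistent, projecting onto $\hat\fe_1$ shows the centring error contributes $O_P(\sigma_1/T)$ to each $\hat\xi_{1,t}$, negligible after summation and division by $\sigma_1^2$. For the eigenvector I would reuse the perturbation machinery from the proof of Theorem \ref{main}: Assumption \ref{inc} supplies the uniform spectral gap $c_0$, so Davis--Kahan together with the Hall--Hosseini-Nasab expansion controls $\hat\fe_1-\fe_1$ and lets me write $\hat\fe_1^\T(\X_t-E\X_0)=\fe_1^\T(\X_t-E\X_0)+(\hat\fe_1-\fe_1)^\T(\X_t-E\X_0)$, where the correction re-expands in the directions $\fe_j^\T(\X_t-E\X_0)$, each having bounded moments. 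Feeding this into $\hat\xi_{1,t}$ and then into each $\hat r_{1,s}$ produces cross terms that are summed over the $O(h)$ lags inside the support of $J$; the accumulation across these lags and across the $N$ spectral directions is precisely what the rate condition $hN^3/T^{1/2}\to 0$ is calibrated to extinguish.

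I expect this last accumulation to be the main obstacle. The difficulty is not any single term but the need for a bound on the eigenvector-replacement error in $\hat\xi_{1,t}$ that is simultaneously uniform in $t$ and summable over the growing band of lags, all measured against a normaliser $\sigma_1^2$ that may itself diverge. Getting the exponent of $N$ right — rather than a crude $\|\hat\fe_1-\fe_1\|\,\|\X_t\|$ estimate, which would cost an extra $\sqrt N$ — requires exploiting the near-orthogonality in the perturbation expansion so that the off-direction projections enter with bounded moments; this is what produces the cubic, rather than a higher, power of $N$ in \eqref{h-1}. Once the plug-in error is shown to be $o_P(\sigma_1^2)$, combining the two steps yields \eqref{cos-1-1}.
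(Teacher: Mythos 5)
Your proposal follows essentially the same route as the paper's proof: decompose $\hat v^2_{1,T}$ into an oracle kernel estimator built from $\xi_{1,t}$ (shown consistent for $\sigma_1^2$ by a mean-plus-variance argument with $\mathrm{var}=O(h/T)$) plus plug-in errors from the mean centering and from replacing $\fe_1$ by $\hat\fe_1$, the latter controlled by the spectral-gap perturbation bound $\|\hat\fe_1-\hat c_1\fe_1\|=O_P(NT^{-1/2})$ and absorbed, after accumulating over the $O(h)$ kernel lags, by the rate condition $hN^3/T^{1/2}\to 0$. The only remark worth making is that your worry about needing a refined near-orthogonality argument is unnecessary: the paper obtains exactly the $O_P(hN^3T^{-1/2})$ bound from the crude Cauchy--Schwarz estimate $\|\hat\fe_1-\fe_1\|\,\|U_t\|\,\|U_{t+j}\|$ with $E\|U_0\|^2=O(N^2)$.
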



The results in Theorems \ref{main}, \ref{main-2}, and \ref{estivar} can be used to test for the stability of the largest eigenvalue, which, as we show below, suggests stability of the model parameters.

\begin{cor}\label{b-conv} Let
$$
\hat{B}_{T,1}(u)=\frac{T^{1/2}}{\hat{v}_{1,T}}u(\hat{\lambda}_1(u)-\hat{\lambda}_1(1)),\;\;0\leq u \leq 1.
$$
Under the conditions of Theorems \ref{main} and \ref{estivar}, $\hat{B}_{T,1}(u) \stackrel{{\mathcal D[0,1]}}{\longrightarrow} W^0(u)$, where $W^0$ is a standard Brownian bridge.

\end{cor}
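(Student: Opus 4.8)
The plan is to deduce this directly from Theorems \ref{main} and \ref{estivar} via the continuous mapping theorem and a functional version of Slutsky's lemma, so that essentially no new probabilistic estimates are needed. First I would introduce the centered and normalized process
$$
A_{T}(u)=\frac{T^{1/2}}{\sigma_1}u(\hat{\lambda}_1(u)-\lambda_1),\qquad 0\le u\le 1,
$$
which by Theorem \ref{main} satisfies $A_T\stackrel{{\mathcal D[0,1]}}{\longrightarrow}W$ for a standard Wiener process $W$ (the norming by $\sigma_1$ makes the limiting covariance $\min(u,u')$). The key algebraic observation is that
$$
\frac{T^{1/2}}{\sigma_1}u(\hat{\lambda}_1(u)-\hat{\lambda}_1(1))=\frac{T^{1/2}}{\sigma_1}u(\hat{\lambda}_1(u)-\lambda_1)-u\,\frac{T^{1/2}}{\sigma_1}(\hat{\lambda}_1(1)-\lambda_1)=A_T(u)-u\,A_T(1),
$$
since evaluating $A_T$ at $u=1$ returns exactly $\frac{T^{1/2}}{\sigma_1}(\hat{\lambda}_1(1)-\lambda_1)$. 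Thus the Brownian-bridge structure arises purely from the bridge-type functional applied to the eigenvalue process whose limit we already know.

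Next I would apply the continuous mapping theorem to the functional $\phi:{\mathcal D}[0,1]\to{\mathcal D}[0,1]$ given by $\phi(x)(u)=x(u)-u\,x(1)$. The map $\phi$ is continuous at every $x$ that is itself continuous, and since $W$ has continuous sample paths almost surely, $\phi$ is continuous on a set of full measure under the law of $W$; on continuous functions the Skorokhod and uniform topologies coincide, so this continuity is immediate. Consequently
$$
A_T(u)-u\,A_T(1)\stackrel{{\mathcal D[0,1]}}{\longrightarrow}W(u)-u\,W(1),
$$
and $W(u)-u\,W(1)$ is by definition a standard Brownian bridge $W^0(u)$.

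Finally I would replace the unknown norming constant $\sigma_1$ by its estimator. Theorem \ref{estivar} gives $\hat{v}^2_{1,T}/\sigma_1^2\stackrel{P}{\to}1$, and since both quantities are positive, taking square roots (again by continuous mapping) yields $\sigma_1/\hat{v}_{1,T}\stackrel{P}{\to}1$. Writing
$$
\hat{B}_{T,1}(u)=\frac{\sigma_1}{\hat{v}_{1,T}}\bigl(A_T(u)-u\,A_T(1)\bigr),
$$
the scalar factor converges in probability to $1$ while the bracketed process converges weakly to $W^0$, so a functional Slutsky argument (multiplying a weakly convergent process in ${\mathcal D}[0,1]$ by a scalar sequence tending to one in probability) delivers $\hat{B}_{T,1}\stackrel{{\mathcal D[0,1]}}{\longrightarrow}W^0$. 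I do not expect a genuine obstacle here, as the corollary is a soft consequence of the two theorems; the only points requiring explicit justification are the continuity of $\phi$ on the subspace of continuous paths and the applicability of Slutsky's lemma in ${\mathcal D}[0,1]$, both of which are standard but worth stating.
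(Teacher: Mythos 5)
Your argument is correct and is precisely the (implicit) route the paper takes: the corollary is treated there as an immediate consequence of Theorems \ref{main} and \ref{estivar}, obtained by the decomposition $\hat{B}_{T,1}(u)=\frac{\sigma_1}{\hat{v}_{1,T}}\bigl(A_T(u)-uA_T(1)\bigr)$, the continuous mapping theorem applied to $x\mapsto x(\cdot)-(\cdot)x(1)$, and Slutsky's lemma. Your write-up simply makes explicit what the paper leaves unstated, and all the steps you flag as needing justification (continuity of the bridge functional on continuous paths, functional Slutsky) are handled correctly.
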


The continuous mapping theorem and Corollary \ref{b-conv} imply that
\begin{align}\label{imp-form}
\sup_{0\le t \le 1} |\hat{B}_{T,1}(u)|  \stackrel{{\mathcal D}}{\to} \sup_{0 \le t \le 1} |W^0(t)|.
\end{align}

The limiting distribution on the right hand side of \eqref{imp-form} is commonly referred to as the Kolmogorov distribution. An approximate test of size $\alpha$ of $H_0$ is to reject if $\sup_{0\le t \le 1} |\hat{B}_{T,1}(u)| $ is larger than the $\alpha$ critical value of the Kolmogorov distribution. One could also consider alternate functionals of $\hat{B}_{T,1}$ to test $H_0$. The distributions of many functionals of $W^0$ are well--known (cf.\ Shorack and Wellner (1986), pp.\ 142--149).

\subsection{Consistency under alternatives}
We now turn our attention to studying the consistency of tests for $H_0$ based on $\sup_{0 \le t \le 1} |\hat{B}_{T,1}(u)|$ under the mean break and factor loading break alternatives. Following the literature, we assume that the change does not occur too close to the end points of the sample:

\beq\label{asth-1}
t^*=\lf T\theta\rf\;\;\mbox{with some }\;\;0<\theta<1.
\eeq

First we consider the case of a break in the mean, i.e. the model

\beq \label{mb-mod}
X_{i,t}=(\mu_i+\delta_iI\{t\geq t^*\})+\gamma_i\eta_t+e_{i,t},\;\;1\leq i \leq N, 1\leq t \leq T,
\eeq

holds. Let $\balpha=\balpha_T=(\delta_1, \delta_2, \ldots, \delta_N)^\T$ and assume
\beq\label{asth-2}
\lim_{T\to \infty}\frac{T^{1/2}\|\balpha\|}{\|\bgamma\|^2}=\infty.
\eeq


\begin{theorem}\label{th-cons} Under \eqref{mb-mod}, Assumptions \ref{inc}, \ref{b-g}, and \ref{error-as}, and assuming that \eqref{n-t-2}, \eqref{asth-1}, and \eqref{asth-2} are satisfied, then we have that
\beq\label{th-c-1}
\sup_{0\leq u\leq 1}|\hat{B}_{T,1}(u)|\;\;\stackrel{P}{\to}\;\;\infty
\eeq
\end{theorem}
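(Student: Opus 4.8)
The plan is to show that under the mean--break model \eqref{mb-mod} the normalized eigenvalue path $\hat B_{T,1}$ contains a non--vanishing deterministic component that, after the $T^{1/2}$ scaling, overwhelms both its stochastic fluctuations and the norming sequence $\hat v_{1,T}$. The mechanism I would exploit is that the centering in \eqref{c-def} uses the global average $\baX_T$, which under a break is contaminated by the shift. Writing $\baX_T$ as the stable mean plus $\tfrac{T-t^*}{T}\balpha$ plus a term of order $o_P(1)$, a pre--break vector ($t<t^*$) has centered value $\X_t-\baX_T=\bgamma\eta_t+\be_t-(1-\theta)\balpha+o_P(1)$, while a post--break vector has centered value $\bgamma\eta_t+\be_t+\theta\balpha+o_P(1)$. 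Substituting into $\tC_{N,T}(u)$ of \eqref{c-deff} and using Assumption \ref{error-as}(c), so that the mean--factor cross terms average to $o_P(1)$, I would establish the expansion $\tC_{N,T}(u)=u\,\C+d_T(u)\,\balpha\balpha^\T+R_{N,T}(u)$, where $R_{N,T}(u)$ is a stationary remainder governed by Theorem \ref{main}, and $d_T(u)$ is the continuous, piecewise--linear coefficient equal to $u(1-\theta)^2$ for $u\le\theta$ and $\theta(1-\theta)^2+(u-\theta)\theta^2$ for $u\ge\theta$, which has a kink at $u=\theta$ (note $d_T(1)=\theta(1-\theta)$).

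The next step is to translate this rank--one covariance bias into a bias in the leading eigenvalue. Via the perturbation expansion of Hall and Hosseini--Nasab already invoked for Theorem \ref{main}, together with Weyl's inequality and the Rayleigh bound $\hat\lambda_1(u)\ge\balpha^\T\hC_{N,T}(u)\balpha/\|\balpha\|^2$, I would show that $u\,\hat\lambda_1(u)$ inherits the deterministic term carried by $d_T(u)$ (weighted by the alignment of $\balpha$ with $\fe_1$), so that the chord--centered quantity $u(\hat\lambda_1(u)-\hat\lambda_1(1))$ equals a deterministic tent--shaped signal $s_T(u)$, peaking near $u=\theta$ and of the order dictated by \eqref{asth-2}, plus a stochastic part that by Theorem \ref{main} and Corollary \ref{b-conv} is of Brownian--bridge order $O_P(\sigma_1 T^{-1/2})$ uniformly in $u$. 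Because the mean is re--estimated segment--wise through $\bar{\X}^*_{T,t}$ using Bai's consistent estimator $\hat t^*$, which is the reason \eqref{asth-1} is assumed, the break is purged from the residuals $\hat\xi_{1,t}$, and I would argue that $\hat v_{1,T}$ stays of the same order $\sigma_1$ that it has under $H_0$ rather than inflating to the scale of the break; this is precisely the device that guarantees monotonic power.

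Combining these, for a fixed $u_0$ bounded away from $0$ and $1$ (for instance $u_0=\theta$, where \eqref{n-t-2} licenses the convergence on $[c,1]$) I would obtain $\hat B_{T,1}(u_0)=\frac{T^{1/2}}{\hat v_{1,T}}\bigl(s_T(u_0)+O_P(\sigma_1 T^{-1/2})\bigr)$, and \eqref{asth-2} forces $T^{1/2}s_T(u_0)/\hat v_{1,T}\to\infty$, whence $\sup_{0\le u\le1}|\hat B_{T,1}(u)|\stackrel{P}{\to}\infty$. The main obstacle I anticipate is the second step: controlling the eigenvalue perturbation \emph{uniformly} in $u$ while simultaneously tracking the tilt of the empirical eigenvector $\hat{\fe}_1$ toward $\balpha$ and pinning down the exact order of $\hat v_{1,T}$, so as to certify that the deterministic signal genuinely dominates under the sharp calibration in \eqref{asth-2}. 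The delicate cases that the uniform bound must absorb are the degenerate geometry in which $\balpha$ is nearly orthogonal to $\fe_1$, and the balanced break location $\theta=1/2$, at which the leading tent $s_T$ degenerates and one must retain the next--order cross terms between the break and the stationary fluctuations to recover divergence.
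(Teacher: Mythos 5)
Your plan follows essentially the same route as the paper's proof: the paper likewise decomposes $\sum_{t\le s}(\X_t-\baX_T)(\X_t-\baX_T)^\T$ into a deterministic rank--one term $d_T(\cdot)\,\balpha\balpha^\T$ (with exactly your piecewise--linear coefficients $u(1-\theta)^2$ and $\theta(1-\theta)^2+(u-\theta)\theta^2$) plus a remainder $\sum_{u\le s}\U_{u,T}$ whose quadratic forms are shown to be $O_P(T^{1/2})$ uniformly in $s$, and then compares $\hat{\lambda}_1(t^*/T)$ with $\hat{\lambda}_1(1)$ and invokes \eqref{asth-2}. The delicate points you flag --- the tilt of $\hat{\fe}_1$ toward $\balpha$, the exact order of $\hat{v}_{1,T}$ under the break, and the degeneration of the tent signal at $\theta=1/2$ --- are precisely the ones the published argument treats only implicitly, so your sketch is, if anything, more candid about where the remaining work lies.
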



We note that assumptions  \eqref{asth-1} and \eqref{asth-2} also appeared in Horv\'ath and Hu\v{s}kov\'a (2012) where  the optimality of these
conditions are discussed. It is clear if $N$ is large, relatively small changes can be detected by $\hat{\lambda}_1(u)$. As a consequence of the proof of Theorem \ref{th-cons}, it follows that

$$
\max_{2\leq i \leq K}\sup_{0\leq u\leq 1}T^{1/2}|\hat{\lambda}_i(u)-\hat{\lambda}_i(1)|=O_P(1),
$$
i.e.\ a change in the mean is asymptotically entirely captured by the largest eigenvalue of the partial covariance matrices.



The condition \eqref{asth-2} suggests how a local change in the mean alternative may be considered. For example, if $\delta_1=\delta_2=\ldots =\delta_N=\delta(N,T)$ and $\gamma_1=\gamma_2=\ldots =\gamma_N=\gamma$, $\gamma$ is fixed , we need that $(T/N)^{1/2}|\delta(N,T)|\to \infty$ for \eqref{asth-2} to hold, which describes at what rate $\delta(N,T)$ may tend to zero while maintaining consistency.\\

Next we consider the model
\beq\label{load-2}
X_{i,t}=\mu_i+(\gamma_i+\psi_iI\{t\geq t^*\})\eta_t+e_{i,t},\;\;1\leq i \leq N, 1\leq t \leq T,
\eeq
i.e.\ the means of the panels remain the same but the loadings change at time $t^*$. Let $\bdelta=(\psi_1,\psi_2, \ldots,\psi_N)^\T$.

\begin{theorem}\label{load-cons} Under \eqref{load-2},  Assumptions \ref{inc}, \ref{b-g}, and \ref{error-as}, and assuming that \eqref{n-t-2}, \eqref{asth-1} and
\beq\label{load-3}
\lim_{T\to \infty}\frac{(1-\theta)[\|\bdelta\|^2+2|\bdelta^\T\bgamma|] +(\bdelta^\T\bgamma)^2/\|\bdelta\|^2}{\|\bgamma\|^2+\max_{1\leq i \leq N}\sigma_i^2}>1
\eeq

hold, then $\sup_{0\leq u\leq 1}|\hat{B}_{T,1}(u)|\;\;\stackrel{P}{\to}\;\;\infty$, as $T \to \infty$.
\end{theorem}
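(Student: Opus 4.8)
The plan is to bound $\sup_{0\le u\le 1}|\hat B_{T,1}(u)|$ from below by its value at the single point $u=\theta$ and to show $|\hat B_{T,1}(\theta)|\stackrel{P}{\to}\infty$. Because we localize at one fixed point, no uniformity in $u$ is needed, and the stochastic analysis reduces to a pointwise statement. I would write $\hat B_{T,1}(\theta)=\frac{T^{1/2}}{\hat v_{1,T}}\,\theta(\hat\lambda_1(\theta)-\hat\lambda_1(1))$ and split $\theta(\hat\lambda_1(\theta)-\hat\lambda_1(1))$ into a deterministic \emph{signal}, coming from the population partial covariances, and a stochastic fluctuation that becomes negligible after normalization by $T^{1/2}/\hat v_{1,T}$.

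Writing $\mathbf{D}=\mbox{diag}(\sigma_1^2,\ldots,\sigma_N^2)$ and $\bg_t=\bgamma+\bdelta I\{t\ge t^*\}$, and using $E\eta_t^2=1$ together with the laws of large numbers for $\eta_t^2$ and $\be_t\be_t^\T$ (here, unlike the mean--break case of Theorem \ref{th-cons}, centering by $\baX_T$ enters only at order $T^{-1/2}$ and is harmless), I would show that $\hat\lambda_1(\theta)-\lambda_1^{(0)}$ and $\hat\lambda_1(1)-\lambda_1^{(1)}$ are of smaller order than the signal, where $\lambda_1^{(0)},\lambda_1^{(1)}$ are the largest eigenvalues of $\C_0=\bgamma\bgamma^\T+\mathbf{D}$ and $\C_1=\theta\,\bgamma\bgamma^\T+(1-\theta)(\bgamma+\bdelta)(\bgamma+\bdelta)^\T+\mathbf{D}$ — the former because the first $\lf T\theta\rf=t^*$ observations are all pre--break, the latter because the full sample mixes the two regimes in proportions $\theta$ and $1-\theta$. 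The signal is then $\theta|\lambda_1^{(0)}-\lambda_1^{(1)}|$. The core step is to bound this gap below via \eqref{load-3}. Weyl's inequality gives $\lambda_1^{(0)}\le\|\bgamma\|^2+\max_{1\le i\le N}\sigma_i^2$, the denominator of \eqref{load-3}, while evaluating the Rayleigh quotient of $\C_1$ at $\bdelta/\|\bdelta\|$ and expanding $(\bgamma+\bdelta)^\T\bdelta=\bgamma^\T\bdelta+\|\bdelta\|^2$ yields
$$
\lambda_1^{(1)}\ \ge\ (1-\theta)\big[\|\bdelta\|^2+2\,\bdelta^\T\bgamma\big]+\frac{(\bdelta^\T\bgamma)^2}{\|\bdelta\|^2}+\frac{\bdelta^\T\mathbf{D}\bdelta}{\|\bdelta\|^2},
$$
which is the numerator of \eqref{load-3} plus a nonnegative term. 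Combined with \eqref{load-3} and the crude upper bound $\lambda_1^{(1)}\le\theta\|\bgamma\|^2+(1-\theta)\|\bgamma+\bdelta\|^2+\max_i\sigma_i^2$ from Weyl, this shows $\lambda_1^{(1)}>\lambda_1^{(0)}$ for large $T$ and, more importantly, that $(\lambda_1^{(1)}-\lambda_1^{(0)})/\lambda_1^{(1)}$ is bounded below by a fixed positive constant (when $\|\bdelta\|$ is large, $\lambda_1^{(1)}$ dwarfs $\lambda_1^{(0)}$; when $\|\bdelta\|$ is comparable to $\|\bgamma\|$, \eqref{load-3} forces the numerator to exceed the denominator by a fixed factor).

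It remains to control the normalizer and the fluctuation. The quadratic forms $\hat\xi_{1,t}=(\hat\fe_1^\T(\X_t-\baX^*_{T,t}))^2$ carry a level shift at $t^*$ because $\hat\fe_1^\T(\bgamma+\bdelta)\ne\hat\fe_1^\T\bgamma$; with the global centering $\bar\xi_{1,T}$ this inflates $\hat v^2_{1,T}$ by a term of order $h\,\theta(1-\theta)(\text{jump})^2$, and since $|\hat\fe_1^\T\bg_t|^2\lesssim\lambda_1^{(1)}$ one obtains the crude but sufficient bound $\hat v_{1,T}\le C\,h^{1/2}\lambda_1^{(1)}(1+o_P(1))$. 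A third--order perturbation expansion as in Theorem \ref{main} bounds the stochastic fluctuation of $\theta(\hat\lambda_1(\theta)-\hat\lambda_1(1))$ about its deterministic value by $O_P(\lambda_1^{(1)}T^{-1/2})$, so after the factor $T^{1/2}/\hat v_{1,T}$ it is $O_P(h^{-1/2})=o_P(1)$, whereas the signal contributes
$$
\frac{T^{1/2}}{\hat v_{1,T}}\,\theta(\lambda_1^{(1)}-\lambda_1^{(0)})\ \gtrsim\ \Big(\frac{T}{h}\Big)^{1/2}\ \to\ \infty
$$
by the bandwidth condition \eqref{h-1} (which yields $h=o(T^{1/2})$) and \eqref{n-t-2}; hence $\sup_{0\le u\le 1}|\hat B_{T,1}(u)|\ge|\hat B_{T,1}(\theta)|\stackrel{P}{\to}\infty$.

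The hard part is the deterministic gap analysis, and in particular the sign of $\bdelta^\T\bgamma$: the test vector $\bdelta/\|\bdelta\|$ naturally produces $2\,\bdelta^\T\bgamma$ rather than $2|\bdelta^\T\bgamma|$, so the case $\bdelta^\T\bgamma<0$ needs separate handling (note that a pure sign flip of the loadings leaves $\C$, and thus $\lambda_1$, unchanged, so the variational bound genuinely must be argued with care in this regime). Compounding the difficulty, the normalizer $\hat v_{1,T}$ is itself contaminated by the very change being detected, so the argument rests on showing that its inflation is only of order $h^{1/2}$ and is therefore dominated by the $T^{1/2}$ growth supplied by the eigenvalue gap.
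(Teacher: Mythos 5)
Your proposal is correct and follows essentially the same route as the paper: the paper's proof (via Lemma \ref{cons-load-lem}) likewise evaluates $\hat{B}_{T,1}$ at a pre-break point $u^*\in(0,\theta]$, approximates $\hat{\lambda}_1(u^*)$ and $\hat{\lambda}_1(1)$ by the largest eigenvalues of $\bgamma\bgamma^\T+\bLambda$ and of $\theta\bgamma\bgamma^\T+(1-\theta)(\bgamma+\bdelta)(\bgamma+\bdelta)^\T+\bLambda$ up to $O_P(NT^{-1/2})$ errors, extracts a fixed eigenvalue gap from \eqref{load-3} (stated abstractly as condition \eqref{load-assa}), and concludes via $\hat{v}_{1,T}=O_P(h^{1/2})$ and $h=o(T^{1/2})$. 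Your explicit Rayleigh-quotient computation at $\bdelta/\|\bdelta\|$ is precisely the content of the paper's unproved assertion that \eqref{load-3} implies \eqref{load-assa}, and the $\bdelta^\T\bgamma<0$ sign issue you flag is equally unaddressed in the paper's own argument, so it is not a gap relative to the paper's proof.
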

 Roughly speaking, it is possible that the covariance might change on a subspace that is orthogonal to the first eigenvector (or more generally the first $K$ eigenvectors), and then if this change is not sufficiently large, the first eigenvalue cannot have power to detect it. Condition \eqref{load-3} is sufficient to imply that this does not occur.

\section{Finite Sample Performance}\label{monte}


In order to demonstrate how the result in \eqref{imp-form} is manifested in finite samples, we present here the results of a Monte Carlo simulation study involving several different data generating processes (DGP's) that follow \eqref{model-full}. All simulations were carried out in the R programming language (cf.\ R Development Core Team (2010)). In order to compute the long run variance estimate $\hat{{\it v}}^2_{1,T}$ defined in \eqref{var-est-def}, we used the ``sandwich" package (cf.\ Zeileis (2006)), in particular the ``kernHAC" function. The Parzen kernel with corresponding bandwidth defined in Andrews (1991) were employed.

\subsection{Empirical Size }


We begin by presenting the results on the empirical size of the test for stability based on the largest eigenvalue by considering two examples of synthetic data generated according to model \eqref{model-null}. We use the notation $Y_{i} \sim Y$ to denote that the sequence of random variables $Y_i$ are independent and identically distributed with distribution $Y$. Let $N_{i,t}(0,1)$ $i\ge0$ and $t \in \mathbb{Z}$ denote iid standard normal random variables, and let $AR_i(1,p)$ $i \ge 0$ denote independent autoregressive one processes with parameter $p$ based on standard normal errors. We generated observations $X_{i,t}$ according to \eqref{model-null} and the DGP's \\

(IID): $\eta_t= N_{0,t}(0,1),$ $e_{i,t} = s_i N_{i,t}(0,1)$, $s_i \sim Unif(.8,1.2)$, $\gamma_i \sim N(0,1)$,\\
and \\
(AR-1): $\eta_t=AR_0(1,.5),$ $e_{i,t} = s_i AR_i(1,.5)$,  $s_i \sim Unif(.8,1.2)$, $\gamma_i \sim N(0,1)$.\\

The purpose of choosing random parameters $s_i$, which define the standard deviations of the idiosyncratic errors, and $\gamma_i$ is two fold. Firstly, this forces Assumption \ref{inc} to hold. Secondly, this choice highlights that the methodology is relatively robust to variations in the parameter values.

Five simulated paths of the process $\hat{B}_{T,1}(u)$ are shown in the left hand panel of \ref{fig-1} when  $T=100$ and $N=20$, under IID. The most notable feature is that each process always starts with a spike near the origin, i.e. $\hat{\lambda}_i(u)$ is much larger than $\hat{\lambda}_i(1)$ when $u$ is small. The reason for this is that, when $u$ is small, $\hat{\lambda}_i(u)$ is computed from a matrix that is low rank, and hence will tend to be closer to the norm of the observation vectors, which is on the order of $N$, than the eigenvalue that it being estimated. This problem is ameliorated when $N$ decreases or $T$ increases, but significantly affects the results for many practical values of $N$ and $T$.

\begin{figure}
\centering
\caption{The left panel illustrates five simulated paths of $\hat{B}_{T,1}(u)$ when $N=20$ and $T=100$ under (IID), and the right panel illustrates five simulated paths of $ \tilde{B}_{T,1}(u)$ under the same conditions with $\epsilon=.05$.}\label{fig-1}
\mbox{\subfigure{\includegraphics[width=2.2in]{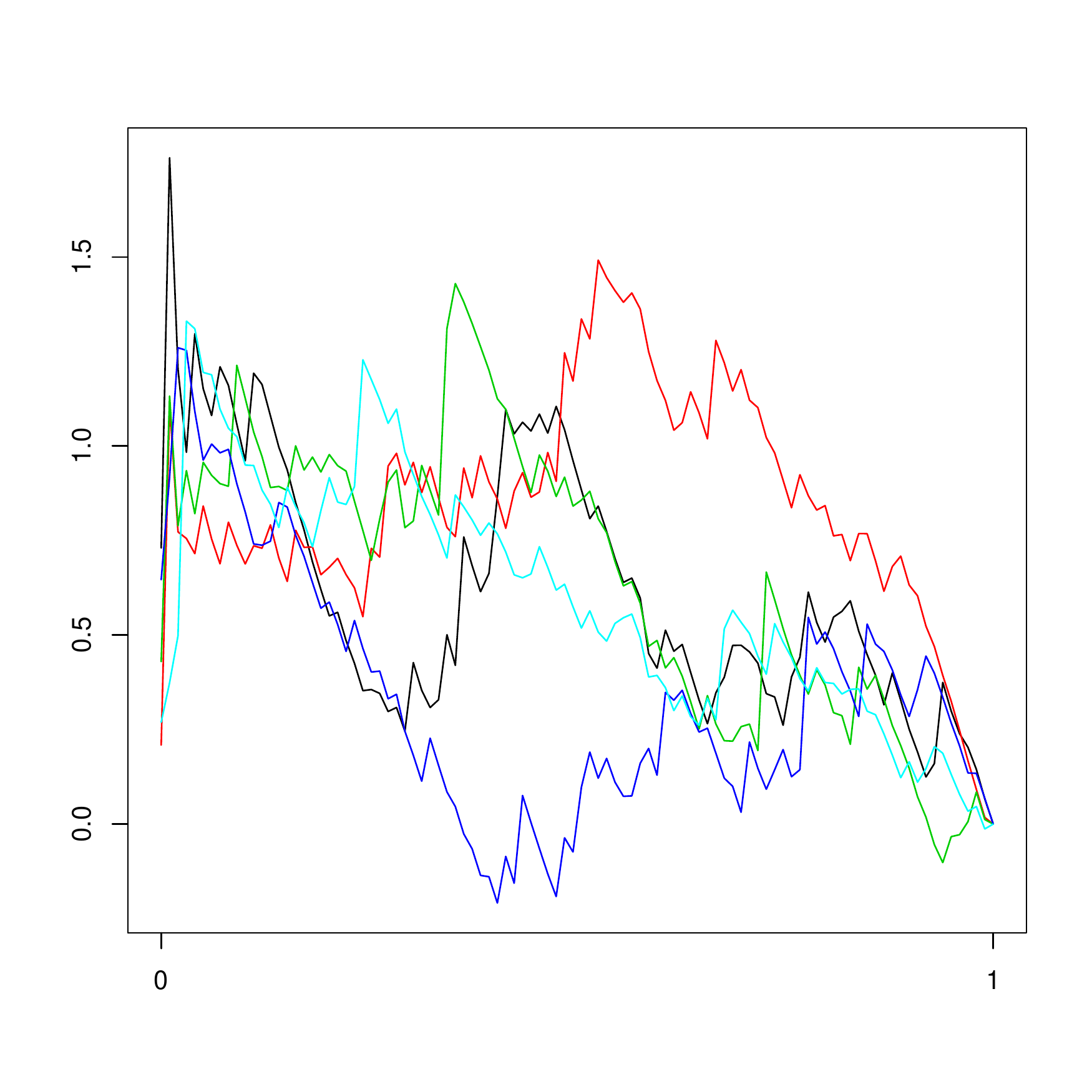}}\quad
\subfigure{\includegraphics[width=2.2in]{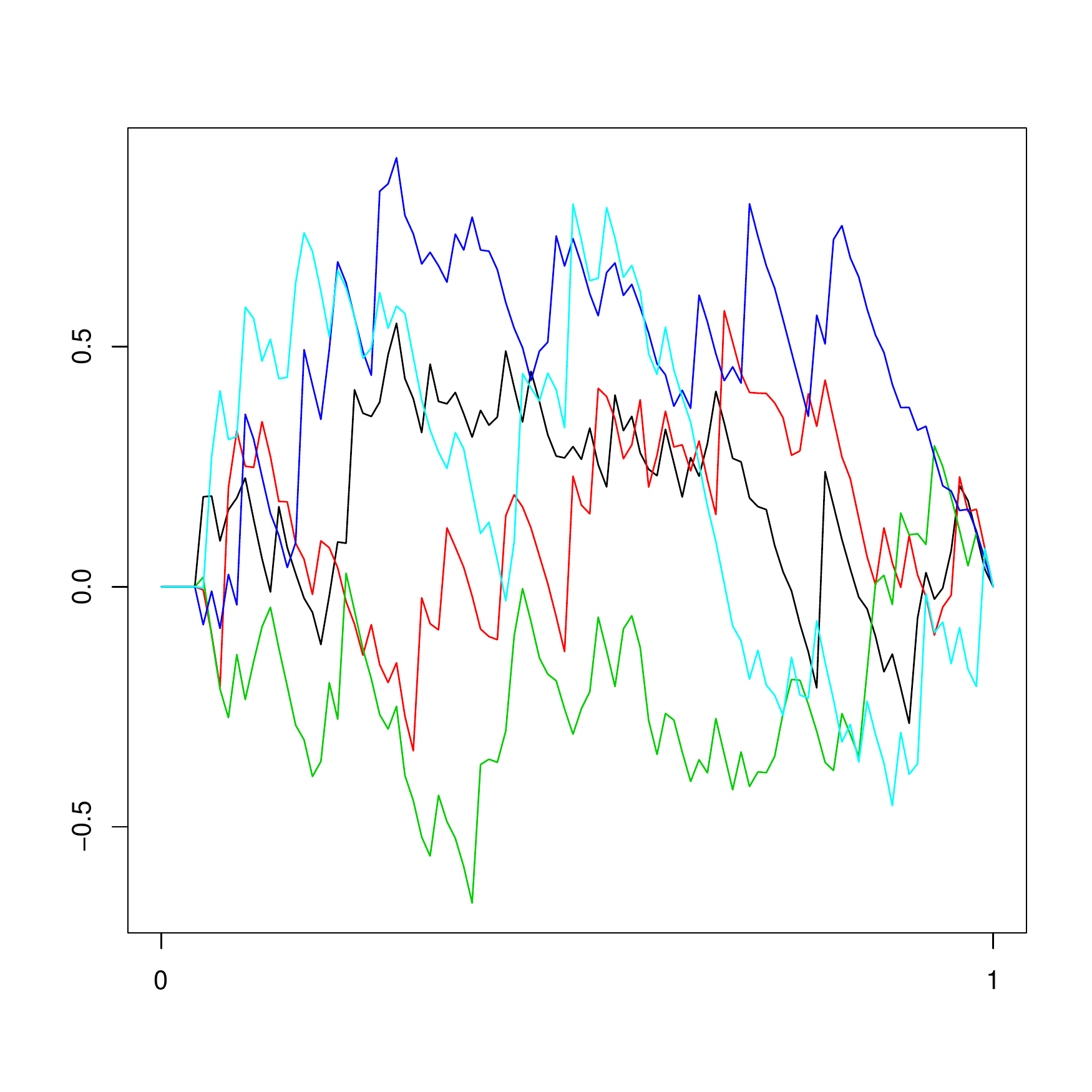} }}
\end{figure}

{\small
\begin{table}
\begin{tabular}{c c  ccc c  ccc c ccc c ccc c  }
\toprule
\multicolumn{2}{c}{DGP}&\multicolumn{7}{c}{IID} & &\multicolumn{7}{c}{AR-1}\\ \cmidrule{3-9} \cmidrule{11-17}
\multicolumn{2}{r}{} & \multicolumn{3}{c}{$\epsilon=.05$} & & \multicolumn{3}{c}{$\epsilon=.1$}& & \multicolumn{3}{c}{$\epsilon=.05$}& &  \multicolumn{3}{c}{$\epsilon=.1$}\\ \cmidrule{3-5}\cmidrule{7-9} \cmidrule{11-13} \cmidrule{15-17}
N&T & 10\% &5\%  &1\% & & 10\% &5\% & 1\% &  & 10\% &5\% &1\% & & 10\% &5\% & 1\% \\
\hline
10 & 50 &18.1&11.2&3.8 & &8.8&4.9&1.8 & &26.7&18.4&10.0 & &24.7&17.9&8.4 \\
 & 100 &8.3&3.5&.7 & &9.2&3.6&.7 & &17.1&10.3&3.4 & &9.2&3.6&.7 \\
 & 200 &8.7&4.1&.7 & &8.6&4.3&1.0 & &11.7&5.7&2.0 & &10.4&5.1&1.6 \\
\hline
20 & 50 &18.6&12.3&5.5 & &9.5&4.8&.7 & &23.7&16.5&8.0 & &25.8&17.8&8.7 \\
 & 100 &8.5&3.6&.6 & &9.1&4.5&.3 & &14.9&9.4&3.4 & &14.9&9.0&3.7 \\
 & 200 &8.4&4.2&.6 & &8.8&3.3&.5 & &11.8&6.5&2.0 & &12.4&6.8&1.5 \\
\hline
50 & 50 &23.3&13.7&5.3 & &10.2&3.9&.7 & &24.8&17.3&8.8 & &24.4&18.6&.9 \\
 & 100 &8.8&3.5&.6 & &9.0&4.2&1.0 & &17.8&11.6&4.0 & &15.3&8.5&3.5 \\
 & 200 &10.0&5.0&1.3 & &8.9&3.8&.5 & &13.0&7.1&2.1 & &12.2&6.4&1.7 \\
\bottomrule
\end{tabular}
\caption{Empirical sizes with nominal levels of 10\%, 5\%, and 1\% in both the independent (IID) and
dependent (AR-1) cases based on the process $\tilde{B}_{T,1}$.} \label{null-sim}
\end{table}
}

 In order to correct for this, we define

\begin{displaymath}
   \tilde{B}_{T,1}(u) = \left\{
     \begin{array}{lr}
       0 &  u \in [0,\epsilon]\\
       \vspace{.2cm} \\
       \hat{B}_{T,1}(u)-\frac{1-u}{1-\epsilon}\hat{B}_{T,1}(\epsilon) &  u \in (\epsilon,1]
     \end{array}
   \right.
\end{displaymath}

for a trimming parameter $\epsilon>0$. Five corresponding paths of $\tilde{B}_{T,1}(u)$ are illustrated in the right panel of Figure \ref{fig-1}, with $\epsilon=.05$.



Table \ref{null-sim} contains the percentages of the test statistic $\sup_{0\le u \le 1} |\tilde{B}_{T,1}(u)|$ that are larger than the 10\%, 5\%, and 1\% critical values of the Kolmogorov distribution. The results can be summarized as follows:

\begin{enumerate}
\item When $T$ is small ($T=50$), then the size of the test may be inflated by two sources. One of them is the spiked effect, and this is particularly pronounced when $\epsilon$ is small and $N$ is large. If the temporal dependence in the data is low, then increasing $\epsilon$ can allow the test to achieve good size even for small $T$ and relatively large $N$. However, strong temporal dependence can cause size inflation for small $T$ that cannot be accounted for by increasing $\epsilon$.

\item Another source of size inflation that is present for larger values of $T$ may be attributed to estimating the variance under the alternative of a break in the mean. This may be improved by considering alternative variance estimation approaches, such as those developed in Kejriwal (2009).
\item The difference in the results between the IID and AR-1 DGP's were small for larger values of $T$ $(T=100,200)$, indicating the variance estimation is performing well.
\item For $T=200$, the empirical sizes are close to nominal in all cases.
\end{enumerate}

\subsection{Empirical Power}

In order to study the power of our test under both the mean break and loading break alternatives, we considered two processes that satisfy \eqref{model-full} with $t^*=T\theta$ with $\theta \in (0,1)$. Throughout the simulations below, we set $t^*=T/2$, i.e. the break was in the middle of the sample. We also studied the situation in which breaks occured towards the endpoints of the sample. The results in those cases tended to be worse, but not more so than expected. We define the DGP's


MB($\delta$): $X_{i,t}=\delta_iI\{t\geq T/2\})+\gamma_i\eta_t+e_{i,t},\;\;1\leq i \leq N, 1\leq t \leq T$, where $\delta_i \sim \mbox{Unif}(-\delta,\delta)$\\
and \\
LB($\Delta$): $X_{i,t}=(\gamma_i+\psi_iI\{t\geq T/2\})\eta_t+e_{i,t},\;\;1\leq i \leq N, 1\leq t \leq T$, where $\psi_i \sim N(0,\Delta^2)$\\

In each case we take the other terms in \eqref{model-full}, i.e. the idiosyncratic errors, common factor, and factor loadings, to satisfy AR-1. We let the parameters $\delta$ and $\Delta$ vary between $0$ and $4$ at increments of $.5$, and let $N=10,20,50$, and $T=50,100,200$. The results are displayed in terms of power curves in Figures \ref{Mpower-Nf} and \ref{Mpower-Tf} in case of a mean break alternative (MB($\delta$)) and in Figures \ref{Lpower-Nf} and \ref{Lpower-Tf} in case of breaks in the factor loadings (LB($\Delta$)) when the size of the significance level of the test was fixed at 5\%. We summarize the results as follows:

\begin{figure}[h!]
\centering
      \includegraphics[width=0.6\textwidth,height=0.6\textwidth]{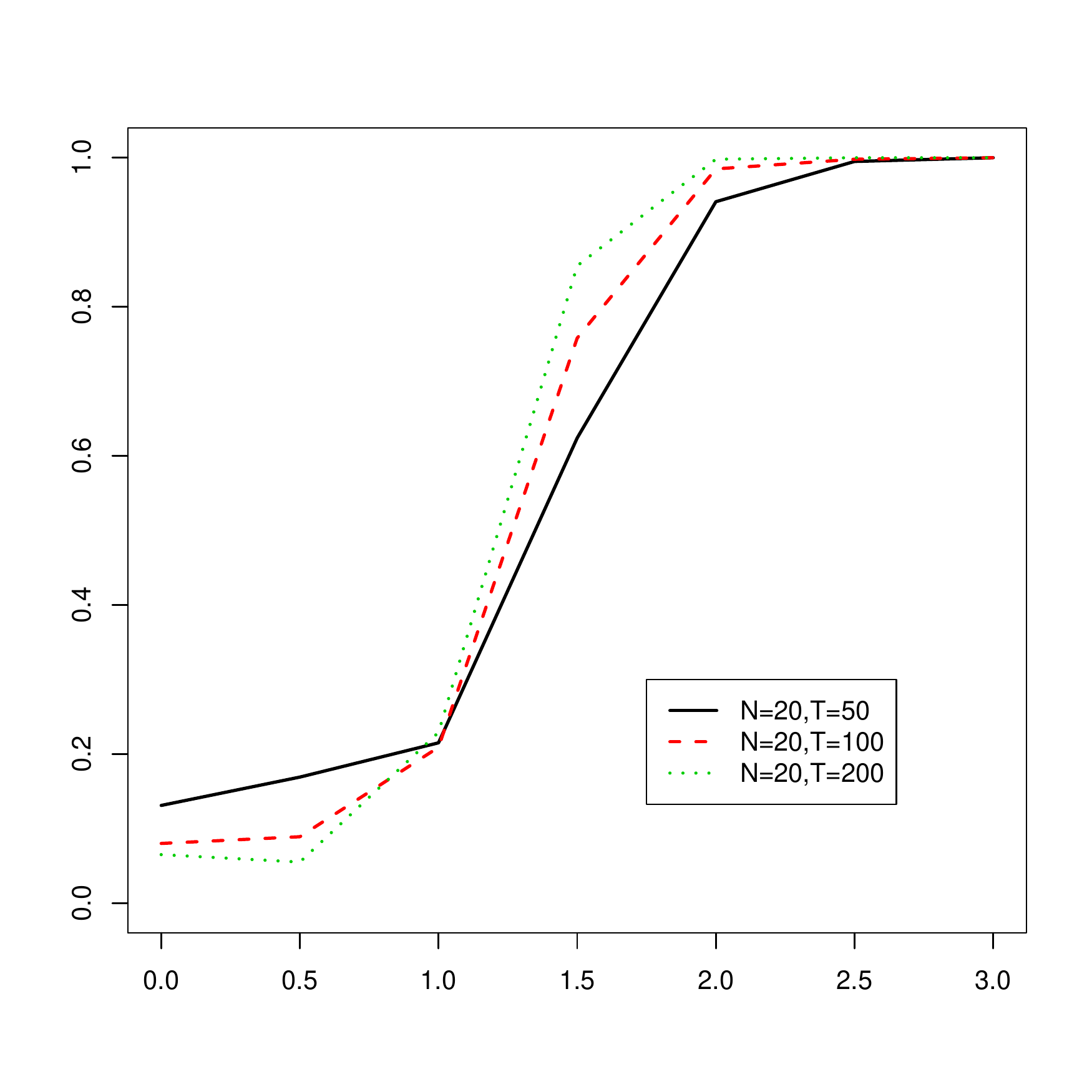}\\
      \caption{ Power curves generated from data following MB$(\delta)$ for fixed $N$ and varying $T$. The horizontal axis measures $\delta$, and the vertical axis measures the empirical power when the significance level is fixed at 5\%.}\label{Mpower-Nf}
\end{figure}

\begin{figure}[h!]
\centering
      \includegraphics[width=0.6\textwidth,height=0.6\textwidth]{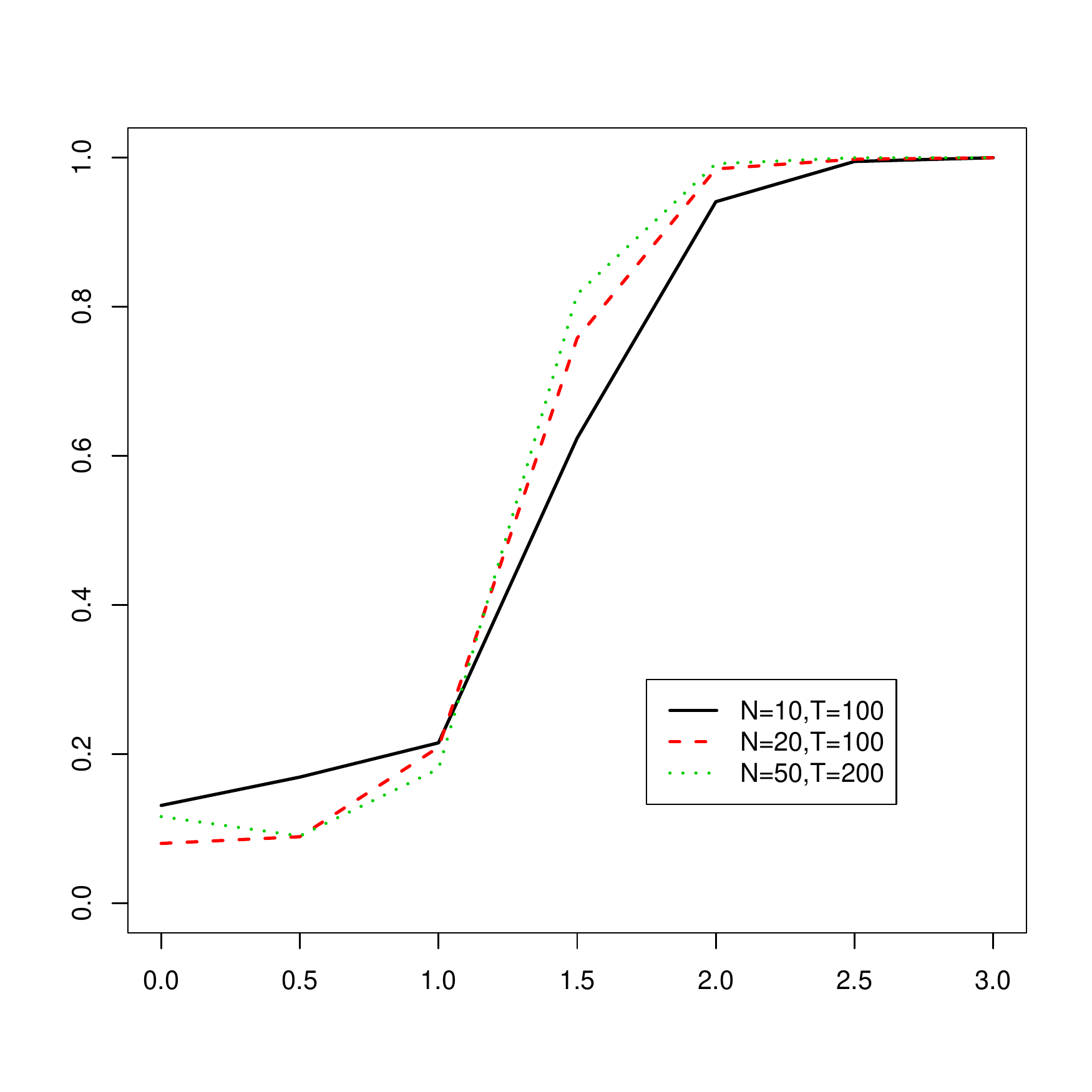}\\
      \caption{Power curves generated from data following MB$(\delta)$ for fixed $T$ and varying $N$. The horizontal axis measures $\delta$, and the vertical axis measures the empirical power when the significance level is fixed at 5\% .}\label{Mpower-Tf}
\end{figure}

\begin{figure}[h!]
\centering
      \includegraphics[width=0.6\textwidth,height=0.6\textwidth]{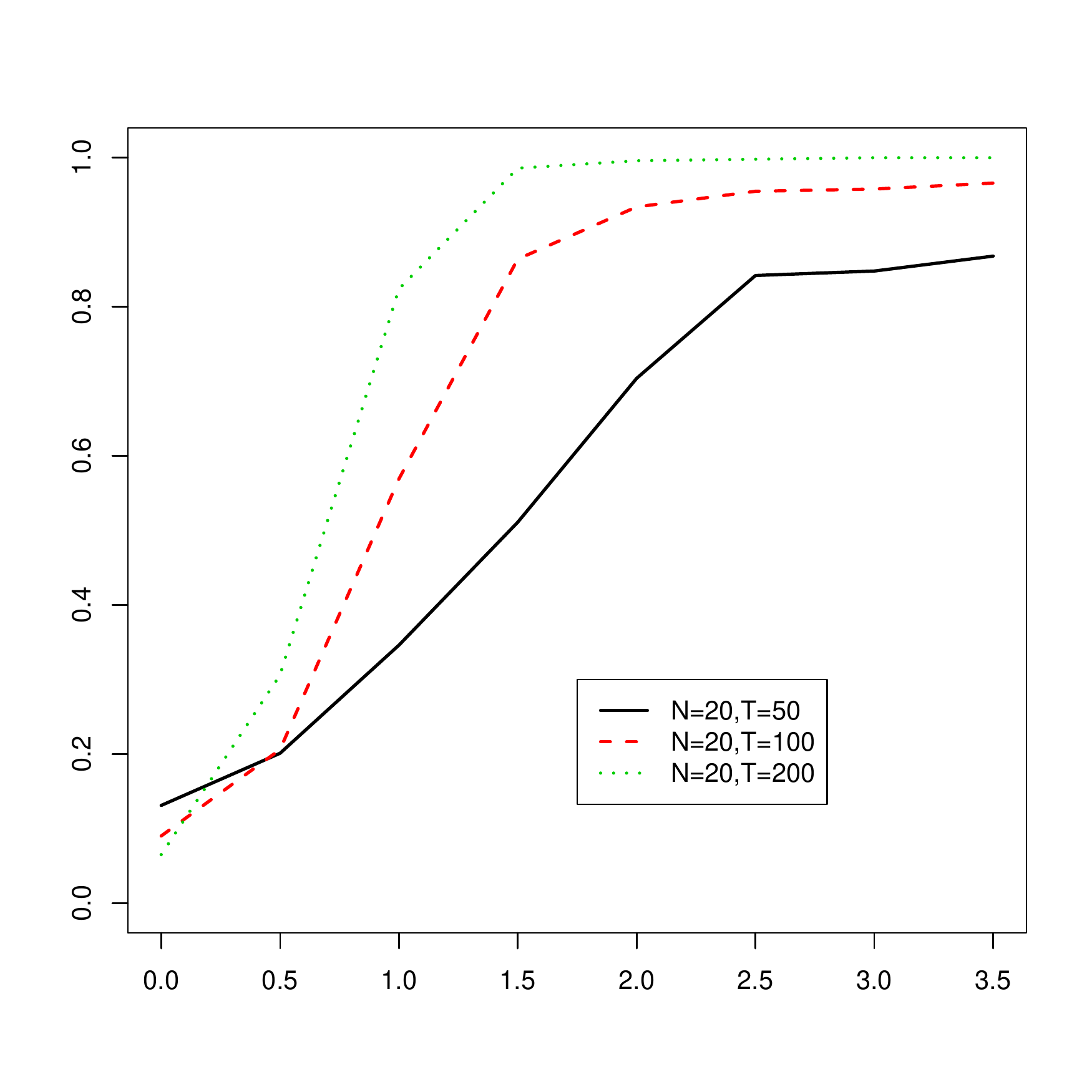}\\
      \caption{ Power curves generated from data following LB$(\Delta)$ for fixed $N$ and varying $T$. The horizontal axis measures $\Delta$, and the vertical axis measures the empirical power when the significance level is fixed at 5\%.}\label{Lpower-Nf}
\end{figure}

\begin{figure}[h!]
\centering
      \includegraphics[width=0.6\textwidth,height=0.6\textwidth]{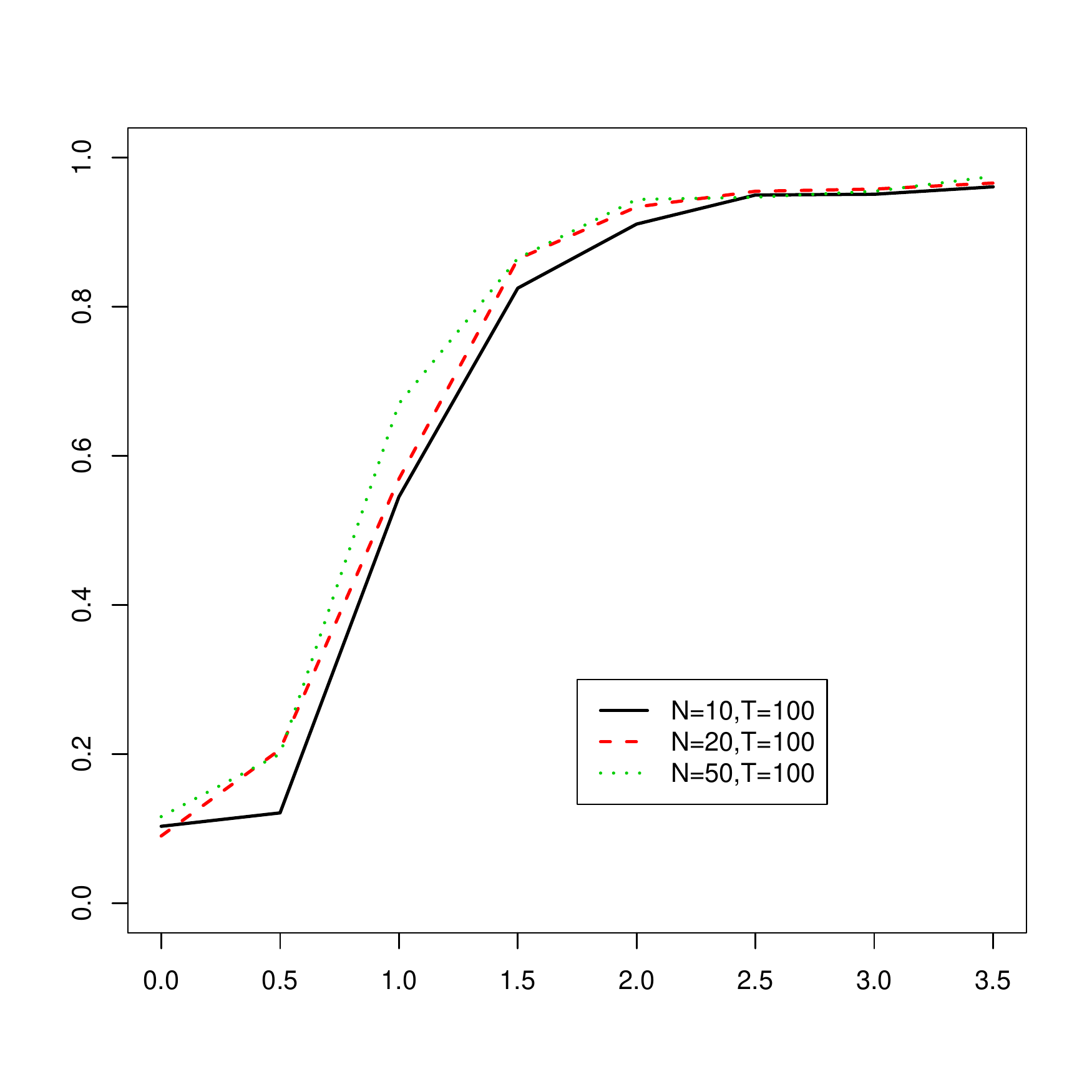}\\
      \caption{Power curves generated from data following LB$(\Delta)$ for fixed $T$ and varying $N$. The horizontal axis measures $\Delta$, and the vertical axis measures the empirical power when the significance level is fixed at 5\%.}\label{Lpower-Tf}
\end{figure}

{\it Mean Break:}
\begin{enumerate}

\item In the case of a mean break, for each value of $T$ and $N$ that we considered there is a substantial gain in power for $\delta$ exceeding 1.5. We note that data generated according to AR-1 have cross-sectional standard deviations of on average 1.6, and, when $\delta=2$, the average squared size of the change in the mean of each cross section is 1.33. Thus testing based on the largest eigenvalue seemed very sensitive to detect changes in the mean.

\item Due to the estimation of the variance under a mean break, the test exhibited monotonic power.

\item Increasing $T$ with fixed $N$ improved the empirical power, as expected, and the same was observed when $T$ was fixed and $N$ increased. The latter occurrence is likely attributable to the fact that as $N$ increases, changes in the mean occur in more cross sections, and the size is inflated in these cases due to the spiked effect.

\end{enumerate}
{\it Loading Break}
\begin{enumerate}
\item In the case of a break in the factor loadings, even smaller changes relative to the size of the standard deviation $(\Delta=1)$ of the idiosyncratic errors resulted in dramatic increases in power.

\item We noticed that for smaller values of $T$ $(T=50)$ the power seemed to level off for larger breaks in the common factors, and never reached more than $90\%$.

\item For larger $T$ $(T=100,200)$, the power approached 1 at a much faster rate for breaks in the factor loadings, and this occurrence seemed to be independent of the value of $N$.

\item Increasing $N$ resulted in reduced power in this case, although the effects of changing $N$ were not particularly pronounced.

\end{enumerate}

\section{Application to U.S. Yield Curve Data} \label{app}
Following Yamamoto and Tanaka (2015), we consider an application of our methodology to test for structural breaks in U.S. Treasury yield curve data considered in G\"urkaynak et al. (2007), which is available at {\tt http://www.federalreserve.gov/ \\ econresdata}, and which the authors graciously maintain. The data consists of yields for fixed interest securities with maturities between one and thirty years with one year increments ($N=30$). We studied a portion of this data set spanning from January 1st, 1990 to August 28th, 2015, that we further reduced from daily to monthly observations by considering only the data from the last day of each month. Figure \ref{ycs} illustrates the yield curves corresponding to 1, 5, 10, and 30 year maturities.

\begin{figure}[h!]
\centering
      \includegraphics[width=0.6\textwidth,height=0.6\textwidth]{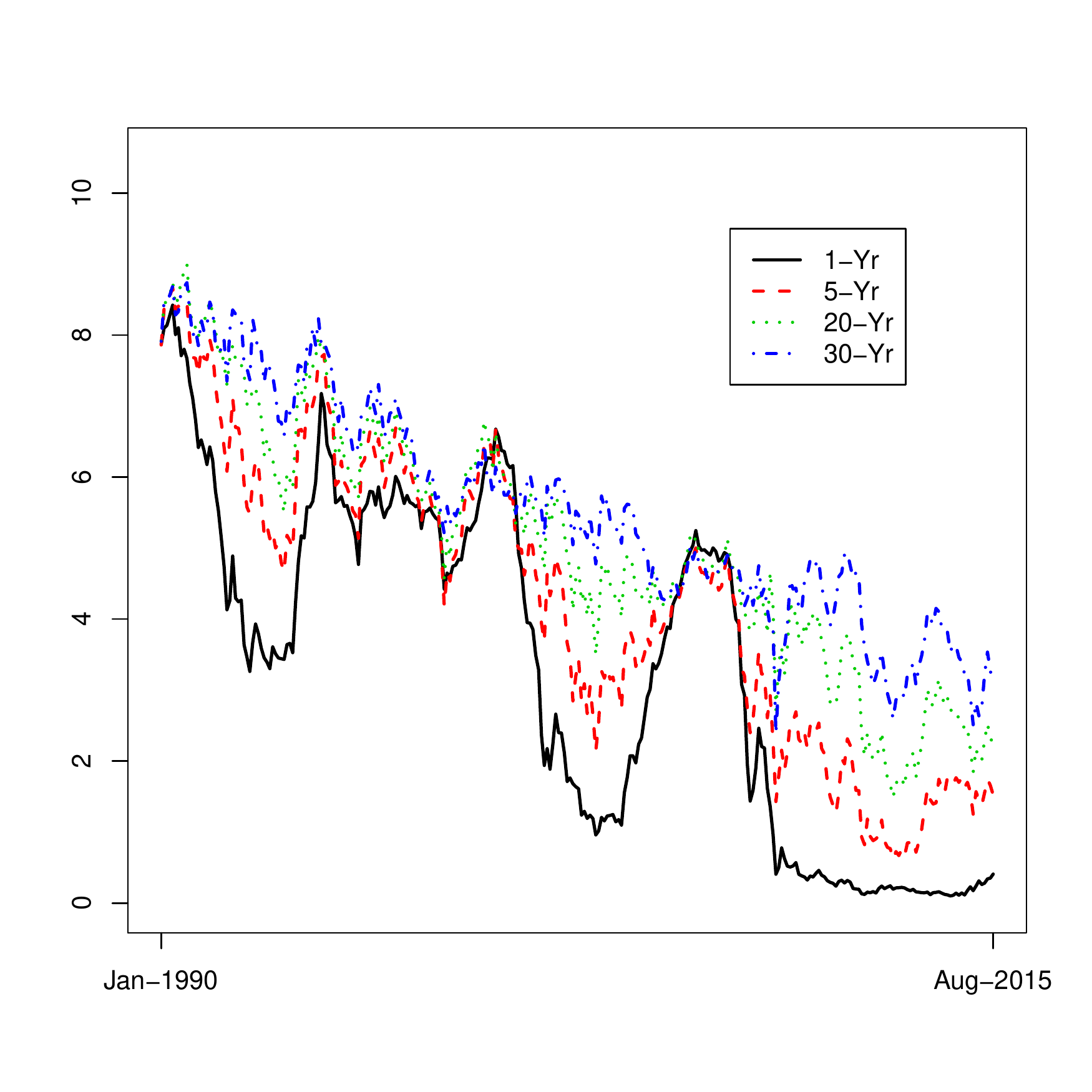}\\
      \caption{ Yield curves at a 1-month resolution between January, 1990 and August, 2015 correpsonding to 1 year, 5 year, 10 year, and 30 year maturities.}\label{ycs}
\end{figure}

In order to remove the effects of stochastic trends, and to allow for a comparison of our results to Yamamoto and Tanaka (2015), we first differenced each series. We applied the hypothesis test for stability of the largest eigenvalue based on $\sup_{0 \le t \le 1} |\tilde{B}_{T,1}(t)|$ with trimming parameter $\epsilon=.05$ to sequential blocks of the first differenced data of length 10 years, corresponding to 120 monthly observations in each sample ($T=120$). The first block contained data spanning from January, 1990 to December, 1999, and the last block contained data spanning from September, 2005 to August, 2015, which constituted a total of 172 tests. The P-value from each test is plotted against the end date of the corresponding 10 year block in Figure \ref{p-vals}.

\begin{figure}[h!]
\centering
      \includegraphics[width=0.6\textwidth,height=0.6\textwidth]{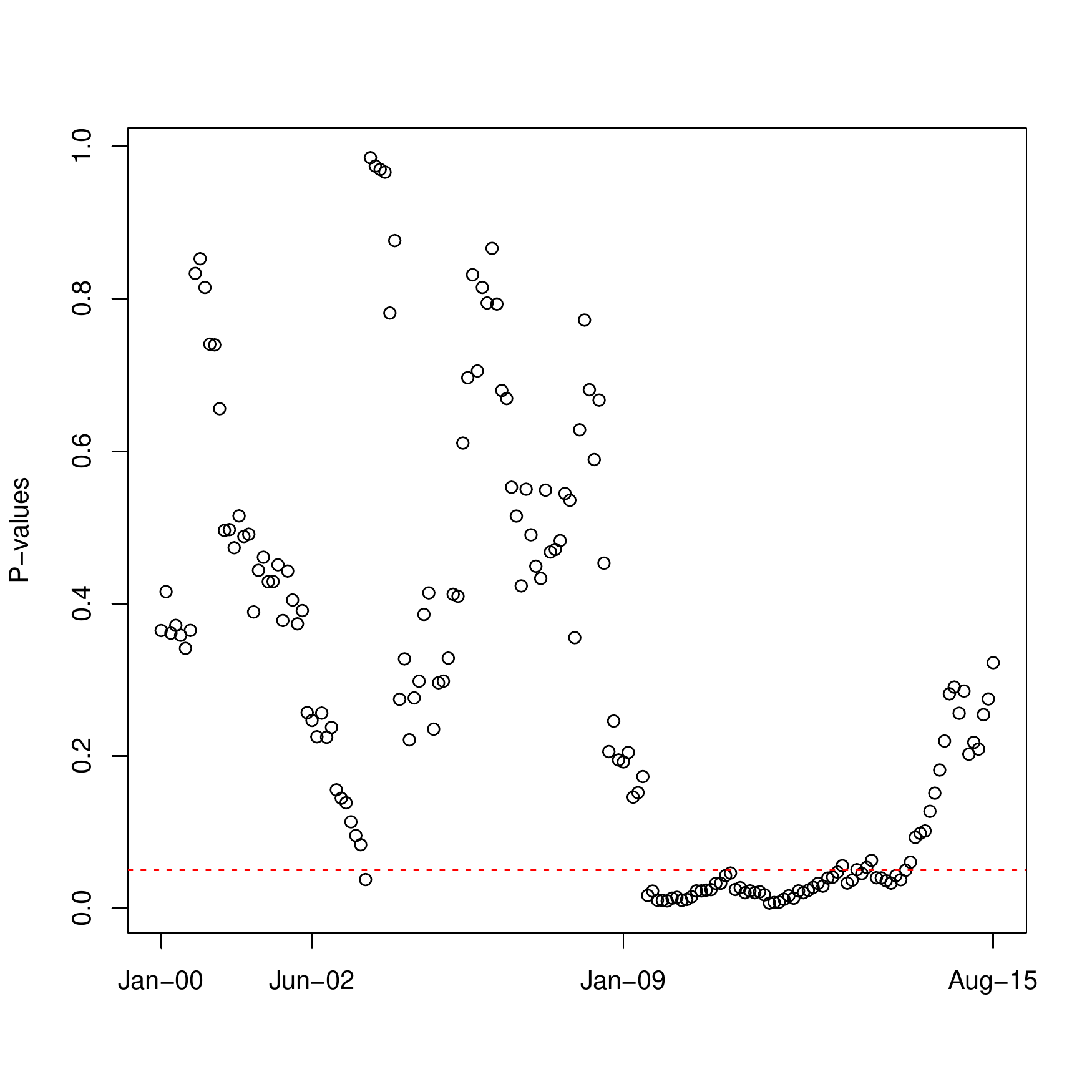}\\
      \caption{ P-values corresponding to 10 year blocks of the first differenced yield curve data. The vertical axis measures the magnitude of the P-value, and the horizontal axis indicates the concluding month of the 10 year block. P-values below the horizontal line are below .05.  }\label{p-vals}
\end{figure}

The most notable result of this analysis is the persistent instability of the largest eigenvalue evident in the samples that end in late 2008 to early 2009. This seems to correspond with the subprime crisis, which sparked what has been termed the ``Great Recession". The stability of the largest eigenvalue seems to return near the end of 2013. This may be indicative of the economic recovery, and provides a way of dating the end of the recession. The findings of structural breaks in the correlation structure of the yield curves during the 2007-2009 recession are consistent with those of Yamamoto and Tanaka (2015).

Also notable is the lack of persistent instability in relation to the 2001 economic recession. This illuminates a difference between the two recessions: The 2001 recession may be better modeled as a first order structural break, which is not as evident in the first differenced yield curve series, whilst the 2009 recession, which generated numerous policy changes and endured for a longer period, is manifested as a change in the largest eigenvalue.

\section{Results for smaller eigenvalues}\label{all-eig}

In this section, we provide analogous results to Theorems \ref{main} and \ref{main-2} for the smaller eigenvalues. Namely, we aim to establish the weak convergence of the $K$--dimensional process

$$
\A_{N,T}(u)=(A_{N,T,1}(u), A_{N,T,2}(u), \ldots ,A_{N,T,K}(u))^\T,
$$

where for $1\leq i \leq K$,

$$
A_{N,T,i}(u)=T^{1/2}u(\hat{\lambda}_i(u)-\lambda_i),\;\;1/T\leq u \leq 1\;\;\mbox{and}\;\;A_{N,T,i}(u)=0,\;\;0\leq u<1/T.
$$

Let
\beq\label{v-def-1}
V_1=\sum_{\ell=-\infty}^\infty\cov(\eta_0^2, \eta^2_\ell),
\eeq
\beq\label{v-def-2}
\V_2=\left\{\sum_{s=-\infty}^\infty\lim_{T\to \infty}\sum_{k=1}^N\fe_i(k)\fe_j(k)\cov(\eta_0, \eta_s)\cov(e_{k,0}, e_{k,s}), 1\leq i,j\leq K\right\},
\eeq
\begin{align}\label{v-def-3}
\V_3
&=\left\{\sum_{s=-\infty}^\infty\lim_{T\to \infty}\left[\sum_{k=1}^N\fe_i^2(k)\fe^2_j(k)\cov (e^2_{k,0}, e^2_{k,s}) \right.\right.\\
&\hspace{1.5cm}+2\left(\sum_{k=1}^N\fe_i(k)\fe_j(k)\cov (e_{k,0}, e_{k,s})\right)^2  \notag\\
&\hspace{1.5cm}\left. \left. -2\sum_{k=1}^N\fe_i^2(k)\fe_j^2(k)(\cov (e_{k,0}, e_{k,s}))^2
\right], 1\leq i,j\leq K
\right\}.\notag
\end{align}
We use the notation $\V_2=\{V_2(i,j), 1\leq i,j\leq K\}$ and $\V_3=\{V_3(i,j), 1\leq i,j\leq K\}$.
\medskip
\begin{rem}\label{rem-1} {\rm If, for example, we  assume that $r(s)=\cov (e_{k,0}, e_{k,s})$ for all $1\leq k\leq N$, then  $\V_2$ is a diagonal matrix with
$$
V_2(i,i)=\sum_{s=-\infty}^\infty \cov (\eta_0, \eta_s) r(s).
$$
The expression for $\V_3$ also simplifies since by the orthonormality of the $\fe_i$'s we have
$$
\sum_{k=1}^N\fe_i(k)\fe_j(k)\cov (e_{k,0}, e_{k,s})=r(s)I\{i=j\}.
$$
If we further assume that each of the $\{e_{k,s}, -\infty<s<\infty\}$ sequences are Gaussian, then $\mbox{cov} (e^2_{k,0}, e^2_{k,s})=2r^2(s)$, and $\V_3$ also reduces to a diagonal matrix with $V_3(i,i)=2\sum_{s=-\infty}^\infty r^2(s)$.
}
\end{rem}

\medskip

Let
\beq\label{a-def}
a_i=\lim_{T\to \infty}\fe_i^\T\bgamma,\;\;\;1\leq i\leq K
\eeq
and define $\G=\{G(i,j), 1\leq i,j\leq K\}$ with $G(i,j)=a_i^2a_j^2V_1+4a_ia_jV_2(i,j)+V_3(i,j)$.
Lemma \ref{lin-1} demonstrates that the limit in \eqref{a-def} is finite.

\begin{theorem}\label{main-all} If $H_0$ and the conditions of Theorem \ref{main} hold, and
\beq\label{ga-1}
\|\bgamma\|=O(1),\;\;\mbox{as}\;\;T\to\infty,
\eeq
then we have that
$\A_{N,T}(u)\;\;\mbox{converges weakly in}\;\;{\mathcal D}^K[0,1]\;\;\mbox{to}\;\;\W_{\G}(u),$
where $\W_{\G}(u)$ is a $K$--dimensional Wiener process, i.e.\ $\W_{\G}(u)$ is Gaussian with $E\W_{\G}(u)={\bf 0}$ and $E\W_{\G}(u)\W_{\G}^\T(u')=\min(u,u') \G$.
\end{theorem}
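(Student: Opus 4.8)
The plan is to reduce the joint convergence of the vector process $\A_{N,T}(u)$ to a multivariate invariance principle for the quadratic forms $\xi_{i,t}$, with most of the analytic work — the uniform control of the eigenvalue perturbation — already in place from the proof of Theorem \ref{main}. The genuinely new content is the joint treatment of all $K$ eigenvalues and the identification of the cross-covariance structure $\G$.

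First I would establish, for each $1 \le i \le K$, a perturbation expansion of the empirical eigenvalue around $\lambda_i$. Writing $\Delta_{N,T}(u) = \hC_{N,T}(u) - \C$ and invoking the third order expansion of Hall and Hosseini--Nasab (2009), one has $\hat{\lambda}_i(u) - \lambda_i = \fe_i^\T \Delta_{N,T}(u)\fe_i + R_{i,N,T}(u)$, where the remainder $R_{i,N,T}(u)$ collects the second and third order terms, controlled through the spectral gaps supplied by Assumption \ref{inc}. Multiplying by $T^{1/2}u$ and using $u\fe_i^\T\hC_{N,T}(u)\fe_i = (1/T)\sum_{t=1}^{\lf Tu\rf}(\fe_i^\T(\X_t-\baX_T))^2$ together with $\fe_i^\T\C\fe_i = \lambda_i$, the leading term becomes a centred partial sum, so that, up to uniformly negligible terms,
$$
A_{N,T,i}(u) = \frac{1}{T^{1/2}}\sum_{t=1}^{\lf Tu\rf}(\xi_{i,t}-\lambda_i) + o_P(1),\quad 1 \le i \le K,
$$
uniformly in $u \in [0,1]$. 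The replacement of the recentring $\baX_T$ by $E\X_0$ is handled exactly as in Theorem \ref{main} and contributes an error of smaller order under \eqref{n-t-1}.

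Next I would apply a functional central limit theorem to the $\mathbb{R}^K$-valued partial sums of $(\xi_{1,t}-\lambda_1, \ldots, \xi_{K,t}-\lambda_K)^\T$. Each $\xi_{i,t}$ is a quadratic form in the $L^{12}$-$m$-approximable processes $\eta_t$ and $\be_t$, hence is itself $L^{6}$-$m$-approximable with a summable rate function, so the vector sequence satisfies the hypotheses of a multivariate invariance principle and its rescaled partial sum process converges in ${\mathcal D}^K[0,1]$ to a $K$-dimensional Wiener process. To identify its covariance as $\G$, I would use the decomposition $\xi_{i,t} = (\fe_i^\T\bgamma)^2\eta_t^2 + 2(\fe_i^\T\bgamma)\eta_t(\fe_i^\T\be_t) + (\fe_i^\T\be_t)^2$, expand the long run covariances $\sum_s \cov(\xi_{i,0},\xi_{j,s})$ term by term, and pass to the limit using the independence of $\eta$ and $\be$ in Assumption \ref{error-as}(c), the convergence $\fe_i^\T\bgamma \to a_i$ guaranteed by Lemma \ref{lin-1}, and $\|\bgamma\| = O(1)$ from \eqref{ga-1}. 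The three groups of terms produce exactly $a_i^2a_j^2V_1$, $4a_ia_jV_2(i,j)$, and $V_3(i,j)$, which sum to $G(i,j)$.

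The main obstacle, as in Theorem \ref{main}, is the uniform-in-$u$ control of $T^{1/2}u\,R_{i,N,T}(u)$, particularly near $u = 0$ where $\hC_{N,T}(u)$ is based on very few observations and is nearly singular. This requires maximal inequalities for the partial sums of the cross terms $\fe_i^\T(\X_t - E\X_0)(\X_t-E\X_0)^\T\fe_j$ with $i \ne j$, divided by the gaps $\lambda_i - \lambda_j$, for which both the twelve-moment assumption and the $(\log T)^{1/3}$ factor in \eqref{n-t-1} are used; the same bounds already justify Theorem \ref{main} for $i=1$ and extend to $2 \le i \le K$ thanks to Assumption \ref{inc}. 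Once $\sup_{0\le u\le 1} T^{1/2}u|R_{i,N,T}(u)| = o_P(1)$ is established, the reduction above together with the multivariate invariance principle yields the stated weak convergence of $\A_{N,T}(u)$ to $\W_{\G}(u)$.
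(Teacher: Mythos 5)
Your proposal is correct and follows essentially the same route as the paper: the Hall--Hosseini-Nasab third-order expansion with the remainder controlled uniformly in $u$ by maximal inequalities exploiting the twelve moments and the $(\log T)^{1/3}$ factor (the paper's Lemmas \ref{rem-m}, \ref{eig-rem}, \ref{hall}, \ref{hall-2}), reduction to partial sums of the quadratic forms, and an $m$-dependent-approximation invariance principle with the covariance $\G$ identified through the decomposition $\xi_{i,t}=(\fe_i^\T\bgamma)^2\eta_t^2+2(\fe_i^\T\bgamma)\eta_t(\fe_i^\T\be_t)+(\fe_i^\T\be_t)^2$. The only cosmetic difference is that the paper proves its functional CLT (Lemma \ref{clt-lem}) for the $(2K+1)$-dimensional vector of components $(D_{N,T},F_{N,T;i},G_{N,T;i})$ and then assembles the limit as $a_i^2\Gamma_1+2a_i\Gamma_{i+1}+\Gamma_{i+K+1}$, rather than for the $K$-vector of centred $\xi_{i,t}$ directly.
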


\medskip
\begin{rem}\label{rem-zero}{\rm If $\|\bgamma\|\to 0$, as $T\to \infty$, then $a_i=0$ according to Lemma \ref{lin-1}. In this case the weak limit of $\A_{N,T}(u)$ is the $K$--dimensional Wiener process $\W_{\V_3}(u)$, since  $\G=\V_3$.
}
\end{rem}

\medskip
 To state the next result we introduce the covariance matrix $\bH=\{H(i,j), 1\leq i ,j\leq K\}$: $H(1,1)=V_1, H(1,i)=H(i,1)=a_i^2V_1$ and $H(i,j)=a_i^2a_j^2V_1+4a_ia_j V_2(i,j)+ V_3(i,j), 2\leq i,j\leq K$.

\medskip
\begin{theorem}\label{main-all-2} If $H_0$ and the conditions of Theorem \ref{main} hold, and
\beq\label{ga-2}
\|\bgamma\|\to \infty,\;\;\mbox{as}\;\;T\to\infty,
\eeq
then we have that $\left\{\|\bgamma\|^{-2}A_{N,T;1}(u), A_{N,T;i}(u), 2\leq i\leq K \right\}$ converges weakly in
${\mathcal D}^K[0,1]$ to $\W_{\bH}(u)$,
where $\W_{\bH}(u)$ is a $K$--dimensional Wiener process, i.e.\ $\W_{\bH}(u)$ is Gaussian with $E\W_{\bH}(u)={\bf 0}$ and $E\W_{\bH}(u)\W_{\bH}^\T(u')=\min(u,u') \bH$.
\end{theorem}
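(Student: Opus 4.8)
The plan is to mirror the linearization argument behind Theorems \ref{main} and \ref{main-all}, isolating the two new effects produced by the divergence $\|\bgamma\|\to\infty$: the blow-up of the leading eigenvalue and eigenvector, and the consequent need to rescale the first coordinate. Under $H_0$ we write $\bX_t - E\bX_0 = \eta_t\bgamma + \be_t$ with $\be_t = (e_{1,t},\ldots,e_{N,t})^\T$, so that, setting $a_i^{(T)} = \fe_i^\T\bgamma$ and $Z_{i,t} = \fe_i^\T\be_t$, we have the exact identity
$$
\xi_{i,t} = (a_i^{(T)}\eta_t + Z_{i,t})^2 = (a_i^{(T)})^2\eta_t^2 + 2a_i^{(T)}\eta_t Z_{i,t} + Z_{i,t}^2.
$$
The first task is to reuse the third-order Taylor expansion of Hall and Hosseini--Nasab (2009) from the proof of Theorem \ref{main} to obtain, uniformly in $u$,
$$
A_{N,T,i}(u) = T^{1/2}u(\hat{\lambda}_i(u)-\lambda_i) = T^{-1/2}\sum_{t=1}^{\lf Tu\rf}(\xi_{i,t}-\lambda_i) + o_P(r_{i,T}),
$$
where $r_{1,T} = \|\bgamma\|^2$ and $r_{i,T} = 1$ for $2 \le i \le K$. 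Securing this display with these scales is the heart of the argument.

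Next I would pin down the eigenstructure under \eqref{ga-2}. Writing $\C = \bgamma\bgamma^\T + \Sigma_e$ with $\Sigma_e = \cov(\be_0)$ of bounded operator norm, the rank-one term dominates as $\|\bgamma\|\to\infty$, and a standard rank-one perturbation bound together with Lemma \ref{lin-1} yields $\lambda_1/\|\bgamma\|^2 \to 1$, $a_1^{(T)}/\|\bgamma\| \to 1$, $a_i^{(T)} \to a_i$ finite for $2\le i\le K$, and $\lambda_2,\ldots,\lambda_K$ bounded. Dividing the first coordinate by $\|\bgamma\|^2$ then gives $\|\bgamma\|^{-2}\xi_{1,t} \to \eta_t^2$ in mean square, while for $i\ge 2$ the three summands above are each of order one with coefficients converging to $a_i^2$, $2a_i$, and $1$.

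The third task is a multivariate invariance principle for the partial sums of $\bY_t = (\|\bgamma\|^{-2}\xi_{1,t},\xi_{2,t},\ldots,\xi_{K,t})^\T$. Since $\eta$ and the $e_{i,\cdot}$ are $L^{12}$-$m$-approximable and the $\xi_{i,t}$ are quadratic functionals of them, $\bY_t$ is $L^{p}$-$m$-approximable for some $p>2$ with summable rate; because its coordinates form a triangular array, depending on $T$ through $\fe_i$ and $\bgamma$, the principle must be applied in triangular-array form, with the uniform moment bounds from the twelve-moment assumption supplying the Lindeberg and tightness control. The Cram\'er--Wold device together with the scalar functional central limit theorem for weakly dependent sequences then gives weak convergence in $\mathcal D^K[0,1]$ to a Gaussian process whose covariance is the long-run covariance of $\bY_t$. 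I would identify this limit with $\W_{\bH}$ by evaluating $\sum_{s}\cov(Y_{i,0},Y_{j,s})$: the cross terms among $\eta_t^2$, $\eta_t Z_{i,t}$, and $Z_{i,t}^2$ all vanish because $\eta$ is independent of $\be$ and each factor has mean zero, leaving exactly $H(1,1)=V_1$, $H(1,j)=a_j^2V_1$, and $H(i,j)=a_i^2a_j^2V_1 + 4a_ia_jV_2(i,j) + V_3(i,j)$ for $i,j\ge 2$, with the $V_2$ and $V_3$ sums converging by the computation underlying Remark \ref{rem-1}.

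The hard part will be the uniform linearization in the first task under the diverging scale. For the leading coordinate the divergence actually helps, because the gap $\lambda_1-\lambda_2 \sim \|\bgamma\|^2$ is large and dampens the Taylor remainder once we divide by $\|\bgamma\|^2$; the delicate point is to show that the remainder $T^{1/2}u\,R_1(u)$ is $o_P(\|\bgamma\|^2)$ uniformly in $u$ using only rate \eqref{n-t-2}. For the remaining coordinates the obstacle is the reverse: the gaps $\lambda_i-\lambda_{i+1}$ stay of order one by Assumption \ref{inc}, so I must check that the dominant $\bgamma$-direction does not leak into $\fe_i$ or its remainder $R_i(u)$, which reduces to bounding the action of the perturbation $\hat{\C}_{N,T}(u)-\C$ on the orthogonal complement of $\fe_1$, uniformly in $u$.
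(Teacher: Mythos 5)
Your proposal is correct and follows essentially the same route as the paper: the Hall--Hosseini--Nasab third-order expansion, the rank-one analysis of $\C=\bgamma\bgamma^\T+\bLambda$ giving $\lambda_1/\|\bgamma\|^2\to1$ and $\fe_i^\T\bgamma\to a_i$ (the paper's Lemma \ref{lin-1}), the three-term decomposition of $\xi_{i,t}$ into $D$, $F$, $G$ components, an $m$-dependent approximation plus Cram\'er--Wold functional CLT, and the identical covariance bookkeeping yielding $\bH$. The delicate points you flag are exactly those the paper resolves in Lemma \ref{hall-2}, namely that $(\fe_1^\T\bgamma)^2/(\lambda_1-\lambda_2)=O(1)$ tames the quadratic remainder for the first coordinate and controls the contribution of the $\fe_1$-direction to the remainders for $i\ge2$.
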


\medskip

\begin{rem}\label{main-same}{\rm We note that $\sigma_1^2$ defined in Theorem \ref{main} coincide with $G(1,1)$ and $H(1,1)\|\gamma\|^2$ in the cases when $\|\bgamma\|=O(1)$ and $\|\bgamma\|\to \infty$ as $T \to \infty$, respectively, and so Theorems \ref{main-all} and \ref{main-all-2} imply Theorem \ref{main}.
}
\end{rem}

\begin{rem}\label{big-one}{\rm We show in Lemma \ref{lin-1}  that in case of \eqref{ga-2}, $\lambda_1$, the largest eigenvalue of $\C$ satisfies
$$
\left|\frac{\lambda_1}{\|\bgamma\|^2}-1\right|=O(1)
$$
Thus Theorem \ref{main-all-2} yields that $\hat{\lambda}_(u)/\|\bgamma\|^2\to 1$ in probability for all $u>0$.
}
\end{rem}
\medskip

\begin{rem}\label{nowe}{\rm Theorems \ref{main}, \ref{main-all} and \ref{main-all-2} provide the limits of the weighted differences $T^{1/2}u(\hat{\lambda}_i(u)$ $-\lambda_i)=T^{1/2}(\tilde{\lambda}_i-u\lambda_i), 1\leq i \leq K$. If the conditions of Theorem \ref{main} are satisfied but \eqref{n-t-1} is replaced \eqref{n-t-2} as in Theorem \ref{main-2}, then $T^{1/2}(\hat{\lambda}_i(u)-\lambda_i), 1\leq i \leq K$ converges weakly in ${\mathcal D}^K[c,1]$ to  $\W_{\G}(u)/u$ for any $0<c\leq 1$ where $\W_{\G}(u)$ is defined in Theorem \ref{main}.\\

}
\end{rem}

\medskip

\section{Technical Results }\label{proofs}

 \subsection{Proof of Theorems \ref{main}, \ref{main-2}, \ref{main-all} and \ref{main-all-2}}

Throughout these proofs we use the terms of the form $c_{i,j}$ to denote unimportant numerical constants. We can assume without loss of generality that $E\X_t={\bf 0}$, and so we  define
$$
\C_{N,T}(u)=\frac{1}{T}\sum_{t=1}^{\lf Tu\rf}\X_t\X_t^\T.
$$
\begin{lemma}\label{rem-m} If \eqref{model-null} and Assumptions \ref{inc}, \ref{b-g}, and \ref{error-as} hold, then we have, as $T\to\infty$,
$$
\sup_{0\leq u \leq 1}\left\|  \tC_{N,T}(u)-\C_{N,T}(u)  \right\|=O_P\left( \frac{N}{T}   \right).
$$
\end{lemma}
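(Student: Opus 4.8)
The plan is to reduce the discrepancy to the sample partial sums and to exploit that it consists of rank-one matrices. Since we may take $E\X_t=\mathbf 0$, write $\mathbf S(u)=\sum_{t=1}^{\lf Tu\rf}\X_t$, so that $\baX_T=\mathbf S(1)/T$. Expanding $(\X_t-\baX_T)(\X_t-\baX_T)^\T$ and summing, I would obtain
$$\tC_{N,T}(u)-\C_{N,T}(u)=-\frac1T\mathbf S(u)\baX_T^\T-\frac1T\baX_T\mathbf S(u)^\T+\frac{\lf Tu\rf}{T}\baX_T\baX_T^\T.$$
Each of the three terms on the right is a rank-one matrix, and for a rank-one matrix $\mathbf a\mathbf b^\T$ both the spectral and Frobenius norms equal $\|\mathbf a\|\,\|\mathbf b\|$, so the bound is insensitive to which matrix norm $\|\cdot\|$ denotes. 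Using $\|\baX_T\|=T^{-1}\|\mathbf S(1)\|\le T^{-1}\sup_u\|\mathbf S(u)\|$ and the triangle inequality,
$$\sup_{0\le u\le 1}\left\|\tC_{N,T}(u)-\C_{N,T}(u)\right\|\le \frac{3}{T^2}\Big(\sup_{0\le u\le 1}\|\mathbf S(u)\|\Big)^2,$$
so the whole lemma reduces to showing $\sup_{0\le u\le1}\|\mathbf S(u)\|=O_P\big((NT)^{1/2}\big)$.

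For this bound I would decompose $\X_t=\bgamma\eta_t+\be_t$, with $\be_t=(e_{1,t},\ldots,e_{N,t})^\T$, giving $\mathbf S(u)=\bgamma\sum_{t=1}^{\lf Tu\rf}\eta_t+\sum_{t=1}^{\lf Tu\rf}\be_t$, and treat the two pieces separately. For the common-factor piece, Assumption \ref{b-g} yields $\|\bgamma\|^2=\sum_{i=1}^N\gamma_i^2\le c_1^2N$, i.e.\ $\|\bgamma\|=O(N^{1/2})$, while the functional central limit theorem for the partial sums of the $L^{12}$-$m$-approximable scalar sequence $\{\eta_t\}$ gives $\max_{1\le k\le T}|\sum_{t=1}^k\eta_t|=O_P(T^{1/2})$; multiplying, this piece is $O_P\big((NT)^{1/2}\big)$. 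For the idiosyncratic piece I would pass from the vector maximum to coordinatewise scalar maxima,
$$\sup_{0\le u\le1}\Big\|\sum_{t=1}^{\lf Tu\rf}\be_t\Big\|^2=\max_{1\le k\le T}\sum_{i=1}^N\Big(\sum_{t=1}^k e_{i,t}\Big)^2\le\sum_{i=1}^N\max_{1\le k\le T}\Big(\sum_{t=1}^k e_{i,t}\Big)^2,$$
and apply a maximal moment inequality for $m$-approximable sequences (of the type in Wu (2005), which carries no logarithmic loss) to each $\{e_{i,t}\}$. Because the coordinate sequences have rate $\chi_{e,i}(m)\le c_3m^{-\alpha_e}$ with $\alpha_e>1$ and variances bounded by $c_5$ uniformly in $i$, their autocovariances are summable uniformly in $i$, whence $\big(E\max_{1\le k\le T}(\sum_{t=1}^k e_{i,t})^2\big)^{1/2}\le CT^{1/2}$ with $C$ independent of $i$. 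Summing over $i$ gives $E\sup_u\|\sum\be_t\|^2\le C^2NT$, so $\sup_u\|\sum\be_t\|=O_P\big((NT)^{1/2}\big)$ by Markov's inequality. Combining the two pieces yields $\sup_u\|\mathbf S(u)\|=O_P\big((NT)^{1/2}\big)$, and substituting into the rank-one bound gives the claimed $O_P(N/T)$.

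The main obstacle I anticipate is the passage from a fixed-$u$ second-moment bound to uniform-in-$u$ (maximal) control, and in particular securing the scalar maximal inequality $E\max_{1\le k\le T}(\sum_{t=1}^k e_{i,t})^2\le CT$ \emph{without} a logarithmic factor and with $C$ not depending on the coordinate $i$. A crude superadditivity (M\'oricz-type) argument would cost a spurious $(\log T)^2$, which would degrade the conclusion to $O_P(N(\log T)^2/T)$; the sharper moment inequalities available for $m$-approximable sequences are what remove this loss and let the bound match the stated $O_P(N/T)$. I also note that the fixed-$u$ second moment computation is routine and, crucially, insensitive to cross-sectional dependence among the $\{e_{i,\cdot}\}$, since $\|\sum\be_t\|^2$ is a sum over coordinates; it is only the maximal step that must be carried out one coordinate at a time, which is exactly why Assumption \ref{error-as}(b) states its moment and rate bounds uniformly in $i$. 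The $L^{12}$ moment assumption is comfortably more than enough to guarantee the required uniformity both over $i$ and over the sample.
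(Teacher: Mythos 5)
Your proposal is correct and follows essentially the same route as the paper: reduce to the rank-one structure of $\tC_{N,T}(u)-\C_{N,T}(u)$, and then show $\sup_u\|\sum_{t\le \lf Tu\rf}\X_t\|=O_P((NT)^{1/2})$ by splitting $\X_t=\bgamma\eta_t+\be_t$ and controlling the two partial-sum pieces with maximal moment inequalities for Bernoulli shifts.

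Two minor points of comparison. First, the paper does not use the triangle inequality on $\mathbf S(u)$ as you do; it expands $\|\sum_{t=1}^{T}\X_t\|^2\|\sum_{t\le\lf Tu\rf}\X_t\|^2$ into nine terms, and the resulting cross terms such as $\sum_{s}\sum_p\gamma_p e_{p,s}$ are handled via Rosenthal's inequality, which uses cross-sectional independence of the idiosyncratic errors. Your triangle-inequality route sidesteps those cross terms entirely and is, for this lemma, slightly cleaner and less demanding on the cross-sectional dependence structure. Second, the obstacle you flag --- that a M\'oricz-type superadditivity argument would cost a $(\log T)^2$ --- does not actually arise in the paper's argument: the paper first obtains, from Proposition 4 of Berkes et al.\ (2011), increment bounds of the form $E|\sum_{t=\lf Tv\rf}^{\lf Tu\rf}e_{i,t}|^{\kappa}\le c(\lf Tu\rf-\lf Tv\rf)^{\kappa/2}$ for some $\kappa>2$, and then applies the M\'oricz--Serfling--Stout maximal inequality at that moment order, where the superadditivity exponent $\kappa/2>1$ yields $E\max_{k\le T}|\sum_{t\le k}e_{i,t}|^{\kappa}\le cT^{\kappa/2}$ with no logarithmic loss; Jensen's inequality then returns the second-moment bound $E\max_k(\sum_{t\le k}e_{i,t})^2\le cT$ uniformly in $i$. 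This is exactly the log-free maximal control you propose to obtain from a Wu-type inequality, so your argument closes the same way; it is only the attribution of the tool that differs, and the $L^{12}$ moment assumption is indeed what makes the higher-order increment bounds available.
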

\begin{proof} It is easy to see that
$$
\tC_{N,T}(u)=\C_{N,T}(u)-\baX_T\left(\frac{1}{T}\sum_{t=1}^{\lf Tu\rf}\X_t\right)^\T-\left(\frac{1}{T}\sum_{t=1}^{\lf Tu\rf}\X_t\right)\baX_T^\T+
\baX_T\baX_T^\T,
$$
and therefore
\begin{align*}
\sup_{0\leq u \leq 1}\left\|  \tC_{N,T}(u)-\C_{N,T}(u)  \right\|&\leq 2\sup_{0\leq u \leq 1}\left\|  \baX_T\left(\frac{1}{T}\sum_{t=1}^{\lf Tu\rf}\X_t\right)^\T  \right\|+\left\|\baX_T\baX_T^\T\right\|\\
&\leq 3\sup_{0\leq u \leq 1}\left\|  \baX_T\left(\frac{1}{T}\sum_{t=1}^{\lf Tu\rf}\X_t\right)^\T  \right\|.
\end{align*}
Using assumption \eqref{model-null} we obtain that
\begin{align*}
&T^2\left\|\sum_{t=1}^{\lf Tu\rf}\X_t\baX_T^\T\right\|^2\\
&=\sum_{\ell=1}^N\sum_{p=1}^N \left(
\gamma_\ell\sum_{t=1}^{T}\eta_t+\sum_{t=1}^{T}e_{\ell,t}\right)^2\left(\gamma_p\sum_{t=1}^{\lf Tu\rf}\eta_t+\sum_{t=1}^{\lf Tu\rf}e_{p,t}\right)^2\\
&=\left(\sum_{\ell=1}^N\gamma_\ell^2\right)^2\left(\sum_{t=1}^{\lf Tu\rf}\eta_t\right)^2
\left(\sum_{t=1}^{ T}\eta_t\right)^2+2\sum_{\ell=1}^N\gamma_\ell^2\left(\sum_{t=1}^T\eta_t\right)^2\left(\sum_{v=1}^{\lf Tu\rf}\eta_v\right)
\left(\sum_{s=1}^{\lf Tu\rf}\sum_{p=1}^N\gamma_pe_{p,s}\right)\\
&\hspace{.5cm}+\sum_{\ell=1}^N\gamma_\ell^2\left(\sum_{t=1}^T\eta_t\right)^2\sum_{p=1}^N\left( \sum_{t=1}^{\lf Tu\rf}e_{p,t}  \right)^2
+\sum_{p=1}^N\gamma_p^2\left(\sum_{t=1}^{\lf Tu\rf}\eta_t\right)^2\sum_{\ell=1}^N\left(\sum_{s=1}^Te_{\ell,s}\right)^2
\\
&\hspace{.5cm}+2\sum_{\ell=1}^N\left(\sum_{s=1}^Te_{\ell,s}\right)^2\left(\sum_{s=1}^{\lf Tu\rf}\eta_s\right)\left(\sum_{s=1}^{\lf Tu\rf}\sum_{p=1}^N \gamma_p e_{p,s}\right)+\sum_{\ell=1}^N\left(\sum_{s=1}^Te_{\ell,s}\right)^2\sum_{p=1}^N\left( \sum_{s=1}^{\lf Tu\rf} e_{p,s}  \right)^2
\\
&\hspace{.5cm}+2\sum_{t=1}^T\eta_t\sum_{\ell=1}^N\gamma_\ell\left(\sum_{s=1}^Te_{\ell,s}\right)\sum_{p=1}^N\gamma_p^2\left(\sum_{v=1}^{\lf Tu\rf}\eta_v\right)^2+2\sum_{t=1}^T\eta_t\sum_{\ell=1}^N\gamma_\ell\left(\sum_{s=1}^Te_{\ell,s}\right)\sum_{p=1}^N\left( \sum_{s=1}^{\lf Tu\rf} e_{p,s}   \right)^2\\
&\hspace{.5cm}+4\sum_{t=1}^T\eta_t\sum_{\ell=1}^N\left(\sum_{s=1}^T\gamma_\ell e_{\ell,s}\right)\sum_{z=1}^{\lf Tu\rf}\eta_z\sum_{p=1}^N\left(\sum_{s=1}^{\lf Tu\rf}\gamma_p e_{p,s}\right)
\\
&=R_{T,1}(u)+R_{T,2}(u)+\ldots +R_{T,9}(u).
\end{align*}
First we prove that
\beq\label{l-1-1}
\sup_{0\leq u \leq 1}\left| \sum_{t=1}^{\lf Tu\rf}\eta_t \right|=O_P(T^{1/2}).
\eeq
It follows from Proposition 4 of Berkes et al.\ (2011) that under  conditions Assumption \ref{error-as}(a) and Assumption \ref{error-as}(a) we have for any $2<\kappa\leq 12 $ that
\beq\label{ber-mom}
E\left(\sum_{t=\lf Tv\rf}^{\lf Tu\rf}\eta_t\right)^\kappa\leq c_{1,1}(\lf Tu\rf-\lf Tv\rf)^{\kappa/2}\;\;\mbox{for all}\;\;0\leq v\leq u \leq 1,
\eeq
and therefore the maximal inequality of M\'oricz et al.\ (1982) implies \eqref{l-1-1}.
Next we show that
\beq\label{l-1-2}
\sup_{0\leq u \leq 1}\left|\sum_{s=1}^{\lf Tu\rf}\sum_{p=1}^N\gamma_pe_{p,s}\right|=O_P(1)T^{1/2}\|\bgamma\|.
\eeq
Following the arguments leading to \eqref{ber-mom} one can verify that for any $2<\kappa\leq 12$
\beq\label{ber-mom-2}
E\left|\sum_{s=\lf Tv\rf}^{\lf Tu\rf}e_{p,s}\right|^{{\kappa}}\leq c_{1,2}(\lf Tu\rf-\lf Tv\rf)^{\kappa/2}\;\;\;0\leq v\leq u \leq 1,
\eeq
with some constant $c_{1,2}$ for all $1\leq p \leq N$.
Hence for  any $0\leq v<u\leq 1$ we have via Rosenthal's inequality (cf.\ Petrov (1995), p.\ 59) and \eqref{ber-mom-2} that
\begin{align*}
E\left|\sum_{s=\lf Tv\rf}^{\lf Tu\rf}\sum_{p=1}^N\gamma_pe_{p,s}\right|^\kappa
&=E\left|\sum_{p=1}^N\sum_{s=\lf Tv\rf}^{\lf Tu\rf}\gamma_pe_{p,s}\right|^\kappa\\
&\leq c_{1,3}\left\{\sum_{p=1}^N |\gamma_p|^\kappa E\left|\sum_{s=\lf Tv\rf}^{\lf Tu\rf}e_{p,s}\right|^\kappa+\left(\sum_{p=1}^N\gamma_p^2 E\left(\sum_{s=\lf Tv\rf}^{\lf Tu\rf}e_{p,s}\right)^2\right)^{\kappa/2}
\right\}\\
&\leq c_{1,4} (\lf Tu\rf -\lf Tv\rf)^{\kappa/2}\left\{\sum_{p=1}^N |\gamma_p|^\kappa +\left(\sum_{p=1}^N \gamma_p^2 \right)^{\kappa/2}
\right\}.
\end{align*}
Using again the maximal inequality of M\'oricz et al.\ (1982) we conclude
\begin{align*}
E\sup_{0\leq u \leq 1}\left|\sum_{s=1}^{\lf Tu\rf}\sum_{p=1}^N\gamma_pe_{s,p}\right|^\kappa &\leq c_{1,5}T^{\kappa/2}\left\{\sum_{p=1}^N |\gamma_p|^\kappa +\left(\sum_{p=1}^N \gamma_p^2 \right)^{\kappa/2}
\right\}\\
&\leq c_{1,6} T^{\kappa/2}\left\|\bgamma\right\|^{\kappa},
\end{align*}
by Assumption \ref{b-g}. This completes the proof of  \eqref{l-1-2}.\\
Similarly to \eqref{l-1-2} we show that
\beq\label{l-1-3}
\sup_{0\leq s \leq 1}\sum_{\ell=1}^N\left(\sum_{s=1}^{\lf Tu\rf}e_{\ell,s}\right)^2=O_P(NT).
\eeq
First we note
$$
E\sup_{0\leq u \leq 1}\sum_{\ell=1}^N\left(\sum_{s=1}^{\lf Tu\rf}e_{\ell,s}\right)^2\leq \sum_{\ell=1}^NE\sup_{0\leq u \leq 1}\left(\sum_{s=1}^{\lf Tu\rf}e_{\ell,s}\right)^2
$$
and by Jensen's inequality we have
$$
E\sup_{0\leq u \leq 1}\left(\sum_{s=1}^{\lf Tu\rf}e_{\ell,s}\right)^2\leq \left(E\sup_{0\leq u \leq 1}\left|\sum_{s=1}^{\lf Tu\rf}e_{\ell,s}\right|^\kappa\right)^{2/\kappa}.
$$
Using again Proposition 4 of Berkes et al.\ (2011)  we get  for all $0\leq v \leq u \leq 1$ that
$$
E\left|\sum_{s=\lf Tv\rf}^{\lf Tu\rf}e_{\ell,s}\right|^\kappa\leq c_{1,7}(\lf Tu\rf-\lf Tv\rf)^{\kappa/2}
$$
and therefore the maximal inequality of M\'oricz et al.\ (1982)  yields
$$
\left(E\sup_{0\leq u \leq 1}\left|\sum_{s=1}^{\lf Tu\rf}e_{\ell,s}\right|^\kappa\right)^{2/\kappa}\leq c_{1,8}T^{1/2}.
$$
This completes the proof of \eqref{l-1-2}.\\
The upper bounds in \eqref{l-1-1}--\eqref{l-1-3} imply
$$
\sup_{0\leq u \leq 1}|R_{T,i}(u)|=O_P((\|\bgamma\|^4+\|\bgamma\|^3)T^2),\quad\mbox{if}\;\;i=1, 2, 7,
$$
$$
\sup_{0\leq u \leq 1}|R_{T,i}(u)|=O_P((\|\bgamma\|^2+\|\bgamma\|)NT^2),\quad\mbox{if}\;\;i=3, 4, 5, 8,  9.
$$
and
$$
\sup_{0\leq u \leq 1}|R_{T,6}(u)|=O_P(N^2T^2).
$$
Assumption \ref{b-g} implies that
$\|\bgamma\|\leq c_{1,9} N,$
  the proof of Lemma \ref{rem-m} is complete.
\end{proof}

\medskip
Let $\bar{\lambda}_1(u)\geq \bar{\lambda}_2(u)\geq \ldots \geq \bar{\lambda}_K(u)$ denote the $K$ largest eigenvalues of $\C_{N,T}(u)$.

\medskip
\begin{lemma}\label{eig-rem} If \eqref{model-null} and Assumptions \ref{inc}, \ref{b-g}, and \ref{error-as} hold, then we have, as $T\to\infty$,
$$
\max_{1\leq i \leq K}\sup_{0\leq u \leq 1}| \tilde{\lambda}_i(u)-\bar{\lambda}_i(u)|=O_P\left(\frac{N}{T}\right).
$$
\end{lemma}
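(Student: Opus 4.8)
The plan is to reduce the claim directly to the matrix perturbation bound already established in Lemma \ref{rem-m}, using a standard eigenvalue stability inequality. The point is that $\tilde{\lambda}_i(u)$ and $\bar{\lambda}_i(u)$ are the ordered eigenvalues of two real symmetric (indeed positive semidefinite) $N\times N$ matrices, $\tC_{N,T}(u)$ and $\C_{N,T}(u)$, whose difference has already been controlled uniformly in $u$.

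First I would record the relevant perturbation result: for any two symmetric $N\times N$ matrices $A$ and $B$ with eigenvalues listed in decreasing order, Weyl's inequality gives
$$
\max_{1\leq i \leq N}|\lambda_i(A)-\lambda_i(B)|\leq \|A-B\|_{\mbox{\footnotesize op}},
$$
where $\|\cdot\|_{\mbox{\footnotesize op}}$ denotes the spectral (operator) norm. Since the matrix norm $\|\cdot\|$ used throughout the proof of Lemma \ref{rem-m} is the Frobenius norm, and the spectral norm is dominated by the Frobenius norm, this yields, for every fixed $u$ and every sample realization,
$$
\max_{1\leq i \leq K}|\tilde{\lambda}_i(u)-\bar{\lambda}_i(u)|\leq \|\tC_{N,T}(u)-\C_{N,T}(u)\|_{\mbox{\footnotesize op}}\leq \|\tC_{N,T}(u)-\C_{N,T}(u)\|.
$$
It is worth emphasizing that this inequality is entirely pathwise and deterministic, so no probabilistic argument is needed at this stage.

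I would then take the supremum over $u\in[0,1]$ on both sides. Because the right-hand side no longer depends on $i$, the order of the suprema and the maximum over $i$ is harmless, and I obtain
$$
\max_{1\leq i \leq K}\sup_{0\leq u \leq 1}|\tilde{\lambda}_i(u)-\bar{\lambda}_i(u)|\leq \sup_{0\leq u \leq 1}\|\tC_{N,T}(u)-\C_{N,T}(u)\|.
$$
Invoking Lemma \ref{rem-m}, the right-hand side is $O_P(N/T)$ as $T\to\infty$, which is exactly the assertion of Lemma \ref{eig-rem}.

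There is essentially no serious obstacle here: the entire content of the lemma is already packed into Lemma \ref{rem-m}, and the remaining work is the recognition that eigenvalue differences are Lipschitz with respect to the operator norm of the matrix difference. The only point requiring a moment of care is confirming that the norm bounded in Lemma \ref{rem-m} dominates the operator norm (which it does, being the Frobenius norm), so that Weyl's inequality applies with the correct right-hand side; once that is noted, the $O_P(N/T)$ rate transfers immediately because the bounding inequality holds on every sample path.
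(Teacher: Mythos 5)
Your proposal is correct and follows essentially the same route as the paper: the paper also reduces the claim to Lemma \ref{rem-m} via a standard eigenvalue perturbation bound (cited from Dunford and Schwartz with an unspecified absolute constant $c_{2,1}$), whereas you make the bound explicit by naming Weyl's inequality and noting that the Frobenius norm dominates the operator norm. Nothing is missing.
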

\begin{proof} It is well--known (cf.\ Dunford and Schwartz (1988)) that
$$
\max_{1\leq i \leq K}\sup_{0\leq u \leq 1}| \tilde{\lambda}_i(u)-\bar{\lambda}_i(u)|\leq c_{2,1}\sup_{0\leq u\leq 1}\| \tilde{\C}_T(u)-\C_T(u)\|
$$
with some absolute constant $c_{2,1}$ and therefore the result follows from Lemma \ref{rem-m}.
\end{proof}

\medskip
Let
\begin{align*}
Z_{N,T;i}(u)
=\sum_{\ell\neq i}^N\frac{1}{u(\lambda_i-\lambda_\ell)}\left(\fe_i^\T(\tilde{\C}_{N,T}(u)-u\C)\fe_\ell\right)^2,\;\;1\leq i \leq K.
\end{align*}
\medskip

\begin{lemma}\label{hall} If \eqref{model-null}, Assumptions \ref{inc}, \ref{b-g}, and \ref{error-as} hold, then we have, as $T\to\infty$,
\begin{align*}
\sup_{0\leq u \leq 1}\left|\tilde{\lambda}_i(u)-\frac{\lf Tu\rf}{T}\lambda_i-\fe_i^\T(\tilde{\C}_{N,T}(u)-u\C)\fe_i-
Z_{N,T;i}(u)\right|
=O_P(NT^{-3/2}).
\end{align*}
\end{lemma}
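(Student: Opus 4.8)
The plan is to treat the statement as a third-order eigenvalue perturbation expansion for the symmetric matrix $\tC_{N,T}(u)$ about its mean, in the spirit of the functional-PCA perturbation bounds of Hall and Hosseini--Nasab (2009). Since we have reduced to $E\X_t={\bf 0}$, we have $E\tC_{N,T}(u)=(\lf Tu\rf/T)\C$, and the matrix $(\lf Tu\rf/T)\C$ has the same eigenvectors $\fe_i$ as $\C$, with eigenvalues $(\lf Tu\rf/T)\lambda_i$. I would therefore set $P(u)=\tC_{N,T}(u)-(\lf Tu\rf/T)\C$ and regard $\tC_{N,T}(u)=(\lf Tu\rf/T)\C+P(u)$ as a perturbation of the reference matrix. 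For a simple eigenvalue with spectral gap $g$, and provided $\|P(u)\|<g/2$, analytic (Rayleigh--Schr\"odinger) perturbation theory gives
$$\tilde\lambda_i(u)=\frac{\lf Tu\rf}{T}\lambda_i+\fe_i^\T P(u)\fe_i+\sum_{\ell\neq i}\frac{(\fe_i^\T P(u)\fe_\ell)^2}{(\lf Tu\rf/T)(\lambda_i-\lambda_\ell)}+R_i(u),$$
which reproduces the first three terms of the statement (the linear term and the denominators are written with $u$ in place of $\lf Tu\rf/T$, a bookkeeping difference I would absorb), with $R_i(u)$ collecting the cubic and higher contributions. Assumption \ref{inc}, together with the ordering of the eigenvalues, gives $|\lambda_i-\lambda_\ell|\ge c_0$ for every $\ell\neq i$ and every $i\le K$, so the reference gap is $g_u\ge(\lf Tu\rf/T)c_0$.

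The core estimate is a uniform bound on the cubic remainder. A crude bound $|R_i(u)|\le C\|P(u)\|^3/g_u^2$ in operator norm is too lossy, because $\|P(u)\|$ is inflated by the high-dimensional factor direction $\bgamma$ (its operator norm is of order $\|\bgamma\|^2/\sqrt T$ together with $\sqrt{N/T}$ contributions), so $\|P(u)\|^3/g_u^2$ does not achieve the target rate. Instead I would write $R_i(u)$ explicitly as a sum of cubic forms in the scalar quantities $\fe_i^\T P(u)\fe_\ell$, i.e.\ terms of the type $\sum_{\ell,m}(\fe_i^\T P\fe_\ell)(\fe_\ell^\T P\fe_m)(\fe_m^\T P\fe_i)/[(\lambda_i-\lambda_\ell)(\lambda_i-\lambda_m)]$ and $(\fe_i^\T P\fe_i)\sum_\ell(\fe_i^\T P\fe_\ell)^2/(\lambda_i-\lambda_\ell)^2$ (up to the $(\lf Tu\rf/T)$ scalings), and bound these through the projected quantities $\fe_i^\T P(u)\fe_i$ and $\|P(u)\fe_i\|^2=\sum_\ell(\fe_i^\T P(u)\fe_\ell)^2$. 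Unlike the operator norm, these are averages of the high-dimensional noise against a fixed direction and are therefore much smaller.

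Each such quantity is a centered partial sum, $\fe_i^\T P(u)\fe_\ell=T^{-1}\sum_{t=1}^{\lf Tu\rf}[(\fe_i^\T\X_t)(\fe_\ell^\T\X_t)-\fe_i^\T\C\fe_\ell]$ (up to the centering-by-$\baX_T$ corrections, which are $O_P(N/T)$ uniformly by Lemma \ref{rem-m}), whose summand is a quadratic form in the $L^{12}$--$m$--approximable sequences of Assumption \ref{error-as}. I would control these exactly as in the proof of Lemma \ref{rem-m}: the Berkes et al.\ (2011) moment inequality \eqref{ber-mom} and its idiosyncratic analogue, combined with the M\'oricz et al.\ (1982) maximal inequality and Rosenthal's inequality, yield $\sup_u$ bounds on the partial sums, while the sum over $\ell$ is controlled using $\sum_\ell(\fe_\ell^\T\X_t)^2=\|\X_t\|^2=O_P(N)$. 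Substituting these into the cubic forms and using $g_u\ge(\lf Tu\rf/T)c_0$ produces the $O_P(NT^{-3/2})$ rate.

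The main obstacle I anticipate is the uniformity over $u$, and in particular the behaviour near $u=0$, where the reference gap $g_u\propto\lf Tu\rf/T$ collapses while $P(u)$ does not become correspondingly small; this is the regime responsible for the logarithmic factor in \eqref{n-t-1} and for the spike near the origin later seen in the simulations. Handling it is exactly where the twelve-moment assumption is needed: it is what upgrades the pointwise partial-sum bounds to suprema over the dyadic range of $u$ down to $1/T$, via a blocking/chaining argument layered on top of the M\'oricz maximal inequality. I expect this uniform control of the cubic remainder over the full interval, rather than the algebraic perturbation identity itself, to be the delicate part of the argument.
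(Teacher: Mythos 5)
Your algebraic skeleton is the right one, and your instinct that a crude operator--norm bound on the cubic remainder is too lossy is also correct, but the two steps you defer are exactly the ones that carry the proof, and as proposed they are gaps. First, the paper does not manipulate the Rayleigh--Schr\"odinger series at all: it invokes formula (5.17) of Hall and Hosseini--Nasab (2009), which bounds the third--order remainder by $c\,\hat{\Delta}^3(u)$ with $\hat{\Delta}(u)=\max_{\ell}\bigl(\sum_j(\hat{C}_{N,T;j,\ell}(u)-C_{j,\ell})^2\bigr)^{1/2}$, the maximum row norm of $\hat{\C}_{N,T}(u)-\C$ --- not the projected quantities $\|P(u)\fe_i\|$ you propose. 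Your plan to ``write $R_i(u)$ explicitly as a sum of cubic forms in $\fe_i^\T P(u)\fe_\ell$'' is asserted rather than established: the truncation error of the perturbation series after the second--order term is not literally the two cubic forms you display (it is an infinite series, or a contour--integral remainder), and extracting from it a bound in terms of row or projected norms with a constant depending only on the spectral gap is precisely the nontrivial content of the Hall--Hosseini--Nasab inequality. Without citing such a result or proving it, this step is missing.

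Second, and more consequentially, your choice of reference matrix manufactures the small--$u$ pathology rather than resolving it. You perturb around $(\lf Tu\rf/T)\C$, whose gap is $g_u\asymp u\,c_0$ and collapses as $u\to 0$, and you then propose to restore uniformity by an unspecified ``blocking/chaining argument.'' The paper never confronts this: it applies the third--order inequality to $\hat{\C}_{N,T}(u)$ against $\C$ itself, whose gap is $c_0$ uniformly in $u$ by Assumption \ref{inc}, and only afterwards multiplies through by $\lf Tu\rf/T$ to pass to $\tilde{\lambda}_i(u)$ and $Z_{N,T;i}(u)$. The uniform control then reduces to entrywise bounds $E\sup_{0\le u\le 1}|\tilde{C}_{N,T;j,\ell}(u)-(\lf Tu\rf/T)C_{j,\ell}|^6=O(T^{-3})$ --- sixth moments of partial sums of products of two observations, which is where the twelve--moment condition of Assumption \ref{error-as} actually enters, via Proposition 4 of Berkes et al.\ (2011) and the M\'oricz et al.\ (1982) maximal inequality --- followed by a union bound over the $N$ rows to handle the $\max_\ell$ in $\hat{\Delta}(u)$. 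No chaining over $u$ beyond the maximal inequality is used in this lemma. So the mechanism you attribute to the twelve moments is not the one at work, and your version of the argument would still have to absorb the $g_u^{-2}\asymp u^{-2}$ blow--up near the origin that the paper's normalization avoids.
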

\begin{proof} According to formula (5.17) of Hall and Hosseini--Nasab (2009) we have for all $1/T\leq u\leq 1$ that
\begin{align*}
&\left|\hat{\lambda}_i(u)-\lambda_i-\fe_i^\T(\hat{\C}_{N,T}(u)-\C)\fe_i-
\sum_{\ell\neq i}^N\frac{1}{\lambda_i-\lambda_\ell}\left(\fe_i^\T(\hat{\C}_{N,T}(u)-\C)\fe_\ell\right)^2
\right|\\
&\hspace{.5cm}\leq c_{3,1} \hat{\Delta}^{3}(u),
\end{align*}
where
$$
\hat{\Delta}(u)=\max_{1\leq \ell \leq N}\left(  \sum_{j=1}^N(\hat{C}_{N,T;j,\ell}(u)  -C_{j,\ell})^2 \right)^{1/2},
$$
and $\hat{C}_{N,T;j,\ell}(u)$ and $ C_{j,\ell}$ denote the $(k,\ell)^{\mbox{th}}$ element of $\hat{\C}_{N,T}(u)$ and $\C$, respectively. Hence
\begin{align*}
\sup_{0\leq u \leq 1}\left|\tilde{\lambda}_i(u)-\frac{\lf Tu\rf}{T}\lambda_i-\fe_i^\T(\tilde{\C}_{N,T}(u)-u\C)\fe_i-Z_{N,T;i}(u)
\right|
\leq c_{3,1}\sup_{0\leq u \leq 1}\Delta^{3}(u),
\end{align*}
where
$$
\Delta(u)=\max_{1\leq \ell \leq N} R_{N,T;\ell}(u)
$$
with
$$
R_{N,T;\ell}(u)=\left(  \sum_{j=1}^N(\tilde{C}_{N,T;j,\ell}(u)  -\frac{\lf Tu\rf}{T}C_{j,\ell})^2 \right)^{1/2},
$$
where $\tilde{C}_{N,T;j,\ell}(u)$ denotes the $(j,\ell)^{\mbox{th}}$ element of the matrix $\tilde{\C}_{N,T}(u)$.
By inequality (2.30) in Petrov (1995, p.\ 58) we conclude
$$
R_{N,T;\ell}^6(u)\leq N^2\sum_{j=1}^N\left(\tilde{C}_{N,T;j,\ell}(u)  -\frac{\lf Tu\rf}{T}C_{j,\ell}\right)^6
$$
and hence
$$
E\sup_{0\leq u \leq 1}\left( R_{N,T;\ell}(u)  \right)^6\leq N^2\sum_{j=1}^NE\sup_{0\leq u \leq 1}\left(\tilde{C}_{N,T;j,\ell}(u)- \frac{\lf Tu\rf}{T}C_{j,\ell}\right)^6.
$$
Using the definitions of $\tilde{C}_{N,T;j,\ell}(u)$ and $ C_{j,\ell}$ we write
\begin{align*}
\biggl(\tilde{C}_{N,T;j,\ell} &(u)-\frac{\lf Tu \rf}{T}C_{j,\ell} \biggl)^6\\
&=T^{-6}\left(\sum_{s=1}^{\lf Tu \rf}\left[\gamma_\ell\gamma_j(\eta_s^2-1)+\gamma_\ell\eta_se_{j,s}+\gamma_j\eta_se_{\ell,s}+e_{\ell,s}e_{j,s}-Ee_{\ell,s}e_{j,s}
\right]\right)^6\\
&\leq 4^6T^{-6}\left[\gamma_\ell^6\gamma_j^6\left(\sum_{s=1}^{\lf Tu\rf }(\eta_s^2-1)\right)^6+\gamma_\ell^6\left(\sum_{s=1}^{\lf Tu\rf}\eta_se_{j,s}\right)^6
+\gamma_j^6\left(\sum_{s=1}^{\lf Tu\rf}\eta_se_{\ell,s}\right)^6\right.\\
&\hspace{2cm}\left. +\left( \sum_{s=1}^{\lf Tu\rf}(e_{\ell,s}e_{j,s}- Ee_{\ell,s}e_{j,s})  \right)^6\right].
\end{align*}
Utilizing Assumption \ref{error-as}(a), we obtain along the lines of \eqref{ber-mom} that $E(\sum_{s=1}^t(\eta_s^2-1))^6\leq c_{3,2} t^3$, so by the stationarity of $\eta_t^2,-\infty<t<\infty$ and the maximal inequality of M\'oricz et al.\ (1982) we obtain that
$$
E\sup_{0\leq u \leq 1}\left(\sum_{s=1}^{\lf Tu\rf }(\eta_s^2-1)\right)^6\leq c_{3,3}T^3
$$
Similarly, for all $1\leq j, \ell\leq N$
$$
E\sup_{0\leq u \leq 1}\left(\sum_{s=1}^{\lf Tu\rf}\eta_se_{\ell,s}\right)^6\leq c_{3,4}T^3
\;\;\;\mbox{and}\;\;\;
E\sup_{0\leq u \leq 1}\left( \sum_{s=1}^{\lf Tu\rf}(e_{\ell,s}e_{j,s}- Ee_{\ell,s}e_{j,s})  \right)^6\leq c_{3,5}T^3.
$$
Hence for all $1\leq \ell\leq N$ we have by Assumption \ref{b-g} that
\beq\label{max-1}
E\left(\sup_{0\leq u \leq 1}R_{N,T;\ell}(u)\right)^6\leq c_{3,6}T^{-3}N^3.
\eeq
Using \eqref{max-1} we conclude for all $x>0$
\begin{align*}
P\left\{  \sup_{0\leq u \leq 1}\max_{1\leq \ell \leq N}R_{N,T;\ell}(u)>xN^{2/3}T^{-1/2}
\right\}
&\leq \sum_{\ell=1}^NP\left\{\sup_{0\leq u \leq 1}R_{N,T;\ell}(u)>xN^{2/3}T^{-1/2}
\right\} \\
&\leq \sum_{\ell=1}^N \frac{T^3}{x^6N^6}E\left(\sup_{0\leq u \leq 1}R_{N,T;\ell}(u)\right)^6\\
&\leq\sum_{\ell=1}^N \frac{T^3}{x^6N^6}C_5T^{-3}N^3,
\end{align*}
which shows that
\beq\label{D-1}
\sup_{0\leq u \leq 1}\Delta^3(u)=O_P(N^2T^{-3/2})\;\;\;\mbox{and}\;\;\;\sup_{0\leq u \leq 1}u\hat{\Delta}^3(u)=O_P(N^2T^{-3/2}).
\eeq
\end{proof}

\medskip
Since $\fe_1$ is defined via \eqref{ei-1} up to a sign, we can assume without loss of generality that $\bgamma^\T\fe_1\geq 0$.
\medskip

\begin{lemma}\label{lin-1} If \eqref{model-null}, Assumptions \ref{inc}, \ref{b-g}, and \ref{error-as} hold
and $\|\bgamma\|\to \infty$ hold, then we have
\beq\label{lin-11}
\left\|\fe_1-\frac{\bgamma}{\|\bgamma\|}\right\|=O\left(\frac{1}{\|\bgamma\|}\right),
\eeq
\beq\label{lin-123}
\frac{\lambda_1}{\|\bgamma\|^2}\to 1,
\eeq

\beq\label{lin-12}
\max_{2\leq i \leq N}|\bgamma^\T\fe_i|\leq c_{4,1}\;\;\mbox{with some constant}\;\;c_{4,1},
\eeq
and
\beq\label{lin-124}
\max_{2\leq i \leq N}\lambda_i\leq c_{4,2}\;\;\mbox{with some constant}\;\;c_{4,2}.
\eeq
\end{lemma}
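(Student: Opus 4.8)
The plan is to reduce every assertion to the elementary spectral analysis of a rank-one perturbation. Since centering does not change $\C$, write $\X_t=\bgamma\eta_t+\be_t$ with $\be_t=(e_{1,t},\ldots,e_{N,t})^\T$; Assumption \ref{error-as}(c), $E\eta_t=0$, $E\be_t={\bf 0}$, and $E\eta_t^2=1$ then give the exact identity
$$
\C=\cov(\X_t)=\bgamma\bgamma^\T+\Sigma_e,\qquad \Sigma_e:=\cov(\be_t).
$$
Because the idiosyncratic errors are cross-sectionally uncorrelated, $\Sigma_e$ is diagonal, and by Assumption \ref{error-as}(b) its entries are $\sigma_i^2\in[c_4,c_5]$, so its operator norm obeys $\|\Sigma_e\|\le c_5=O(1)$, while $\bgamma\bgamma^\T$ has a single nonzero eigenvalue $\|\bgamma\|^2$ with unit eigenvector $\bgamma/\|\bgamma\|$.

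For the eigenvalue bounds I would invoke Weyl's perturbation inequalities for the symmetric matrices $\bgamma\bgamma^\T$ and $\Sigma_e$. The bound $|\lambda_1-\|\bgamma\|^2|\le\|\Sigma_e\|\le c_5$ divided by $\|\bgamma\|^2\to\infty$ yields \eqref{lin-123}; moreover the quadratic form at $\bgamma/\|\bgamma\|$ gives $\lambda_1\ge\|\bgamma\|^2+c_4$. For $i\ge2$, since $\lambda_i(\bgamma\bgamma^\T)=0$, Weyl's inequality gives $\lambda_i\le\lambda_i(\bgamma\bgamma^\T)+\lambda_1(\Sigma_e)\le c_5$, which is \eqref{lin-124}. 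For $\|\bgamma\|$ large these also confirm that $\lambda_1$ is simple and widely separated from the rest of the spectrum, so $\fe_1$ is well defined up to sign.

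For the eigenvector I would use the Rayleigh quotient rather than a spectral-projection bound. Put $t=\bgamma^\T\fe_1/\|\bgamma\|$, which is nonnegative under the stated sign convention $\bgamma^\T\fe_1\ge0$ and satisfies $t\le1$. Evaluating $\lambda_1=\fe_1^\T\C\fe_1=\|\bgamma\|^2t^2+\fe_1^\T\Sigma_e\fe_1$ and using $\lambda_1=\|\bgamma\|^2+O(1)$ together with $\fe_1^\T\Sigma_e\fe_1\in[c_4,c_5]$ gives
$$
\|\bgamma\|^2t^2=\lambda_1-\fe_1^\T\Sigma_e\fe_1=\|\bgamma\|^2+O(1),
$$
hence $1-t^2=O(\|\bgamma\|^{-2})$ and, since $t\in[0,1]$, $1-t\le1-t^2=O(\|\bgamma\|^{-2})$. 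As $\fe_1$ and $\bgamma/\|\bgamma\|$ are unit vectors, $\|\fe_1-\bgamma/\|\bgamma\|\|^2=2(1-t)=O(\|\bgamma\|^{-2})$, which is \eqref{lin-11}. Claim \eqref{lin-12} then follows from Parseval's identity in the orthonormal basis $\{\fe_i\}$: since $\sum_{i=1}^N(\bgamma^\T\fe_i)^2=\|\bgamma\|^2$ and $(\bgamma^\T\fe_1)^2=\|\bgamma\|^2t^2=\|\bgamma\|^2-O(1)$, we obtain $\sum_{i\ge2}(\bgamma^\T\fe_i)^2=O(1)$, so each $|\bgamma^\T\fe_i|=O(1)$ for $2\le i\le N$.

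The genuinely substantive ingredient is the first paragraph: the identity $\C=\bgamma\bgamma^\T+\Sigma_e$ with a uniformly bounded $\|\Sigma_e\|$, which rests on $\Sigma_e$ being diagonal with variances bounded by $c_5$. Once this structure is in place the remaining steps are standard matrix perturbation theory, and the only point demanding care is retaining the sign convention $\bgamma^\T\fe_1\ge0$, so that $1-t$ rather than $1+t$ controls $\|\fe_1-\bgamma/\|\bgamma\|\|$.
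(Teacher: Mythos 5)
Your proposal is correct and follows essentially the same route as the paper: both rest on the exact decomposition $\C=\bgamma\bgamma^\T+\bLambda$ with $\bLambda$ diagonal and uniformly bounded, and your Rayleigh-quotient computation with $t=\bgamma^\T\fe_1/\|\bgamma\|$ is identical to the paper's expansion $\fe_1=\bar{\alpha}_1\bgamma/\|\bgamma\|+\bar{\beta}_1\br_1$ combined with $\lambda_1\geq\|\bgamma\|^2$. The only (cosmetic) differences are that you invoke Weyl's inequality for \eqref{lin-123} and \eqref{lin-124} where the paper evaluates the quadratic forms directly, and you get \eqref{lin-12} from Parseval rather than from the eigenvector proximity bound \eqref{lin-11}; both variants are valid and of the same depth.
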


\begin{proof} By \eqref{model-null} we have
$$
\C=\bgamma\bgamma^T+\bLambda,
$$
where $\bLambda$ is the $N\times N$ diagonal matrix with $\sigma_1^2, \sigma_2^2, \ldots, \sigma_N^2$ in the diagonal. We can write
$$
\fe_1=\bar{\alpha}_1\frac{\bgamma}{\|\bgamma\|}+\bar{\beta}_1\br_1,\;\;\mbox{with some}\;\;\bar{\alpha}_1\geq 0,\;\;\mbox{where}\;\; \bar{\alpha}_1^2+\bar{\beta}_1^2=1, \bgamma^\T\br_1=0\;\;\mbox{and}\;\;\|\br_1\|=1.
$$
It follows from the definition of $\lambda_1$ and $\fe_1$ that
\begin{align*}
\lambda_1=\fe_1^\T\C\fe_1\geq \|\bgamma\|^2
\end{align*}
and
\begin{align}\label{lin-125}
\fe_1^\T\C\fe_1=\bar{\alpha}_1^2\|\bgamma\|^2+\fe_1^\T\bLambda\C\fe_1,\;\;\fe_1^\T\bLambda\C\fe_1\leq \sum_{\ell=1}^N\fe_i^2(\ell)\sigma_\ell^2\leq c_5
\end{align}
where $c_5$ is defined in  Assumption \ref{error-as}(b). Thus we conclude
$$
\|\bgamma\|^2\leq \bar{\alpha}_1^2\|\bgamma\|^2+c_5.
$$
By assumption $\bgamma^\T\fe_1\geq 0$ and therefore $0\leq \bar{\alpha}_1\leq 1$. Hence  $(1-\bar{\alpha}_1)^2\leq 1-\bar{\alpha}_1^2\leq c_5/\|\bgamma\|^2$ and $\bar{\beta}_1^2\leq c_5/\|\bgamma\|^2$.  Thus we get
\beq\label{lin-13}
\left\|\fe_1-\frac{\bgamma}{\|\bgamma\|}\right\|^2=(1-\bar{\alpha}_1)^2+\bar{\beta}_1^2\leq \frac{2c_5}{\|\bgamma\|^2},
\eeq
completing the proof of \eqref{lin-11}. Since $\bar{\alpha}_2^2\geq 1-c_5/\|\bgamma|^2$, \eqref{lin-123} follows from \eqref{lin-125}.
 For all $i\geq 2$ we have
$$
|\bgamma^\T\fe_i|=\|\bgamma\|\left|\left(\frac{\bgamma}{\|\bgamma\|}-\fe_1\right)^\T\fe_i\right|\leq \|\bgamma\|\left\|\frac{\bgamma}{\|\bgamma\|}-\fe_1\right\|\leq 2c_5
$$
by \eqref{lin-13} which gives \eqref{lin-12}. Since $\lambda_i=\fe_i^\T\C\fe_i=(\fe_i^\T\bgamma)^2 +\fe_i^\T\bLambda\fe_i$ and $\fe_i^\T\bLambda\fe_i= \sum_{\ell=1}^N\fe_i^2(\ell)\sigma^2_\ell\leq c_5$ by Assumption \ref{error-as}(b), the last claim of this lemma follows from \eqref{lin-12}.
\end{proof}

\begin{lemma}\label{hall-2} If \eqref{model-null}, and Assumptions \ref{inc}, \ref{b-g}, and \ref{error-as} hold, then we have
\beq\label{Z-1}
\max_{1\leq i \leq K}\sup_{0\leq u\leq 1}|Z_{N,T;i}(u)|=O_P\left(\frac{N(\log T)^{1/3}}{T}\right).
\eeq
\end{lemma}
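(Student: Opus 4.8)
The plan is to bound the second-order perturbation term
$$
Z_{N,T;i}(u)=\sum_{\ell\neq i}^N\frac{1}{u(\lambda_i-\lambda_\ell)}\left(\fe_i^\T(\tC_{N,T}(u)-u\C)\fe_\ell\right)^2
$$
uniformly in $u$ and $1\le i \le K$. The key structural observation is that the denominator $u(\lambda_i-\lambda_\ell)$ cannot be controlled by Assumption \ref{inc} alone (which only separates the top $K$ eigenvalues), so I would first pass to the matrix $\C_{N,T}(u)$ of Lemma \ref{rem-m}, replacing $\tC_{N,T}(u)-u\C$ by $\C_{N,T}(u)-u\C$ at the cost of the negligible $O_P(N/T)$ term, and then rewrite the quadratic form in terms of the centered coordinate processes. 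The core quantity is
$$
M_{i,\ell}(u)=\fe_i^\T(\C_{N,T}(u)-u\C)\fe_\ell=\frac{1}{T}\sum_{s=1}^{\lf Tu\rf}\left(\fe_i^\T\X_s\X_s^\T\fe_\ell-E\,\fe_i^\T\X_s\X_s^\T\fe_\ell\right),
$$
a centered partial sum of a weakly dependent stationary sequence.

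**Strategy for the denominator.** The main obstacle is that for $\ell>K$ the gap $\lambda_i-\lambda_\ell$ may be small, so $1/(\lambda_i-\lambda_\ell)$ need not be bounded. The plan is to split the sum over $\ell\neq i$ into $\ell\le K$ (finitely many terms, each with gap bounded below by $c_0$ via Assumption \ref{inc}) and $\ell>K$. For the large-$\ell$ block I would not try to bound each reciprocal; instead I would use the completeness of the eigenbasis to write $\sum_{\ell>K}(\fe_i^\T(\C_{N,T}(u)-u\C)\fe_\ell)^2 \le \|(\C_{N,T}(u)-u\C)\fe_i\|^2$, absorbing the denominator into a single factor. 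The correct device is that $\lambda_i-\lambda_\ell\ge \lambda_K-\lambda_{K+1}\ge c_0$ for $\ell>K$ and $i\le K$ by Assumption \ref{inc} applied at index $K$, so in fact every denominator in the tail is bounded below by $c_0$ as well, giving
$$
|Z_{N,T;i}(u)|\le \frac{1}{u\,c_0}\sum_{\ell\neq i}^N M_{i,\ell}^2(u)\le \frac{1}{u\,c_0}\,\fe_i^\T(\C_{N,T}(u)-u\C)^2\fe_i\le \frac{1}{u\,c_0}\|(\C_{N,T}(u)-u\C)\fe_i\|^2.
$$
The troublesome factor $1/u$ is exactly what produces the $(\log T)^{1/3}$ correction, since near $u=0$ the partial sum has only $\lf Tu\rf$ terms.

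**Moment bound and maximal inequality.** The heart of the argument is to show that $\sup_{0\le u\le1}u^{-1}\|(\C_{N,T}(u)-u\C)\fe_i\|^2=O_P(N(\log T)^{1/3}/T)$. Expanding $\X_s=\gamma\eta_s+\be_s$ under \eqref{model-null}, the coordinate $(\C_{N,T}(u)-u\C)\fe_i$ is a sum over $j$ of partial-sum processes of the form already analyzed in Lemma \ref{hall}, namely $T^{-1}\sum_{s\le\lf Tu\rf}[\gamma_\ell\gamma_j(\eta_s^2-1)+\gamma_\ell\eta_se_{j,s}+\gamma_j\eta_se_{\ell,s}+(e_{\ell,s}e_{j,s}-Ee_{\ell,s}e_{j,s})]$. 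Using the same Berkes–Hörmann–Schauer moment bounds \eqref{ber-mom} and Rosenthal's inequality that underlie \eqref{ber-mom-2}, each block of the squared norm has a sixth-moment bound of order $(\lf Tu\rf)^3$ times a factor accounting for the $N$ summands and the bounded loadings $|\gamma_j|\le c_1$. The plan is then to invoke the Móricz et al.\ (1982) maximal inequality to pass from fixed $u$ to the supremum; the $(\log T)^{1/3}$ factor arises from a dyadic/chaining argument over the scale of $u$, where one pays a logarithmic price to control the increments uniformly down to $u=1/T$ against the $1/u$ weight. I would organize this as a union bound over $O(\log T)$ dyadic blocks $u\in[2^{-k-1},2^{-k}]$, on each of which $u^{-1}$ is comparable to $2^k$ while the partial sum has $\asymp T2^{-k}$ terms, so the contribution is balanced and the number of blocks contributes the logarithm, with the cube root coming from optimizing the exponent against the available twelve moments. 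The main technical obstacle is precisely this uniform control near the origin: establishing the maximal inequality with the sharp $(\log T)^{1/3}$ rate rather than a cruder logarithmic power, which is what forces the twelve-moment assumption in Assumption \ref{error-as}.
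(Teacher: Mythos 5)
Your overall architecture coincides with the paper's: bound the spectral gaps from below via Assumption \ref{inc} (noting, as you do, that $|\lambda_i-\lambda_\ell|\ge c_0$ for all $i\le K$ and $\ell\neq i$, including $\ell>K$, since $\lambda_K-\lambda_{K+1}\ge c_0$), reduce to the centered partial sums $T^{-1}\sum_{s\le \lf Tu\rf}\fe_i^\T\X_s\X_s^\T\fe_\ell$, obtain sixth--moment bounds of order $v^3$ via the Bernoulli--shift approximation and Rosenthal's inequality, apply the M\'oricz--Serfling--Stout maximal inequality, and control the $1/u$ weight by a geometric blocking of $v=\lf Tu\rf$ into $O(\log T)$ ranges, which after integrating the resulting tail bound $P(Y>y)\le c(\log T)y^{-6}$ yields $EY^2=O((\log T)^{1/3})$. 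This is exactly the paper's route in the regime $\|\bgamma\|=O(1)$.

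However, there is a genuine gap: your argument does not cover the case $\|\bgamma\|\to\infty$, which Assumption \ref{b-g} permits (it only bounds the individual loadings, so $\|\bgamma\|$ can be of order $N^{1/2}$). The sixth--moment bound for $\sum_{s\le v}\fe_i^\T\X_s\X_s^\T\fe_\ell$ is not $c\,v^3$ but $c\,v^3(1+\|\bgamma\|^6+\|\bgamma\|^{12})$, because the leading term $\bgamma^\T\fe_i\,\bgamma^\T\fe_\ell\sum_s(\eta_s^2-1)$ carries the factor $(\bgamma^\T\fe_1)^2\approx\|\bgamma\|^2$ whenever one of the two eigenvectors is $\fe_1$. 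Your step of replacing every denominator by the crude lower bound $c_0$ then gives, for $i=1$, a bound of order $N\|\bgamma\|^2(\log T)^{1/3}/T$ --- off by a factor of up to $N$. The paper repairs this by treating the terms involving $\fe_1$ separately and using the actual size of the gap: by Lemma \ref{lin-1}, $\lambda_1-\lambda_2\sim\|\bgamma\|^2$ and $(\fe_1^\T\bgamma)^2/(\lambda_1-\lambda_2)=O(1)$, so the large numerator is cancelled by the large denominator rather than by $c_0$. Your proposal is missing both this case split and the cancellation, and your remark that the ``bounded loadings $|\gamma_j|\le c_1$'' control the relevant factor is where the error enters: the quantity that appears in the moment bounds is $\|\bgamma\|$, not $\max_j|\gamma_j|$.
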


\begin{proof}
It follows from \eqref{ei-1} that $\fe_i\C\fe_\ell=0$, if $i\neq \ell$. Hence we get
\begin{align*}
\fe_i^\T\left(\tilde{\C}_{N,T}(u)-\frac{\lf Tu\rf}{T}\C\right)\fe_\ell&=\frac{1}{T}\sum_{s=1}^{\lf Tu\rf}\fe_i^\T\X_s\X_s^\T\fe_\ell.
\end{align*}
First we assume that $\|\bgamma\|=O(1)$.  It follows from the definition of $Z_{N,T;i}$ that
\begin{align*}
|Z_{N,T;i}(u)|&=\left|\sum_{\ell\neq i}^N\frac{1}{u(\lambda_i-\lambda_\ell)}\left(\fe_i^\T(\tC_{N,T}(u)-u\C)\fe_\ell  \right)^2\right|\\
&\leq \frac{1}{c_5}\frac{1}{T}\sum_{\ell\neq i}^N\left(\frac{1}{(Tu)^{1/2}}\sum_{s=1}^{\lf Tu\rf}\fe_i^\T \X_s\X_s^\T\fe_\ell\right)^2,
\end{align*}
where $c_0$ is defined in Assumption \ref{inc}. Let $\rho>1$ and write with $c=\lf 1/\log \rho\rf +1$
\begin{align*}
\max_{1\leq v \leq T}v^{-1/2}\left|\sum_{s=1}^{v}\fe_i^\T\X_s\X_s^\T\fe_\ell\right|&\leq \max_{1\leq k\leq c\log T}\max_{\rho^{k-1}<v\leq \rho^k}v^{-1/2}\left|\sum_{s=1}^{v}\fe_i^\T\X_s\X_s^\T\fe_\ell\right|\\
&\leq \max_{1\leq k\leq c\log T}\rho^{-(k-1)/2}\max_{1\leq v\leq \rho^k}\left|\sum_{s=1}^{v}\fe_i^\T\X_s\X_s^\T\fe_\ell\right|.
\end{align*}
Thus we get for any $x>0$ via Markov's inequality that
\begin{align}\label{maxx}
P&\left\{\max_{1\leq v \leq T}v^{-1/2}\left|\sum_{s=1}^{v}\fe_i^\T\X_s\X_s^\T\fe_\ell\right|>x\right\}\\
&\hspace{1cm}\leq
\sum_{k=1}^{c\log T}P\left\{\max_{1\leq v\leq \rho^k}\left|\sum_{s=1}^{v}\fe_i^\T\X_s\X_s^\T\fe_\ell\right|>x\rho^{(k-1)/2}
\right\}\notag\\
&\hspace{1cm}\leq \sum_{k=1}^{c\log T}x^{-6}\rho^{-3(k-1)}E\left(   \max_{1\leq v\leq \rho^k}\left|\sum_{s=1}^{v}\fe_i^\T\X_s\X_s^\T\fe_\ell\right|    \right)^6.\notag
\end{align}
Using \eqref{model-null} we obtain with $\fe_i=(\fe_i(1), \fe_i(2), \ldots, \fe_i(N))^\T$ that
\begin{align*}
\fe_i^\T\X_s\X_s^\T\fe_\ell
=&\sum_{k=1}^N\gamma_{k}\fe_i(k)\sum_{n=1}^N\gamma_{n}\fe_\ell(n)(\eta^2_s-1)
+\sum_{k=1}^N\gamma_{k}\fe_i(k)\eta_s\sum_{n=1}^Ne_{n,s}\fe_{\ell}(n)\\
&+\sum_{n=1}^N\gamma_{n}\fe_\ell(n)\eta_s\sum_{k=1}^Ne_{k,s}\fe_{i}(k)+\sum_{n=1}^N\sum_{k=1}^N(e_{k,s}e_{n,s}-Ee_{k,s}e_{n,s})\fe_{i}(k)\fe_\ell(n),
\end{align*}
since for $i\neq \ell$ we have $E\fe_i^\T\X_s\X_s^\T\fe_\ell=\fe_i^\T\C\fe_\ell=0$. Clearly, on account of $\|\fe_i\|=1$, the Cauchy--Schwarz inequality implies
\begin{align*}
\left|\sum_{k=1}^N\gamma_{k}\fe_i(k)\right|\leq \|\bgamma\|.
\end{align*}
Following the proofs of \eqref{ber-mom}, we get that from Assumption \ref{error-as}(a)  that
\begin{align}\label{fo-1}
E\left|\sum_{k=1}^N\gamma_{k}\fe_i(k)\sum_{m=1}^N\gamma_{m}\fe_m(\ell)\sum_{s=1}^v(\eta^2_s-1)\right|^6
\leq c_{5,1}v^{3}\|\bgamma\|^{12}.
\end{align}
Let
$$
\tau_{s}=\eta_s\sum_{n=1}^Ne_{n,s}\fe_\ell(n)\;\;\;\mbox{and}\;\;\;\tau_{s}^{(m)}=\eta_s^{(m)}\sum_{n=1}^Ne_{n,s}^{(m)}\fe_\ell(n),
$$
where $\eta_s^{(m)}$ and $e_{n,s}^{(m)}$ are defined in Assumption \ref{error-as}(a) and Assumption \ref{error-as}(b), respectively. By independence we have
\begin{align*}
E&\left|\tau_{0}-\tau_0^{(m)}\right|^{6}\\
&\leq 2^{6} E\left|\eta_0-\eta_0^{(m)}\right|^{6} E\left|\sum_{n=1}^Ne_{n,0}\fe_\ell(n)\right|^{6}
+2^{6} E\left|\eta_0^{(m)}\right|^{6} E\left|\sum_{n=1}^N(e_{n,0}-e_{n,0}^{(m)})\fe_\ell(n)\right|^{6}.
\end{align*}
By the independence of the variables $e_{n,0}, 1\leq n \leq N$  and the Rosenthal inequality (cf.\ Petrov (1995)) we conclude
\begin{align*}
E\left|\sum_{n=1}^Ne_{n,0}\fe_\ell(n)\right|^6 &\leq c_{5,2}\left\{\sum_{n=1}^NE|e_{n,0}|^6 |\fe_\ell(n)|^6+
\left(\sum_{n=1}^NE e_{n,0}^2\fe_\ell^2(n)  \right)^{3}
\right\}\\
&\leq c_{5,3}\sup_{1\leq n<\infty}Ee^6_{n,0}\\
&\leq c_{5,4},
\end{align*}
where $c_{5,4}$ is a constant, on account of Assumption \ref{error-as}(b) and $\|\fe_\ell\|=1$. Due to the independence of $e_{n,0}-e_{n,0}^{(m)}$ and $ e_{r,0}-e_{r,0}^{(m)}$, if $n\neq r$, we can apply again the Rosenthal inequality to get
\begin{align*}
E&\left|\sum_{n=1}^N(e_{n,0}-e_{n,0}^{(m)})\fe_\ell(n)\right|^6\\
&\leq c_{5,5}\left\{\sum_{n=1}^NE|e_{n,0}-e_{n,0}^{(m)}|^6 |\fe_\ell(n)|^6
+\left(\sum_{n=1}^NE(e_{n,0}-e_{n,0}^{(m)})^2\fe_\ell^2(n)\right)^{3}
\right\}\\
&\leq c_{5,6}m^{-6\alpha},
\end{align*}
resulting in
\begin{align}\label{hor-1}
E\left|\tau_{0}-\tau_0^{(m)}\right|^6\leq c_{5,7}m^{-6\alpha} .
\end{align}
Hence the moment inequality in Berkes et al.\  (2011) yields
\beq\label{b-1}
E\left|\sum_{s=1}^v\tau_s\right|^6\leq c_{5,8}v^{3}.
\eeq
Similarly to \eqref{b-1} we have
\beq\label{b-2}
E\left|\sum_{s=1}^v \eta_s\sum_{k=1}^Ne_{k,s}\fe_{i}(k) \right|^6\leq c_{5,9}v^{3}.
\eeq
Let
$$
\bar{\tau}_s=\sum_{n=1}^N\sum_{k=1}^N(e_{k,s}e_{n,s}-Ee_{k,s}e_{n,s})\fe_{i}(k)\fe_\ell(n)=\sum_{n=1}^Ne_{n,s}\fe_\ell(n)\sum_{k=1}^Ne_{k,s}\fe_{i}(k)
-\sum_{n=1}^NEe_{n,s}^2\fe_{i}(n)\fe_\ell(n)
$$
and
\begin{align*}
\bar{\tau}_s^{(m)}
&=\sum_{n=1}^N\sum_{k=1}^N(e_{k,s}^{(m)}e_{n,s}^{(m)}-Ee_{k,s}e_{n,s})\fe_{i}(k)\fe_\ell(n)\\
&=\sum_{n=1}^Ne_{n,s}^{(m)}\fe_\ell(n)\sum_{k=1}^Ne_{k,s}^{(m)}\fe_{i}(k)
-\sum_{n=1}^NEe_{n,s}^2\fe_{i}(n)\fe_\ell(n),
\end{align*}
where $e_{n,s}^{(m)}$ defined in Assumption \ref{error-as}(b). Clearly,
$$
\left|\sum_{n=1}^NEe_{n,s}^2\fe_{i}(n)\fe_\ell(n)\right|\leq \sup_{1\leq n<\infty}Ee_{n,0}^2,
$$
and
\begin{align*}
\bar{\tau}_s-\bar{\tau}_s^{(m)}=\left(\sum_{n=1}^N(e_{n,s}-e_{k,s}^{(m)})\fe_\ell(n)\right)\sum_{k=1}^Ne_{k,s}\fe_{i}(k)+\left(\sum_{k=1}^N(e_{k,s}-e_{k,s}^{(m)} )\fe_{i}(k)\right)\sum_{n=1}^Ne_{n,s}^{(m)}\fe_\ell(n).
\end{align*}
Thus we get by the Cauchy--Schwarz inequality that
\begin{align*}
E|\bar{\tau}_0-\bar{\tau}_0^{(m)}|^{6}\leq 2^{6} &\left\{\left(E\left| \sum_{n=1}^N(e_{n,0}-e_{k,0}^{(m)})\fe_\ell(n)\right|^{12}
E\left| \sum_{k=1}^Ne_{k,s}\fe_{i}(k)    \right|^{12}\right)^{1/2} \right.\\
& \hspace{1cm} +E\left. \left(\left|\sum_{k=1}^N(e_{k,0}-e_{k,0}^{(m)} )\fe_{i}(k)\right|^{12}E\left|\sum_{n=1}^Ne_{n,0}^{(m)}\fe_\ell(n)\right|^{12}\right)^{1/2}
\right\}.
\end{align*}
Using again  Rosenthal's and  Jensen's  inequalities, we obtain that
\begin{align*}
E\left| \sum_{n=1}^N(e_{n,0}-e_{k,0}^{(m)})\fe_\ell(n)\right|^{12}&\leq c_{5,10}\left\{\sum_{n=1}^NE|e_{n,0}-e_{k,0}^{(m)}|^{12}|\fe_\ell(n)|^{12}\right. \\
& \left. \hspace{1cm}+
\left( \sum_{n=1}^NE(e_{n,0}-e_{k,0}^{(m)})^2\fe_\ell^2(n)\right)^{6}
\right\}\\
&\leq c_{5, 11}m^{-12\alpha},
\end{align*}
and similarly
\begin{align*}
E\left| \sum_{k=1}^Ne_{k,0}\fe_{i}(k)    \right|^{12}&\leq c_{5, 12}\left\{\sum_{k=1}^NE|e_{k,0}|^{12}|\fe_{i}(k) |^{12}
+\left( \sum_{k=1}^NEe_{k,0}^2\fe_{i}^2(k)    \right)^{6}
\right\}\\
&\leq 2 c_{5,12}\sup_{1\leq k <\infty}E|e_{k,0}|^{12}.
\end{align*}
Thus we have
\beq\label{hor-2}
E|\bar{\tau}_0-\bar{\tau}_0^{(m)}|^{6}\leq c_{5, 13}m^{-6\alpha},
\eeq
and therefore Proposition 4 of Berkes et al.\ (2011) implies
\beq\label{b-3}
E\left|\sum_{s=1}^v\bar{\tau}_s\right|^6\leq c_{5, 14}v^3.
\eeq
Putting together \eqref{fo-1}--\eqref{b-3} we conclude
\beq\label{mor-1}
E\left|\sum_{s=1}^v\fe_i^\T\X_s\X_s^\T\fe_\ell\right|^6\leq c_{5, 15}v^3(1+\|\bgamma\|^6+\|\bgamma\|^{12}).
\eeq
Since $\fe_i^\T\X_s\X_s^\T\fe_\ell, -\infty<s<\infty$ is a stationary sequence, \eqref{mor-1} and the maximal inequality of M\'oricz et al. (1982) imply
\begin{align}\label{ineq**}
E\max_{1\leq v\leq z}\left|\sum_{s=1}^v\fe_i^\T\X_s\X_s^\T\fe_\ell\right|^6\leq c_{5, 16}z^3(1+\|\bgamma\|^6+\|\bgamma\|^{12}).
\end{align}
Now we use \eqref{maxx} with $x=u(\log T)^{1/6}$ resulting in
\begin{align*}
P\left\{\max_{1\leq v \leq T}v^{-1/2}\left|\sum_{s=1}^{v}\fe_i^\T\X_s\X_s^\T\fe_\ell\right|>u(\log T)^{1/6}\right\}\leq c_{5, 17}u^{-6},
\end{align*}
implying
$$
E\left(\max_{1\leq v \leq T}v^{-1/2}\sum_{s=1}^{v}\fe_i^\T\X_s\X_s^\T\fe_\ell\right)^2\leq c_{5, 18}(\log T)^{1/3}.
$$
This completes the proof of  \eqref{Z-1}.\\

Next we assume that $\|\bgamma\|\to \infty$. It is easy to see that for for  $2\leq i\leq K$
\begin{align*}
|Z_{N,T;i}(u)|\leq \frac{1}{T}&\left|\sum_{\ell\neq i, \ell\neq 1}^N\frac{1}{\lambda_i-\lambda_\ell}\left(\frac{1}{(Tu)^{1/2}}\sum_{s=1}^{\lf Tu\rf} \fe_i^\T\X_s\X_s^\T\fe_\ell^\T\right)^2\right|\\
&+\frac{1}{T}\frac{1}{\lambda_1-\lambda_2}\left(\frac{1}{(Tu)^{1/2}}\sum_{s=2}^{\lf Tu\rf} \fe_i^\T\X_s\X_s^\T\fe_1\right)^2.
\end{align*}
If  $2\leq i \leq K$, then the proof of \eqref{mor-1} shows that
$$
\sum_{\ell\neq i, \ell\neq 1}^N\left(\frac{1}{(Tu)^{1/2}}\sum_{s=1}^{\lf Tu\rf} \fe_i^\T\X_s\X_s^\T\fe_\ell\right)^2=O_P(N(\log T)^{1/3}),
$$
and therefore by Assumption \ref{inc} for any $2\leq i \leq K$ we have
\begin{align*}
\left|\sum_{\ell\neq i, \ell\neq 1}\frac{1}{\lambda_i-\lambda_\ell}\left(\frac{1}{(Tu)^{1/2}}\sum_{s=1}^{\lf Tu\rf} \fe_i^\T\X_s\X_s^\T\fe_\ell^\T\right)^2\right|=O_P(N(\log T)^{1/3}).
\end{align*}
By \eqref{b-3} we have along the lines of the proof of \eqref{maxx}
\begin{align}\label{horem-1}
E\sup_{1\leq v \leq T}\frac{1}{v}&\left(\sum_{s=1}^v\fe_i^\T\X_s\X_s^\T\fe_\ell -\bgamma^\T\fe_i\bgamma^\T\fe_\ell\sum_{s=1}^v(\eta_s^2-1)-
\bgamma^\T\fe_i\sum_{s=1}^v\sum_{n=1}^Ne_{n,s}\fe_\ell(n)  \right.\\
&\hspace{2cm}\left.-\bgamma^\T\fe_\ell\sum_{s=1}^v\sum_{k=1}^Ne_{k,s}\fe_i(k)\right)^2\leq c_{5, 19}(\log T)^{1/3},\notag
\end{align}
where in the last step we used \eqref{lin-12}.
 Also, \eqref{b-1} and \eqref{b-2} imply via the maximal inequality in M\'oricz et al.\ (1982) that
\beq\label{horem-2}
E\sup_{1\leq v \leq T}\left(\frac{1}{v}\sum_{s=1}^v(\eta_s^2-1)\right)^2\leq c_{5,20}(\log T)^{1/3},
\eeq
and
\beq\label{horem-3}
E\sup_{1\leq v \leq T}\frac{1}{v}\left( \sum_{s=1}^v\sum_{k=1}^Ne_{k,s}\fe_i(k)\right)^2\leq c_{5, 21}(\log T)^{1/3}.
\eeq
Using now  \eqref{horem-2} and \eqref{horem-3} we conclude that
\begin{align*}
\frac{1}{\lambda_1-\lambda_2}\left(\frac{1}{(Tu)^{1/2}}\sum_{s=1}^{\lf Tu\rf} \fe_i^\T\X_s\X_s^\T\fe_1\right)^2=\frac{(\fe_1^\T\bgamma)^2}{\lambda_1-\lambda_2}O_P((\log T)^{1/3}).
\end{align*}
Since by Lemma \ref{lin-1} we have that $(\fe_1^\T\bgamma)^2/(\lambda_1-\lambda_2)=O(1)$, the proof of \eqref{Z-1} is complete when $2\leq i \leq K$. It is easy to see that by \eqref{horem-1}  and Lemma \ref{lin-1}
\begin{align*}
\sup_{0\leq u \leq 1}|Z_{N,T;1}(u)|&\leq \frac{1}{T}\frac{1}{\lambda_1-\lambda_2}\sup_{0\leq u \leq 1}\sum_{\ell=2}^N\left(\frac{1}{(Tu)^{1/2}}\sum_{s=1}^{\lf Tu\rf}\fe_1^\T\X_s\X_s^\T\fe_\ell\right)^2\\
&=\frac{1}{T}\frac{N}{\lambda_1-\lambda_2}\left(O_P((\log T)^{1/3} +(\fe_1^\T\gamma)^2 E\max_{1\leq v \leq T}\left(v^{-1/2}\sum_{s=1}^v(\eta_s^2-1)\right)^2\right.\\
&\left.\hspace{1cm}
+E\max_{2\leq i\leq N}\left(v^{-1/2}\sum_{s=1}^v\sum_{k=1}^Ne_{k,s}\fe_i(k)\right)^2 \right)\\
&=\frac{(\fe_1^\T\gamma)^2}{\lambda_1-\lambda_2}\frac{N(\log T)^{1/3}}{T}
\end{align*}
an account of \eqref{horem-2} and \eqref{horem-3}. According to Lemma \ref{lin-1} we have that $(\fe_1^\T\gamma)^2/(\lambda_1-\lambda_2)=O(1)$, completing the proof of Lemma \ref{hall-2}.

\end{proof}

\medskip
Using the definition of $\tilde{\C}_{N,T}(u)$ and \eqref{model-null} we get for any $1\leq i \leq K$,
\begin{align*}
T\fe_i^\T(\tilde{\C}_{N,T}(u)-(\lf Tu\rf/T)\C)\fe_i=(\fe_i^\T\bgamma)^2&\sum_{t=1}^{\lf Tu\rf}(\eta_t^2-1)+2\fe_i^T\bgamma\sum_{t=1}^{\lf Tu\rf}\eta_t\sum_{\ell=1}^N\fe_i(\ell)e_{\ell,t}\\
&+\sum_{t=1}^{\lf Tu\rf}\left(\sum_{\ell=1}^N\fe_i(\ell)e_{\ell,t}\right)^2-\lf Tu\rf\sum_{\ell=1}^N\fe_i^2(\ell)\sigma^2_{\ell}.
\end{align*}
Let
$$
D_{N,T}(u)=\frac{1}{T^{1/2}}\sum_{t=1}^{\lf Tu\rf}(\eta_t^2-1),\;\;\;F_{N,T;i}(u)=\frac{1}{T^{1/2}}\sum_{t=1}^{\lf Tu\rf}\eta_t\sum_{\ell=1}^N\fe_i(\ell)e_{\ell,t},\;1\leq i\leq K,
$$
and
$$
G_{N,T;i}(u)=\frac{1}{T^{1/2}}\left\{\sum_{t=1}^{\lf Tu\rf}\left(\sum_{\ell=1}^N\fe_i(\ell)e_{\ell,t}\right)^2-\lf Tu\rf\sum_{\ell=1}^N\fe_i^2(\ell)\sigma^2_{\ell}\right\},\;1\leq i\leq K.
$$
\begin{lemma}\label{clt-lem} If  \eqref{model-null} and Assumptions \ref{inc}, \ref{b-g}, and \ref{error-as} hold, then $\{D_{N,T}(u), F_{N,T;i}(u), G_{N,T;i}(u), 0\leq u \leq 1, 1\leq i \leq K\}$ converges in ${\mathcal D}^{2K+1}[0,1]$ to the Gaussian process $\bGamma(u)=(\Gamma_1(u), \Gamma_2(u),\ldots, \\
\Gamma_{2K+1}(u))^\T,0\leq u \leq 1$, $E\bGamma(u)={\bf 0}$, and
\begin{displaymath}
E\bGamma(u)\bGamma^\T(u')=\min(u,u')
\left(
\begin{array}{ll}
V_1\; &{\bf 0}^\T\;\;\;\;\;\;\;{\bf 0}^\T
\vspace{.3cm}\\
{\bf 0}\;&\V_2\;\;\;\;\;\bO
\vspace{.3cm}\\
{\bf 0}\;&\bO\;\;\;\;\;\;\;\;\V_3
\end{array}
\right)
\end{displaymath}
\end{lemma}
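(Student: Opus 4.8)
The plan is to establish the stated weak convergence by the standard two-step route for partial-sum processes: first prove convergence of the finite-dimensional distributions to the claimed Gaussian law, and then prove tightness of the $(2K+1)$-dimensional process in ${\mathcal D}^{2K+1}[0,1]$. The structural observation underlying both steps is that the three families of summands, namely $\eta_t^2-1$, $\eta_t\sum_{\ell=1}^N\fe_i(\ell)e_{\ell,t}$, and $W_{i,t}^2-EW_{i,t}^2$ with $W_{i,t}=\sum_{\ell=1}^N\fe_i(\ell)e_{\ell,t}$, are themselves centered, row-wise stationary Bernoulli shifts, so that each coordinate process is a normalized partial sum to which an invariance principle for $L^p$--$m$--approximable sequences applies. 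The only nonstandard feature is that these summands form a triangular array indexed by $N=N(T)$, so every moment bound and every central limit condition must hold uniformly in $N$. This uniformity is precisely what the computations already carried out in the proof of Lemma \ref{hall-2} supply: the estimates \eqref{b-1}, \eqref{b-2}, and \eqref{b-3} give dimension-free sixth-moment bounds for the $F$-- and $G$--summands because the normalization $\|\fe_i\|=1$, together with the cross-sectional independence of the errors $\{e_{\ell,t}\}$ and the Rosenthal inequality, controls the high-dimensional inner sums.

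For the finite-dimensional distributions I would fix time points $0\le u_1<\cdots<u_r\le 1$, pass to increments, and invoke the Cram\'er--Wold device: an arbitrary linear combination of the increments is again a normalized partial sum of a scalar stationary Bernoulli shift with finite $L^6$ (hence $L^2$) norm, so a central limit theorem for weakly dependent sequences yields asymptotic normality. The covariance of the limit is the long-run covariance matrix of the joint summand vector, and its block-diagonal form is immediate from Assumption \ref{error-as}(c): since $\{\eta_t\}$ and $\{e_{\ell,t}\}$ are independent with $E\eta_t=0$, $E\eta_t^2=1$, and $Ee_{\ell,t}=0$, every cross-block lag-covariance factorizes into a product one of whose factors vanishes (either because the two blocks are outright independent, or because of the centering $EW_{i,t}=0$ or $E\eta_t=0$). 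The diagonal blocks then reduce, after expanding the quadratic forms and using $\cov(e_{k,0},e_{n,s})=0$ for $k\neq n$, to $V_1=\sum_{s}\cov(\eta_0^2,\eta_s^2)$, to $\V_2$, and to $\V_3$; the $\V_3$ entries arise from the fourth-moment expansion of $\cov(W_{i,0}^2,W_{j,s}^2)$, which produces exactly the three terms in \eqref{v-def-3}. Existence of the limits defining $\V_2$ and $\V_3$, and convergence of the finite-$T$ long-run covariances to them, is guaranteed by the summable $m$-approximation rates in Assumption \ref{error-as}.

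For tightness I would verify a moment criterion on the increments of each coordinate. Applying Proposition 4 of Berkes et al. (2011) in the form already used for \eqref{ber-mom} and \eqref{b-3}, together with the uniform-in-$N$ bounds \eqref{b-1}--\eqref{b-3} and the companion bound $E(\sum_{s=1}^{t}(\eta_s^2-1))^6\le c\,t^3$ established in the proof of Lemma \ref{hall}, one obtains for each coordinate process $Y_{N,T}\in\{D_{N,T},F_{N,T;i},G_{N,T;i}\}$ an estimate $E|Y_{N,T}(u)-Y_{N,T}(v)|^6\le c\,((\lf Tu\rf-\lf Tv\rf)/T)^{3}$ for $0\le v\le u\le 1$, with $c$ independent of $N$ and $T$. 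Since the exponent $6$ and the power $3>1$ meet the hypotheses of the standard moment criterion for tightness in ${\mathcal D}[0,1]$, each coordinate is tight, and tightness of the joint process follows from tightness of its coordinates. Combining the finite-dimensional convergence with tightness gives weak convergence to the Gaussian process $\bGamma$ with the stated block-diagonal covariance.

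The hard part will not be the invariance-principle machinery itself but controlling the two coordinates that live on the growing-dimensional error space, the bilinear term $F_{N,T;i}$ and especially the quadratic term $G_{N,T;i}$, uniformly as $N\to\infty$. One must show that the scalar sequences $\eta_t W_{i,t}$ and $W_{i,t}^2-EW_{i,t}^2$ remain $L^6$--$m$--approximable with both their approximation rate and their $L^6$ norm bounded independently of $N$; this is exactly the content of the estimates \eqref{hor-1}, \eqref{hor-2}, and the resulting \eqref{b-1}--\eqref{b-3}, where $\|\fe_i\|=1$ and the Rosenthal inequality keep all constants dimension-free. Once this uniform weak-dependence control is in hand, both the Lindeberg/CLT step and the tightness step proceed as in the fixed-dimensional case.
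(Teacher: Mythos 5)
Your overall architecture --- finite-dimensional convergence via Cram\'er--Wold plus a moment-criterion tightness argument, with the block-diagonal covariance read off from Assumption \ref{error-as}(c) and the dimension-free moment bounds \eqref{b-1}--\eqref{b-3} doing the uniform-in-$N$ work --- is the same as the paper's, and your identification of the quadratic coordinate $G_{N,T;i}$ as the delicate one is exactly right. The one place where you hide the real work is the sentence ``a central limit theorem for weakly dependent sequences yields asymptotic normality.'' The summands here form a triangular array: $W_{i,t}=\sum_{\ell=1}^N\fe_i(\ell)e_{\ell,t}$ changes with $T$ through $N(T)$ and through the eigenvectors $\fe_i=\fe_i(T)$, so an off-the-shelf invariance principle for a \emph{fixed} $L^p$--$m$--approximable sequence does not apply directly; you acknowledge the triangular-array issue but uniform moment and approximation bounds alone are not a CLT. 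The paper closes this gap with an explicit three-stage device that your proposal omits: it first replaces $\eta_t$ and $e_{\ell,t}$ by their $m$-dependent truncations $\eta_t^{(m)}$, $e_{\ell,t}^{(m)}$ and shows the resulting processes are uniformly close in probability (the estimates \eqref{c-1}--\eqref{c-3}, proved from \eqref{c-4}, \eqref{hor-1}, and \eqref{hor-2}); it then proves the CLT for the $m$-dependent array, where asymptotic independence of increments over disjoint intervals and a third-moment Lyapunov-type condition suffice row-wise; finally it lets $m\to\infty$, using that the limits are Gaussian so convergence of the covariances $\V^{(m)}\to\V$ gives convergence of the laws. If you intend to avoid this truncation you must cite a triangular-array invariance principle for row-wise stationary Bernoulli shifts with uniformly bounded approximation rates and verify its hypotheses; otherwise your proof should be read as reducing to the paper's argument.

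Two smaller points. First, your covariance computation for the off-diagonal blocks is correct and is the same factorization the paper uses implicitly (independence of $\{\eta_t\}$ and $\{e_{\ell,t}\}$ together with $E\eta_t=0$ and $EW_{i,t}=0$ kills every cross term). Second, for tightness the paper works with third moments, $E|\sum_{t=1}^v(\cdot)|^3\le c\,v^{3/2}$, and Theorem 8.4 of Billingsley (1968), whereas you propose sixth-moment increment bounds with exponent $3$; both satisfy the standard criterion, so this is only a cosmetic difference, and coordinate-wise tightness does imply joint tightness here since the limit has continuous paths.
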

\begin{proof} First we define the $m$--dependent processes
 $$
D_{N,T}^{(m)}(u)=\frac{1}{T^{1/2}}\sum_{t=1}^{\lf Tu\rf}((\eta_t^{(m)})^2-1),\;\;\;F_{N,T;i}^{(m)}(u)=\frac{1}{T^{1/2}}\sum_{t=1}^{\lf Tu\rf}\eta_t\sum_{\ell=1}^N\fe_i(\ell)e_{\ell,t}^{(m)},\;1\leq i\leq K,
$$
and
$$
G_{N,T;i}^{(m)}(u)=\frac{1}{T^{1/2}}\left\{\sum_{t=1}^{\lf Tu\rf}\left(\sum_{\ell=1}^N\fe_i(\ell)e_{\ell,t}^{(m)}\right)^2-\lf Tu\rf\sum_{\ell=1}^N\fe_i^2(\ell)\sigma^2_{\ell}\right\},\;1\leq i\leq K,
$$
where $\eta_t^{(m)}$ and $e_{\ell,t}^{(m)}$ are defined in Assumption \ref{error-as}(a) and Assumption \ref{error-as}(b), respectively. We show that for any $x>0$
\begin{align}\label{c-1}
\lim_{m\to\infty}\limsup_{T\to \infty}P\left\{|D_{N,T}(u)-D_{N,T}^{(m)}(u)|>x\right\}=0,
\end{align}
\begin{align}\label{c-2}
\lim_{m\to\infty}\limsup_{T\to \infty}P\left\{| F_{N,T;i}(u)-   F_{N,T;i}^{(m)}(u)   |>x\right\}=0,
\end{align}
and
\begin{align}\label{c-3}
\lim_{m\to\infty}\limsup_{T\to \infty}P\left\{|  G_{N,T;i}(u)-   G_{N,T;i}^{(m)}(u)   |>x\right\}=0,
\end{align}
for all $0<u\leq 1$  and $1\leq i \leq K$. It follows from Assumption \ref{error-as}(a) and the Cauchy--Schwarz inequality that
\begin{align}\label{c-4}
E\left|\eta_0^2-(\eta^{(m)}_0)^2\right|^6&= E\left\{|\eta_0+\eta^{(m)}_0||\eta_0-\eta^{(m)}_0|\right\}^6\\
&\leq 2^4 (E\eta_0^{12})^{1/2}(E|\eta_0-\eta^{(m)}|^{12})^{1/2}\notag\\
&\leq c_{6,1}m^{-6\alpha}.\notag
\end{align}
By stationarity, we get that
\begin{align*}
\mbox{var}&\left(T^{-1/2}\sum_{s=1}^{\lf Tu\rf}(\eta_s^2-(\eta^{(m)}_s)^2)^2\right)\\
&\leq \frac{1}{T}\sum_{s=1}^TE(\eta_s^2-(\eta^{(m)}_s)^2)
+2\sum_{s=1}^T E(\eta_0^2-(\eta^{(m)}_0)^2)(\eta_s^2-(\eta^{(m)}_s)^2)\\
&\leq E(\eta_0^2-(\eta^{(m)}_0)^2)^2+2\sum_{s=1}^T |E(\eta_0^2-(\eta^{(m)}_0)^2)(\eta_s^2-(\eta^{(m)}_s)^2)|.
\end{align*}
Since $\eta_0^2-(\eta^{(m)}_0)^2$ is independent of $\eta^{(m)}_s$,  if $s>m$, we obtain that
\begin{align*}
\sum_{s=m+1}^T &|E(\eta_0^2-(\eta^{(m)}_0)^2)(\eta_s^2-(\eta^{(m)}_s)^2)|\\
&\leq \sum_{s=m+1}^T |E(\eta_0^2-1)\eta_s^2|+\sum_{s=m+1}^T |E((\eta^{(m)}_0)^2)-1)\eta_s^2|.
\end{align*}
The independence of $\eta_0$ and $\eta_s^{(s)}$, \eqref{c-4}, and H\"older's inequality  yield
\begin{align*}
\sum_{s=m+1}^T |E(\eta_0^2-1)\eta_s^2|&=\sum_{s=m+1}^T |E(\eta_0^2-1)(\eta_s^2-(\eta_s^{(s)})^2|\\
&\leq \sum_{s=m+1}^\infty
(E|\eta_0^2-1|^{6/5})^{5/6}(E(\eta_0^2-(\eta_0^{(s)})^2)^6)^{1/6}\\
&\leq c_{6,2}m^{-(\alpha-1)}
\end{align*}
with $c_{6,2}=(c_{6,1}/(\alpha-1)) (E|\eta_0^2-1|^{6/5})^{5/6} $. The same argument gives that
$$
\sum_{s=m+1}^T |E((\eta^{(m)}_0)^2)-1)\eta_s^2|\leq c_{6,2}m^{-(\alpha-1)}.
$$
On the other hand, applying again \eqref{c-4} and the Cauchy--Schwarz inequality we conclude
\begin{align*}
\sum_{s=1}^m |E(\eta_0^2-(\eta^{(m)}_0)^2)(\eta_s^2-(\eta^{(m)}_s)^2)|\leq \sum_{s=1}^m E(\eta_0^2-(\eta^{(m)}_0)^2)^2 
\leq c_{6,1}m^{-(\alpha-1)}.
\end{align*}
Chebyshev's inequality now implies \eqref{c-1}. The proofs of \eqref{c-2} and \eqref{c-3} go along the lines of \eqref{c-1}, we only need to replace \eqref{c-4} with  \eqref{hor-1} and \eqref{hor-2}, respectively.
Next we show that for each $m$, $\{D_{N,T}^{(m)}(u), F_{N,T;i}^{(m)}(u), G_{N,T;i}^{(m)}(u), 0\leq u \leq 1, 1\leq i \leq K\}$ converges in ${\mathcal D}^{2K+1}[0,1]$ to the Gaussian process $\bGamma^{(m)}(u)=(\Gamma_1^{(m)}(u), \Gamma_2^{(m)}(u),\\
\ldots, \Gamma_{2K+1}^{(m)}(u))^\T,0\leq u \leq 1$, with  $E\bGamma^{(m)}(u)={\bf 0}$, and
\begin{displaymath}
E\bGamma^{(m)}(u)(\bGamma^{(m)})^\T(u')=\min(u,u')
\left(
\begin{array}{ll}
V_1^{(m)}\; &{\bf 0}^\T\;\;\;\;\;\;\;{\bf 0}^\T
\vspace{.3cm}\\
{\bf 0}\;&\V_2^{(m)}\;\;\;\;\;\bO
\vspace{.3cm}\\
{\bf 0}\;&\bO\;\;\;\;\;\;\;\;\V_3^{(m)}
\end{array}
\right)
\end{displaymath}
with
\beq\label{v-def-1-m}
V_1^{(m)}=\sum_{\ell=-m}^m\cov((\eta_0^{(m)})^2, (\eta^{(m)}_\ell)^2),
\eeq
\beq\label{v-def-2-m}
\V_2^{(m)}=\left\{\sum_{s=-m}^m\lim_{N\to \infty}\sum_{k=1}^N\fe_i(k)\fe_j(k)\cov(\eta_0^{(m)}, \eta_s^{(m)})\cov(e_{k,0}^{(m)}, e_{k,s}^{(m)}), 1\leq i,j\leq K\right\},
\eeq
and
\begin{align}\label{v-def-3-m}
\V_3^{(m)}
&=\left\{\sum_{s=-m}^m\lim_{N\to \infty}\left(\sum_{k=1}^N\fe_i^2(k)\fe^2_j(k)\cov ((e_{k,0}^{(m)})^2, (e_{k,s}^{(m)})^2) \right.\right.\\
&\hspace{1.5cm}+2\left(\sum_{k=1}^N\fe_i(k)\fe_j(k)\cov (e_{k,0}^{(m)}, e_{k,s}^{(m)})\right)^2  \notag\\
&\hspace{1.5cm}\left. \left. -2\sum_{k=1}^N\fe_i^2(k)\fe_j^2(k)(\cov (e_{k,0}^{(m)}, e_{k,s}^{(m)}))^2
\right), 1\leq i,j\leq K
\right\}.\notag
\end{align}
Let $0\leq u_1<u_2<\ldots <u_M\leq 1$ and $\mu_{i,k,\ell}, 1\leq i \leq M, 1\leq k,\ell\leq K$. We can write
\begin{align*}
&\sum_{k=1}^M \mu_{k,1,1}(D_{N,T}^{(m)}(u_k) -D_{N,T}^{(m)}(u_{k-1}))+\sum_{k=1}^M \sum_{i=1}^K \mu_{k,2,i}(F_{N,T,i}^{(m)}(u_k)-F_{N,T,i}^{(m)}(u_{k-1}))\\
&\hspace{2cm}+\sum_{k=1}^M \sum_{i=1}^K \mu_{k,3,i}(G_{N,T,i}^{(m)}(u_k)-G_{N,T,i}^{(m)}(u_{k-1}))\\
&={\mathcal S}_1+\ldots +{\mathcal S}_M,
\end{align*}
where
$$
{\mathcal S}_k=\sum_{s=\lf Tu_{i-1}\rf +1}^{\lf Tu_i\rf}\xi_{N,T;s}(k),\;\;1\leq i\leq M.
$$
The variables  $\xi_{N,T;s}(k),\lf Tu_{k-1}\rf +1\leq s \leq  \lf Tu_k\rf, 1\leq k \leq M$ are $m$--dependent and therefore $T^{-1/2}{\mathcal S}_1$, $T^{-1/2}{\mathcal S}_2,\ldots$, $T^{-1/2}{\mathcal S}_M$ are asymptotically independent. Hence we need only show the asymptotic normality of $T^{-1/2}{\mathcal S}_k$ for all $1\leq k \leq M$. For every fixed $k$ the variables $\xi_{N,T;s}(k),\lf Tu_{k-1}\rf +1\leq s \leq  \lf Tu_k\rf$ form an $m$--dependent stationary sequence with zero mean,
\begin{align*}
\lim_{T\to\infty}&\mbox{var}\left(T^{-1/2}{\mathcal S}_k\right)\\
&=\mbox{var}\left( \mu_{k,1,1} \Gamma_{1}^{(m)}(u_k) -(\Gamma_1^{(m)}(u_{k-1})) +
\sum_{i=1}^K \mu_{k,2,i}(\Gamma_{i+1}^{(m)}(u_k)-\Gamma_{i+1}^{(m)}(u_{k-1}))\right. \\
&\hspace{3cm}\left. +\sum_{i=1}^K \mu_{k,3,i}(\Gamma_{i+K+1}^{(m)}(u_k)-\Gamma_{i+K+1}^{(m)}(u_{k-1}))
 \right)
\end{align*}
and $E|\xi_{N,T;s}(k)|^3\leq C_{1}$, where $C_{1,1}$ does not depend on $N$ nor on $T$. Due to the $m$--dependence, these properties imply the asymptotic normality of $T^{-1/2}{\mathcal S}_k$. Applying the Cram\'er--Wold device (cf.\ Billingsley (1968)), we get that the finite dimensional distributions of $\{D_{N,T}^{(m)}(u),F_{N,T;i}^{(m)}(u), G_{N,T;i}^{(m)}(u), 0\leq u \leq 1, 1\leq i \leq K\}$ converge to that of  $\bGamma^{(m)}(u)$.
 Since $\|\V^{(m)}-\V\|\to 0$ as $T\to \infty$, and $\bGamma(u)$ and $\bGamma^{(m)}(u)$ are Gaussian processes we conclude that that $\bGamma^{(m)}(u)$ converges in ${\mathcal D}^{2K+1}[0,1]$ to $\bGamma(u)$. On account  of \eqref{c-1}--\eqref{c-3} we obtain that the finite dimensional distributions of $\{D_{N,T}(u),F_{N,T;i}(u), G_{N,T;i}(u), 0\leq u \leq 1, 1\leq i \leq K\}$ converge to that of $\bGamma(u)$. It is shown in the proof of Lemma \ref{rem-m}
 that
 $$
 E\left|\sum_{t=1}^v(\eta_t^2-1)\right|^3\leq c_{6,3}v^{3/2},\;\;  E \left|\sum_{t=1}^{v}\eta_t\sum_{\ell=1}^N\fe_i(\ell)e_{\ell,t}\right|^3\leq c_{6,4}v^{3/2}\;\;
 $$
 and
 $$
 E\left|\sum_{t=1}^{v}\left(\sum_{\ell=1}^N\fe_i(\ell)e_{\ell,t}\right)^2-v\sum_{\ell=1}^N\fe_i^2(\ell)\sigma^2_{\ell}\right|^3\leq c_{6,5}v^{3/2}.
 $$
 Due to the stationarity of $\eta, e_{i,t}, 1\leq i \leq N$, the tightness follows from Theorem 8.4 of Billingsley (1968).
\end{proof}

\medskip
{\it Proof of Theorem \ref{main-all}.} By Lemmas \ref{eig-rem}, \ref{hall} and \ref{hall-2}  we have that
$$
\sup_{0\leq u \leq 1}\left|T^{1/2}\left(\tilde{\lambda_i}(u)-\frac{\lf Tu\rf}{T}\lambda_i\right)-T^{1/2}\fe_i^\T\left(\tilde{\C}_{N,T}(u)-\frac{\lf Tu\rf}{T}\C\right)\fe_i\right|=o_P(1).
$$
Also,
\begin{align*}
\sup_{0\leq u \leq 1}&\left|T^{1/2}\fe_i^\T\left(\tilde{\C}_{N,T}(u)-\frac{\lf Tu\rf}{T}\C\right)\fe_i-G_{N,T;i}(u)\right|\\
&\leq
(\fe_i^\T\bgamma)^2\sup_{0\leq u \leq 1}|D_{N,T}(u)|+2|\fe_i^\T\bgamma|\sup_{0\leq u \leq 1}|F_{N,T;i}(u)|\\
&=O_P(1)((\fe_i^\T\bgamma)^2+|\fe_i^\T\bgamma|),
\end{align*}
since by Lemma \ref{clt-lem}
$$
\sup_{0\leq u \leq 1}|D_{N,T}(u)|=O_P(1)\;\;\;\mbox{and}\;\;\;\sup_{0\leq u \leq 1}|F_{N,T;i}(u)|=O_P(1).
$$
By the Cauchy--Schwarz inequality we have that $|\fe_i^\T\bgamma|\leq \|\bgamma\|$ and  therefore
$$
\sup_{0\leq u \leq 1}\left|T^{1/2}\fe_i^\T\left(\tilde{\C}_{N,T}(u)-\frac{\lf Tu\rf}{T}\C\right)\fe_i-G_{N,T;i}(u)\right|=o_P(1).
$$
The weak convergence of $G_{N,T;i}(u), 0\leq u \leq 1, 1\leq i \leq K$ is proven in Lemma \ref{clt-lem}, which completes the proof of Theorem \ref{main}.
\qed\\

{\it Proof of Theorem \ref{main-all-2}.}

Lemmas \ref{eig-rem} and \ref{hall} yield
\begin{align*}
\sup_{0\leq u \leq 1}\left|T^{1/2}\|\bgamma\|^{-2}\Bigl(\tilde{\lambda}_1(u)-
\frac{\lf Tu\rf}{T}\lambda_1\Bigl)-T^{1/2}\|\bgamma\|^{-2}\fe_1^\T\Bigl(\tilde{\C}_{N,T}(u)-\frac{\lf Tu\rf}{T}\C\Bigl)\fe_1\right|
=o_P(1).
\end{align*}
Thus Lemma \ref{clt-lem} yields
\begin{align*}
\sup_{0\leq u \leq 1}\left|T^{1/2}\|\bgamma\|^{-2}\left(\tilde{\lambda}_1(u)-\frac{\lf Tu\rf}{T}\lambda_1\right)-\frac{(\fe_1^\T\bgamma)^2}{\|\bgamma\|^2}D_{N,T}(u)
\right|=o_P(1).
\end{align*}
According to Lemma \ref{clt-lem} $\sup_{0\leq u \leq 1}|D_{N,T}(u)|=O_P(1)$ and  since  $(\fe_1^\T\bgamma)^2/\|\bgamma\|^2\to 1$ by Lemma \ref{lin-1}, we conclude
\begin{align}\label{fin-1}
\sup_{0\leq u \leq 1}\left|T^{1/2}\|\bgamma\|^{-2}\left(\tilde{\lambda}_1(u)-\frac{\lf Tu\rf}{T}\lambda_1\right)-D_{N,T}(u)
\right|=o_P(1).
\end{align}
  Lemmas \ref{lin-1} and  \ref{hall-2} imply
\begin{align}\label{fin-2}
\sup_{0\leq u \leq 1}&\left|T^{1/2}(\tilde{\lambda}_i(u)-u\lambda_i)-((\fe_i^\T\bgamma)^2D_{N,T}(u)+2\fe_i^\T\bgamma F_{N,T;i}(u)+G_{N,T;i}(u))
\right|\\
&=o_P(1).\notag
\end{align}
 Combining \eqref{fin-1} and \eqref{fin-2} with Lemma \ref{clt-lem}, we obtain that $\{T^{1/2}|\|\bgamma\|^{-2}(\tilde{\lambda}_1(u)-u\lambda_1), T^{1/2}(\tilde{\lambda}_i(u)-u\lambda_i), 2\leq i \leq K\}$ converges weakly in ${\mathcal D}^K[0,1]$ to $\bGamma^0(u)=(\Gamma^0_1(u), \Gamma^0_2(u),\\
 \ldots ,\Gamma^0_K(u))^\T$, where $\Gamma_1^0(u)=\Gamma_1(u)$ and
$\Gamma_i^0(u)=a_i^2\Gamma_1(u)+2a_i\Gamma_{i+1}(u)+\Gamma_{i+K+1}(u), 2\leq i \leq K$. The computation of the covariance function of $\bGamma^0(u)$ finishes the proof of Theorem \ref{main-all-2}.

\qed

\medskip

{\it Proof of Theorem \ref{main}.} Theorem \ref{main} is implied by Theorems \ref{main-all} and \ref{main-all-2} by Remark \ref{main-same}. 

\qed

{\it Proof of Theorem \ref{main-2} and Remark \ref{nowe}.} Theorem \ref{main-2} follows from Remark \ref{nowe}. Remark \ref{nowe} follows from Theorems \ref{main-all} and \ref{main-all-2} when the condition \eqref{n-t-1} is replaced with \eqref{n-t-2}. This requires replacing Lemma \ref{hall-2} with the result that for all $c>0$
$$
\max_{1\leq i \leq K}\sup_{c\leq u\leq 1}|Z_{N,T;i}(u)|=O_P\left(\frac{N}{T}\right),
$$
which follows from \eqref{ineq**} and Markov's inequality.

\qed

\subsection{Proof of Theorems \ref{estivar} and \ref{th-cons}}\label{cons-sec}

We prove a more general result concerning consistent estimates for norming sequences for each eigenvalue process. Let
$$
\hat{\xi}_{i,t}=(\hat{\fe}_i^\T(\X_t-\bar{\X}_T))^2,\;\;1\leq t\leq T\;\;\;1\leq i \leq K,
$$
and define
\begin{align*}
\hat{v}^2_{i,T}=\sum_{s=-N+1}^{N-1}J\left(\frac{s}{h}\right)\hat{r}_{i,s},
\end{align*}
where
\begin{displaymath}
\hat{r}_{i,s}=
\left\{
\begin{array}{ll}
\displaystyle \frac{1}{T-s}\sum_{t=1}^{T-s} (\hat{\xi}_{i,t}-\bar{\xi}_{i,T})(\hat{\xi}_{i,t+s}-\bar{\xi}_{i,T}),\;\;\mbox{if}\;\;s\geq 0
\vspace{.2cm}\\
\displaystyle\frac{1}{T-|s|}\sum_{t=-s}^{T} (\hat{\xi}_{i,t}-\bar{\xi}_{i,T})(\hat{\xi}_{i,t+s}-\bar{\xi}_{i,T}),\;\;\mbox{if}\;\;s< 0,
\end{array}
\right.
\end{displaymath}
where
$$
\bar{\xi}_{i,T}=\frac{1}{T}\sum_{t=1}^T\xi_{i,t}.
$$

We show that if $\|\bgamma\|=O(1)$ as $T\to \infty$, then
\beq\label{cos-1}
\frac{\hat{v}^2_{i,T}}{G(i,i)}\;\;\stackrel{P}{\to}\;1, \;\;\;\mbox{as}\;\;T\to \infty.
\eeq

Moreover, if $\|\bgamma\|\to \infty$ as $T \to \infty$, then

\beq\label{cos-2}
\frac{\hat{v}^2_{1,T}}{V_1\|\bgamma\|^4}\;\;\stackrel{P}{\to}\;1, \;\;\;\mbox{as}\;\;T\to \infty,
\eeq

and for $2\leq i \leq K$,

\beq\label{cos-3}
\frac{\hat{v}^2_{i,T}}{H(i,i)}\;\;\stackrel{P}{\to}\;1, \;\;\;\mbox{as}\;\;T\to \infty.
\eeq

We can assume without loss of generality that $E\bX_t={\bf 0}$. Elementary algebra gives that
\begin{align*}
(\bX_t-&\bar{\bX}_T)(  \bX_t-\bar{\bX}_T)^\T-\frac{1}{T}\sum_{u=1}^T(\bX_u-\bar{\bX}_T)(  \bX_u-\bar{\bX}_T)^\T\\
&=\bX_t\bX_t^\T-E\bX_0\bX_0^\T-\frac{1}{T}\sum_{u=1}^T\left(\bX_u\bX_s^\T-E\bX_0\bX_0^\T\right)
-\bX_t\bar{\bX}_T^\T-\bar{\bX}_T \bX_t^\T.
\end{align*}
It is easy to see that
\begin{align*}
&E\left|\hfe_i^\T\left[\frac{1}{T}\sum_{t=1}^T(\bX_t\bX_t^\T-E\bX_0\bX_0^\T)\right]\hfe_i\left[ \hfe_i^\T\frac{1}{T}\sum_{u=1}^T\left(\bX_u\bX_s^\T-E\bX_0\bX_0^\T\right)\hfe_i\right]\right|\\
&\hspace{.5cm}\leq E\left\|\frac{1}{T}\sum_{t=1}^T(\bX_t\bX_t^\T-E\bX_0\bX_0^\T)\right\|^2\\
&\hspace{.5cm}=
\frac{1}{T^2}\sum_{\ell=1}^N\sum_{\ell'=1}^NE\left(\sum_{u=1}^T(X_{\ell,u}X_{\ell',u}-EX_{\ell,u}X_{\ell',u})\right)^2\\
&=O\left(\frac{N^2}{T}\right)
\end{align*}
and  therefore by Markov's inequality we have
\begin{align}\label{ps-00}
&\lim_{T\to\infty}\left|\left[\frac{1}{T}\sum_{t=1}^T\hfe_i^\T(\bX_t\bX_t^\T-E\bX_0\bX_0^\T)\hfe_i\right]\left[ \hfe_i^\T\frac{1}{T}\sum_{u=1}^T\left(\bX_u\bX_s^\T-E\bX_0\bX_0^\T\right)\hfe_i\right]\right|\\
&\hspace{.5cm}=O_P\left(\frac{N^2}{T}\right).\notag
\end{align}
Using the same arguments as above, for every $c_{7,1}$ one can find  $c_{7,2}$ such that
\begin{align}\label{ps-0}
\lim_{T\to\infty}P\left\{\frac{1}{T}\sum_{t=1}^T\hfe_i^\T(\bX_t\bX_t^\T-E\bX_t\bX_t^\T)\hfe_i\hfe_i^\T\X_{t+s}\bar{\bX}_T\hfe_i^\T\geq c_{7,2}N^2/T\right\}\leq c_{7,1}
\end{align}
for every $c_{7,3}$ there is $c_{7,4}$ such that
\begin{align}\label{ps01}
\lim_{T\to\infty}P\left\{\left|\hfe_i^\T\frac{1}{T}\sum_{u=1}^T(\bX_u\bX_u^\T-E\bX_u\bX_u^\T)\hfe_i\frac{1}{T}\sum_{t=1}^T\hfe_i^\T\bX_{t+s}\bar{\X}_T\hfe_i
\right|\geq c_{7,4}N^2/T\right\}\leq c_{7,3}.
\end{align}
We note
\begin{align*}
\left|\frac{1}{T}\sum_{t=1}^{T-s}\hfe_i^\T\bX_t\bar{\bX}_T\hfe_i\hfe_i^\T\bX_{t+s}\bar{\bX}_T\hfe_i\right|\leq
\|\bar{\bX}_T\|^2\frac{1}{T}\sum_{t=1}^{T}\|\bX_t\|\|\bX_{t+s}\|.
\end{align*}
By \eqref{model-null} and assumption $\mu_i=0$ we get that from Assumption \ref{error-as}(a)--Assumption \ref{error-as}(b) and Assumption \ref{b-g}
\begin{align*}
E\|\bar{\bX}_T\|^2=\frac{1}{T^2}\sum_{u,v=1}^TE\bX_u^\T\bX_v&=\frac{1}{T^2}\sum_{u,v=1}^T\sum_{\ell=1}^NEX_{\ell,u}X_{\ell,v}\\
&=\frac{1}{T^2}\sum_{u,v=1}^T\sum_{\ell=1}^N(\gamma_\ell^2E\eta_u\eta_v+Ee_{\ell, u}e_{\ell,v})\\
&=O\left(\frac{N}{T}   \right)
\end{align*}
using the arguments in the proof of Lemma \ref{rem-m}. Due to stationarity we have
$$
E\|\bX_t\|\|\bX_{t+s}\|\leq (E\|\bX_t\|^2E\|\bX_{t+s}\|^2)^{1/2}=E\|\bX_0\|^2
$$
and
\begin{align*}
E\|\bX_0\|^2=\sum_{\ell=1}^N (\gamma_\ell^2 +Ee_{\ell,0}^2)=O(N)
\end{align*}
Hence for every $c_{7,5}$ there is $c_{7,6}$ such that
\begin{align}\label{ps-1}
\lim_{T\to\infty}P\left\{\left|\frac{1}{T}\sum_{t=1}^{T-s}\hfe_i^\T\bX_t\bar{\bX}_T\hfe_i\hfe_i^\T\bX_{t+s}\bar{\bX}_T\hfe_i\right|\geq c_{7,6}N^2/T\right\}\leq c_{7,5}.
\end{align}
Putting together \eqref{ps-00}--\eqref{ps01}  we conclude
\begin{align*}
\hat{v}_{i,T}^2=\tilde{v}_{i,T}^2+O_P\left(  \frac{hN^2}{T} \right),
\end{align*}
where
$$
\tilde{v}^2_{i,T}=\sum_{s=-N+1}^{N-1}J\left(\frac{s}{h}\right)\tilde{r}_{i,s},
$$
where
\begin{displaymath}
\tilde{r}_{i,s}=
\left\{
\begin{array}{ll}
\displaystyle \frac{1}{T-s}\sum_{t=1}^{T-s} \tilde{\xi}_{i,t}\tilde{\xi}_{i,t+s},\;\;\mbox{if}\;\;s\geq 0
\vspace{.2cm}\\
\displaystyle\frac{1}{T-|s|}\sum_{t=-s}^{T} \tilde{\xi}_{i,t}\tilde{\xi}_{i,t+s},\;\;\mbox{if}\;\;s< 0
\end{array}
\right.
\end{displaymath}
with $\tilde{\xi}_{i,t}=\hfe_i^\T(\bX_t\bX_t^\T-E(\bX_0\bX_0^\T))\hfe_i$.\\
It follows from Dunford and Schwartz (1988) and Assumption \ref{inc} that with some constant $c_{7,6}$
\begin{align}\label{dusch}
\max_{1\leq i \leq K}\|\hfe_i-\hat{c}_i\fe_i\|\leq c_{7,6}\|\hat{\C}_{N,T}(1)-\C\|,
\end{align}
where $\hat{c}_i, 1\leq i \leq K$ are random signs. We write
$$
\|\hat{\C}_{N,T}(1)-\C\|\leq \left\| \frac{1}{T}\sum_{t=1}^T(\X_t\X_t^\T-\C)\right\|+\left\| \bar{\X}_T\bar{\X}^\T  \right\|,
$$
and since we can assume without loss of generality that  $E\X_t={\bf 0}$ we get from the proof of Lemma \ref{rem-m}
$$
\left\| \bar{\X}_T\bar{\X}_T^\T  \right\|=O_P\left(\frac{N}{T}\right).
$$
Also,
\begin{align*}
E\left\| \sum_{t=1}^T(\X_t\X_t^\T-\C)\right\|^2&=E\sum_{\ell,\ell'=1}^N\left(\sum_{t=1}^T(X_{\ell,t}X_{\ell',t}-EX_{\ell,t}X_{\ell',t})\right)^2\\
&=\sum_{\ell,\ell'=1}^N\sum_{t,t'=1}^T(EX_{\ell,t}X_{\ell',t}X_{\ell,t'}X_{\ell',t'}-EX_{\ell,t}X_{\ell',t}EX_{\ell,t'}X_{\ell',t'}),
\end{align*}
\begin{displaymath}
EX_{\ell,t}X_{\ell',t}=\left\{
\begin{array}{ll}
\gamma_\ell\gamma_{\ell'},&\quad\mbox{if}\quad\ell\neq\ell'
\vspace{.2cm}\\
\gamma_{\ell}^2+Ee^2_{\ell,0},&\quad\mbox{if}\quad\ell=\ell'.
\end{array}
\right.
\end{displaymath}
and
\begin{displaymath}
EX_{\ell,t}X_{\ell',t}X_{\ell,t'}X_{\ell',t'}=\left\{
\begin{array}{ll}
\gamma_{\ell}^2\gamma_{\ell'}^2E\eta_t^2\eta_{t'}^2+\gamma_\ell^2E\eta_t\eta_{t'}Ee_{\ell',t}e_{\ell',t'}+
\gamma^2_{\ell'}E\eta_t\eta_{t'}Ee_{\ell,t}e_{\ell,t'}
\vspace{.2cm}\\
\;\;\;+Ee_{\ell,t}e_{\ell,t'} Ee_{\ell',t}e_{\ell',t'},\quad\mbox{if}\quad\ell\neq\ell'
\vspace{.2cm}\\
\gamma_{\ell}^4E\eta_t^2\eta_{t'}^2+2\gamma_\ell^2E\eta_0^2Ee^2_{\ell,0}+4\gamma_\ell^2E\eta_t\eta_{t'}Ee_{\ell, t}e_{\ell,t'}+Ee^2_{\ell,t}e^2_{\ell,t'},
\vspace{.2cm}\\
\quad\mbox{if}\quad\ell=\ell'.
\end{array}
\right.
\end{displaymath}
Thus we have
\begin{align*}
\sum_{\ell=1}^N&\sum_{t,t'=1}^T(EX^2_{\ell,t}X^2_{\ell,t'}-(EX^2_{\ell,0})^2)\\
&=\sum_{\ell=1}^N\gamma_\ell^4\sum_{t,t'=1}^T(E\eta_t^2\eta_{t'}^2-(E\eta_0^2)^2)
+  4\sum_{\ell=1}^N\gamma_\ell^2\sum_{t,t'=1}^TE\eta_t\eta_{t'}
Ee_{\ell,t}e_{\ell,t'}\\
&\hspace{1cm}+\sum_{\ell=1}^N\sum_{t,t'=1}^T(Ee^2_{\ell,t}e^2_{\ell,t'}-(Ee^2_{\ell,0})^2)\\
=O\left({N}{T}\right).
\end{align*}
Similarly,
\begin{align*}
&\sum_{\ell,\ell'=1, \ell\neq \ell'}^N\sum_{t,t'=1}^T(EX_{\ell,t}X_{\ell,t'}X_{\ell',t}X_{\ell',t'}-EX_{\ell,t}X_{\ell,t'}EX_{\ell',t}X_{\ell',t'})\\
&\hspace{.5cm}=\sum_{\ell,\ell'=1, \ell\neq \ell'}^N\gamma_\ell^2\gamma^2_{\ell'}\sum_{t,t'=1}^T(E\eta_t^2\eta^2_{t'}-1)
+2\sum_{\ell,\ell'=1, \ell\neq \ell'}^N\gamma_\ell^2\sum_{t,t'=1}^TE\eta_t\eta_{t'}Ee_{\ell',t}e_{\ell',t'}\\
&\hspace{1cm}
+\sum_{\ell,\ell'=1, \ell\neq \ell'}^N\sum_{t,t'=1}^TEe_{\ell,t}e_{\ell,t'} Ee_{\ell',t}e_{\ell',t'}\\
&=O(N^2T).
\end{align*}
We conclude from \eqref{dusch} that
\begin{align}\label{dusch-f}
\max_{1\leq i \leq K}\|\hfe_i-\hat{c}_i\fe_i\|=O_P\left({N}{T^{-1/2}}\right).
\end{align}
Next we define
$$
{v}^2_{i,T}=\sum_{s=-N+1}^{N-1}J\left(\frac{s}{h}\right){r}_{i,s},
$$
where
\begin{displaymath}
{r}_{i,s}=
\left\{
\begin{array}{ll}
\displaystyle \frac{1}{T-s}\sum_{t=1}^{T-s} {\xi}_{i,t}{\xi}_{i,t+s},\;\;\mbox{if}\;\;s\geq 0
\vspace{.2cm}\\
\displaystyle\frac{1}{T-|s|}\sum_{t=-s}^{T} {\xi}_{i,t}{\xi}_{i,t+s},\;\;\mbox{if}\;\;s< 0,
\end{array}
\right.
\end{displaymath}
where  ${\xi}_{i,t}=\fe_i^\T(\bX_t\bX_t^\T-E(\bX_0\bX_0^\T))\fe_i$.

We write
\begin{align*}
\tilde{v}^2_{i,T} - v^2_{i,T} = \sum_{j=1-N}^{N-1} J\left(\frac{j}{h}\right)(\tilde{r}_{j,s}-r_{j,s}).
\end{align*}

For $j \ge 0$,
\begin{align*}
\tilde{r}_{i,s}-r_{i,j}&= \frac{1}{T-j}\sum_{t=1}^{T-j} \tilde{\xi}_{i,t}\tilde{\xi}_{i,t+j} -  \xi_{i,t}\xi_{i,t+j} \\
&= \frac{1}{T-j}\sum_{t=1}^{T-j} (\tilde{\xi}_{i,t} -  \xi_{i,t})\tilde{\xi}_{i,t+j} + \frac{1}{T-j}\sum_{t=1}^{T-j} (\tilde{\xi}_{i,t+j} -  \xi_{i,t+j})\xi_{i,t}.
\end{align*}

According to the definitions of $\tilde{\xi}_{i,t}$ and $\xi_{i,t}$,

\begin{align*}
\tilde{\xi}_{i,t} - \xi_{i,t} =  (\hfe_i^\T - \fe_i^\T)U_t\hfe_i + \fe_i^\T U_t(\hfe_i - \fe_i),
\end{align*}
where $U_t=\bX_t \bX_t^\T - E\bX_0 \bX_0^\T$, from which it follows that,

\begin{align}
(\tilde{\xi}_{i,t} - \xi_{i,t})\tilde{\xi}_{i,t+j} =   (\hfe_i^\T - \fe_i^\T)U_t\hfe_i \hfe_i^\T U_{t+j} \hfe_i + \fe_i^\T U_t(\hfe_i - \fe_i) \hfe_i^\T U_{t+j} \hfe_i.
\end{align}
According to the Cauchy-Schwarz and triangle inequalities,
\begin{align*}
\left|\sum_{j=0}^{N-1} J\left(\frac{j}{h}\right)\frac{1}{T-j}\sum_{t=1}^{T-j} (\hfe_i^\T - \fe_i^\T)U_t\hfe_i \hfe_i^\T U_{t+j} \hfe_i \right| \le  \|\hfe_i - \fe_i\| \left|\sum_{j=0}^{N-1} J\left(\frac{j}{h}\right)\frac{1}{T-j}\sum_{t=1}^{T-j} \|U_t\| \|U_{t+j} \| \right|.
\end{align*}

According to \ref{dusch-f} $\|\hfe_i - \fe_i\|=O_P(NT^{-1/2})$. Furthermore, since $E\|U_0\|^2=O(N^2)$, and $J$ has bounded support, the Cauchy-Schwarz and triangle inequalities imply that

\begin{align}\label{lcalc-1}
E\left|\sum_{j=0}^{N-1} J\left(\frac{j}{h}\right)\frac{1}{T-j}\sum_{t=1}^{T-j} \|U_t\| \|U_{t+j} \| \right| \le c_1 h  E\|U_0\|^2= O(hN^2),
\end{align}

For some constant $c_1$. Hence, according to \eqref{lcalc-1} and Markov's inequality, we obtain that
\begin{align}
\left|\sum_{j=0}^{N-1} J\left(\frac{j}{h}\right)\frac{1}{T-j}\sum_{t=1}^{T-j} (\hfe_i^\T - \fe_i^\T)U_t\hfe_i \hfe_i^\T U_{t+j} \hfe_i \right| = O_P(hN^3T^{-1/2}).
\end{align}
Similar arguments applied to the remaining terms in $\tilde{v}^2_{i,T} - v^2_{i,T}$ show that
\begin{align}
|\tilde{v}^2_{i,T} - v^2_{i,T}|=O_P( hN^3T^{-1/2}).
\end{align}



It  follows from \eqref{model-null} and Assumption \ref{error-as}(b) that
\begin{align*}
\lim_{T\to\infty}\frac{1}{G(i,i)}\sum_{s=-\infty}^\infty E{\xi}_{i,t}{\xi}_{i,t+s}=1,
\end{align*}
and
\begin{align*}
\lim_{T\to\infty}\frac{1}{G(i,i)}\sum_{s=-\infty}^\infty K\left( \frac{s}{h}  \right)E {\xi}_{i,t}{\xi}_{i,t+s}=1.
\end{align*}
Since
\beq\label{var-0}
E{v}^2_{i,T}=\sum_{s=-\infty}^\infty K\left( \frac{s}{h}  \right)E {\xi}_{i,t}{\xi}_{i,t+s},
\eeq
if we show that
\beq\label{var-1}
\lim_{T\to\infty}\mbox{var}({v}^2_{i,T})=0,
\eeq
we get immediately that
$$
\frac{{v}^2_{i,T}}{G(i,i)}\;\;\;\stackrel{P}{\to}\;\;\;1,\quad{as  }\;\;\;T\to \infty.
$$
To this end, we have that
\begin{align*}
\mbox{var}({v}^2_{i,T})
=\sum_{s,s'=-\infty}^\infty J\left(\frac{s}{h}\right)J\left(\frac{s'}{h}\right)({r}_{i,s}-E\xi_{i,0}\xi_{i,s})({r}_{i,s'}-E\xi_{i,0}\xi_{i,s'})
\end{align*}
and
\begin{align*}
&\left|\sum_{s,s'=0}^\infty J\left(\frac{s}{h}\right)J\left(\frac{s'}{h}\right)({r}_{i,s}-E\xi_{i,0}\xi_{i,s})({r}_{i,s'}-E\xi_{i,0}\xi_{i,s'})\right|\\
&\hspace{.5cm}\leq \sum_{s,s'=0}^\infty \Biggl|J\left(\frac{s}{h}\right)J\left(\frac{s'}{h}\right)\Biggl|\frac{1}{T-s}\frac{1}{T-s'}\sum_{t=1}^{T-s}\sum_{t'=1}^{T-s'}\biggl|
E\fe_i^\T\X_t\X_t^\T\fe_i\fe_i^\T\X_{t+s}\X_{t+s}^\T\fe_i\fe_i^\T\X_{t'}\X_{t'}^\T\fe_i\\
&\hspace{.9cm}\times\fe_i^\T\X_{t'+s'}\X_{t'+s'}^\T\fe_i
-E\fe_i^\T\X_t\X_t^\T\fe_i\fe_i^\T\X_{t+s}\X_{t+s}^\T\fe_iE\fe_i^\T\X_{t'}\X_{t'}^\T\fe_i\fe_i^\T\X_{t'+s'}\X_{t'+s'}^\T\fe_i\biggl|\\
&\hspace{.5cm}\leq c_{7,7}\frac{1}{T^2} \sum_{s,s'=0}^h \sum_{t=1}^{T-s}\sum_{t'=1}^{T-s'}\biggl|
E\fe_i^\T\X_t\X_t^\T\fe_i\fe_i^\T\X_{t+s}\X_{t+s}^\T\fe_i\fe_i^\T\X_{t'}\X_{t'}^\T\fe_i\fe_i^\T\X_{t'+s'}\X_{t'+s'}^\T\fe_i\\
&\hspace{.9cm}
-E\fe_i^\T\X_t\X_t^\T\fe_i\fe_i^\T\X_{t+s}\X_{t+s}^\T\fe_iE\fe_i^\T\X_{t'}\X_{t'}^\T\fe_i\fe_i^\T\X_{t'+s'}\X_{t'+s'}^\T\fe_i\biggl|\\
&\hspace{.5cm}=O\left(\frac{h}{T}\right),
\end{align*}
with some constant $c_{7,7}$, since we can assume without loss of generality that $J(u)=0$ if $|u|\geq 1$.\\

Now we assume that the conditions of Theorem \ref{main-all-2} are satisfied. First we prove \eqref{cos-2}. It follows from \eqref{model-null} and \eqref{var-0} that
$$
\lim_{T\to \infty}\frac{1}{\|\bgamma\|^4}Er^2_{1,T}=V_1.
$$
Following the proof of one can verify that
$$
\mbox{var}\left( \frac{1}{\|\bgamma\|^4}r^2_{1,T} \right)=0,
$$
completing the proof of \eqref{cos-2}. The proof of \eqref{cos-3} goes along the lines of that of \eqref{cos-1} and therefore the details are omitted.\qed

\medskip
\noindent
{\it Proof of Theorem \ref{th-cons}.} We can assume without loss of generality that $\mu_i=0, 1\leq i \leq N$. Using \eqref{model-full} we have
\begin{displaymath}
\sum_{t=1}^{s}(\X_t-\bar{\X}_T)(\X_t-\bar{\X}_T)^\T=
\left\{
\begin{array}{ll}
\displaystyle s\left(\frac{T-t^*}{T}\right)^2\balpha\balpha^\T +\sum_{u=1}^s\U_{u,T},\;\; \mbox{if}\;\;\;0\leq s\leq t^*
\vspace{.2cm}\\
\displaystyle \left(t^*\left(\frac{T-t^*}{T}\right)^2+(s-t^*)\left(\frac{t^*}{T}\right)^2\right)\balpha\balpha^\T +\sum_{u=1}^s\U_{u,T},
\vspace{.2cm}\\
\hspace{3cm}\;\;\mbox{if}\;\;\; t^*\leq s \leq T,
\end{array}
\right.
\end{displaymath}
where
\begin{align*}
\U_{u,T}=&(\bgamma\eta_u+\be_u)(\bgamma\eta_u+\be_u)^\T-\frac{T-t^*}{T}(\bgamma\eta_u+\be_u)\balpha^\T-(\bgamma\eta_u+\be_u)\bZ_T^\T
-\frac{T-t^*}{T}\balpha(\bgamma\eta_u+\be_u)^\T\\
&+\frac{T-t^*}{T}\balpha\bZ_T^\T-\bZ_T(\bgamma\eta_u+\be_u)^\T+\frac{T-t^*}{T}\bZ_T\balpha^\T+\bZ_T\bZ_T^\T,\;\;\mbox{if}\;\;1\leq u \leq t^*,
\end{align*}
with
$$
\bZ_T=\bgamma\frac{1}{T}\sum_{v=1}^T\eta_v+\frac{1}{T}\sum_{v=1}^T\be_v, \quad\be_v=(e_{1,v}, e_{2,v},\ldots ,e_{N,v})^\T
$$
and
\begin{align*}
\U_{u,T}=&(\bgamma\eta_u+\be_u)(\bgamma\eta_u+\be_u)^\T+\frac{t^*}{T}(\bgamma\eta_u+\be_u)\balpha^\T-(\bgamma\eta_u+\be_u)\bZ_T^\T
+\frac{t^*}{T}\balpha(\bgamma\eta_u+\be_u)^\T\\
&-\frac{t^*}{T}\balpha\bZ_T^\T-\bZ_T(\bgamma\eta_u+\be_u)^\T-\frac{t^*}{T}\bZ_T\balpha^\T+\bZ_T\bZ_T^\T,\;\;\mbox{if}\;\;t^*\leq u \leq T.
\end{align*}
It follows along the same lines as the proof of \eqref{ineq**} that
$$
\sup_{0\leq s\leq T}\left|\fe ^\T\sum_{u=1}^s\U_{u,T}\fe\right|=O_P(T^{1/2})
$$
for $\fe\in R^N$ with $\|\fe\|=1$. Hence
$$
\frac{\hat{\lambda}_{1,T}(t^*/T)}{\|\balpha\|}\;\;\stackrel{P}{\to}\;\;\theta(1-\theta)^2
$$
and
$$
\frac{\hat{\lambda}_{1,T}(1)}{\|\balpha\|}\;\;\stackrel{P}{\to}\;\;\theta(1-\theta),
$$
which completes the proof of Theorem \ref{th-cons}.

\qed
\medskip

The proof of Theorem \ref{load-cons} is based on the following lemma.

\begin{lemma}\label{cons-load-lem} We assume that model \eqref{load-2} holds, Assumptions \ref{inc}, \ref{b-g}, \ref{error-as},  \eqref{n-t-2}, \eqref{asth-1} are satisfied.
If for some $0<\epsilon<1$ there exists an $N_0$ such that for all $N\geq N_0$
\beq\label{load-assa}
\left|\frac{\sup\left\{\bv^\T[ \theta\bgamma\bgamma^\T+(1-\theta)(\bgamma+\bdelta)(\bgamma+\bdelta)^\T+\bLambda  ]\bv:\;\bv\in R^N, \|\bv\|=1\right\}}{\sup\left\{
\bv^\T[\bgamma\bgamma^\T +\bLambda]\bv:\;\bv\in R^N, \|\bv\|=1\right\}} -1
\right|>\epsilon
\eeq
\end{lemma}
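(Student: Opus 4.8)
The conclusion suppressed in the statement is that \eqref{load-assa} forces $\sup_{0\le u\le 1}|\hat{B}_{T,1}(u)|\stackrel{P}{\to}\infty$, matching the conclusion of Theorem \ref{load-cons}. As in the proof of Theorem \ref{th-cons} I would take $\mu_i=0$ without loss of generality and introduce the pre-break covariance $\C_1=\bgamma\bgamma^\T+\bLambda$, the post-break covariance $\C_2=(\bgamma+\bdelta)(\bgamma+\bdelta)^\T+\bLambda$, and their time average $\bar{\C}=\theta\C_1+(1-\theta)\C_2$; condition \eqref{load-assa} is precisely $|\lambda_1(\bar{\C})/\lambda_1(\C_1)-1|>\epsilon$, where $\lambda_1(\cdot)$ denotes the largest eigenvalue. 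Under \eqref{load-2} the centred matrix $\tC_{N,T}(u)$ has a mean that is piecewise linear in $u$, with $E\tC_{N,T}(\theta)\approx\theta\C_1$ and $E\tC_{N,T}(1)\approx\bar{\C}$. The plan is to show that $\hat{\lambda}_1(\theta)$ concentrates at $\lambda_1(\C_1)$ while $\hat{\lambda}_1(1)$ concentrates at $\lambda_1(\bar{\C})$, so that the numerator $\theta(\hat{\lambda}_1(\theta)-\hat{\lambda}_1(1))$ of $\hat{B}_{T,1}(\theta)$ is essentially $\theta(\lambda_1(\C_1)-\lambda_1(\bar{\C}))$; since $\sup_{0\le u\le 1}|\hat{B}_{T,1}(u)|\ge|\hat{B}_{T,1}(\theta)|$, it suffices to treat the single point $u=\theta$ and no uniform-in-$u$ estimates are required.

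For the spectral step I would control $\tC_{N,T}(u)-E\tC_{N,T}(u)$ in operator norm at the two points $u=\theta,1$. Recycling the Frobenius second-moment computation from the proof of Theorem \ref{estivar} together with the mean-correction bounds of Lemma \ref{rem-m}, one obtains $\|\tC_{N,T}(u)-E\tC_{N,T}(u)\|=O_P((N+\|\bdelta\|^2)T^{-1/2})$, the contribution $\|\bgamma\|^2=O(N)$ coming from Assumption \ref{b-g} being absorbed into the $N$ term. Weyl's inequality, in the Dunford and Schwartz (1988) form already used in Lemma \ref{eig-rem}, then gives $\tilde{\lambda}_1(\theta)=\theta\lambda_1(\C_1)+\mathrm{err}$ and $\tilde{\lambda}_1(1)=\lambda_1(\bar{\C})+\mathrm{err}$ with $\mathrm{err}=O_P((N+\|\bdelta\|^2)T^{-1/2})$; dividing the first by the sample proportion $\lf T\theta\rf/T\to\theta$ and subtracting, the numerator of $\hat{B}_{T,1}(\theta)$ equals $\theta(\lambda_1(\C_1)-\lambda_1(\bar{\C}))+O_P(\mathrm{err})$.

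To see that the signal dominates, rewrite \eqref{load-assa} as $|\lambda_1(\bar{\C})-\lambda_1(\C_1)|>\epsilon\lambda_1(\C_1)$; splitting on whether $\lambda_1(\bar{\C})\gtrless\lambda_1(\C_1)$ yields the scale-free bound $|\lambda_1(\C_1)-\lambda_1(\bar{\C})|\ge\frac{\epsilon}{1+\epsilon}m$, where $m:=\max(\lambda_1(\C_1),\lambda_1(\bar{\C}))$. Since $\lambda_1(\C_1)\ge\max(\|\bgamma\|^2,c_0)$ (using Assumption \ref{inc}) and $\lambda_1(\bar{\C})\ge(1-\theta)\|\bdelta\|^2$, the error $\mathrm{err}=O_P((N+\|\bdelta\|^2)T^{-1/2})$ is $o_P(m)$ under \eqref{n-t-2}, so the numerator of $\hat{B}_{T,1}(\theta)$ exceeds $\frac{\theta\epsilon}{2(1+\epsilon)}m$ with probability tending to one. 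For the normaliser I would establish the crude bound $\hat{v}_{1,T}=O_P(h^{1/2}m)$: since $\hat{\fe}_1$ is the leading eigenvector of $\tC_{N,T}(1)\approx\bar{\C}$, each $\hat{\xi}_{1,t}=(\hat{\fe}_1^\T(\X_t-\bar{\X}^*_{T,t}))^2$ has size $O_P(m)$, Cauchy--Schwarz bounds every $|\hat{r}_{1,s}|$ by the sample variance of $\hat{\xi}_{1,t}$, which is $O_P(m^2)$, and summing the compactly supported kernel over its $O(h)$ active lags gives $\hat{v}_{1,T}^2=O_P(hm^2)$. Combining the two bounds, $|\hat{B}_{T,1}(\theta)|\ge T^{1/2}\hat{v}_{1,T}^{-1}\frac{\theta\epsilon}{2(1+\epsilon)}m\gtrsim T^{1/2}h^{-1/2}$, which diverges since $h=o(T^{1/2})$ by \eqref{h-1}.

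The principal difficulty is the upper bound on $\hat{v}_{1,T}$ under the alternative, where the data are not stationary, the least-squares split $\hat{t}^*$ has no genuine mean break to detect, and $\hat{\fe}_1$ aligns with the mixture direction rather than with $\bgamma$, so that the refined consistency established in Theorem \ref{estivar}, valid only under $H_0$, is unavailable. What saves the argument is that the divergence carries a spare factor $T^{1/2}$, so only a one-sided order-of-magnitude bound on $\hat{v}_{1,T}$ is needed, and the crude Cauchy--Schwarz estimate above, which loses only the harmless factor $h^{1/2}$, is comfortably enough. A secondary point is to verify the operator-norm fluctuation bound at $u=\theta$ and $u=1$ with the indicated dependence on $\|\bdelta\|$; this follows by repeating the second-moment calculation in the proof of Theorem \ref{estivar} with the post-break loadings $\bgamma+\bdelta$ replacing $\bgamma$.
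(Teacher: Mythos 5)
Your proposal is correct and follows essentially the same route as the paper: compare $\hat{\lambda}_1$ at a point $u^*\le\theta$ (where it concentrates at $\lambda_1(\bgamma\bgamma^\T+\bLambda)$) with $\hat{\lambda}_1(1)$ (which concentrates at the largest eigenvalue of the mixture $\theta\bgamma\bgamma^\T+(1-\theta)(\bgamma+\bdelta)(\bgamma+\bdelta)^\T+\bLambda$), use \eqref{load-assa} to lower-bound the gap, and pair this with a crude $O_P(h^{1/2}\cdot\,)$ bound on $\hat{v}_{1,T}$ to get divergence at rate at least $T^{1/2}h^{-1/2}$. Your version is in fact slightly more careful than the paper's on two points the paper glosses over --- tracking the $\|\bdelta\|^2$ contribution in the operator-norm error and retaining the scale factor $m$ in the bound on $\hat{v}_{1,T}$ (the paper asserts $\hat{v}_{1,T}=O_P(h^{1/2})$ outright) --- but these are refinements within the same argument, not a different proof.
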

\begin{proof} Since the means of the panels do not change during the observation period in \eqref{load-2}, we can assume without loss of generality that $\mu_i=0, 1\leq i \leq N$. It follows from Theorems \ref{main-all} and \ref{main-all-2} that for any $u^*\in (0,\theta ]$  that
$$
\left|{\hat{\lambda}_1 (u^*)}-{\lambda_1}\right|=O_P\left(NT^{-1/2}\right).
$$
One can show that Lemmas \ref{rem-m} and \ref{eig-rem} hold with minor modifications under model \eqref{load-2}, and thus
$$
\left|{\hat{\lambda}_1 (1)}-{\bar{\lambda}_1(1)}\right|=O_P\left(NT^{-1}\right),
$$
where $\bar{\lambda}_1(1)$ is the largest eigenvalue of $\sum_{t=1}^T\X_t\X_t^\T/T$. Simple arithmetic shows that
$$
\frac{1}{T}\sum_{t=1}^T\X_t\X_t^\T=\C_T^{(1)}+\bG_{1,T}+\bG_{2,T},
$$
where
$$
\C_T^{(1)}=\bgamma\bgamma^\T\frac{1}{T}\sum_{t=1}^T\eta_t^2+\frac{1}{T}\sum_{t=1}^T\be_t\be_t^\T+(\bdelta\bdelta^\T +\bgamma\bdelta^\T +\bdelta\bgamma^\T)\frac{1}{T}\sum_{t=t^*}^T\eta_t^2,
$$
$$
\bG_{1,T}=\frac{1}{T}\sum_{t=1}^T(\bgamma\be_t^\T+\be_t\bgamma^\T)\eta_t
$$
and
$$
\bG_{2,T}=\frac{1}{T}\sum_{t=t^*}^T(\be_t\bdelta^\T+\bdelta\be_t^\T)\eta_t.
$$
It follows along the lines of the proof of \eqref{dusch-f} that
$$
\|\bG_{i,T}\|=O_P(NT^{-1/2}),\quad i=1,2,
$$
and thus if  $\lambda_T^{(1)}$ denotes the largest eigenvalue of $\C_T^{(1)}$, then we also have that
$$
\left|{\bar{\lambda}_1(1)}-{\lambda_T^{(1)}}  \right|=O_P(NT^{-1/2}).
$$
Let $\phi_T$ be the largest eigenvalue of $(\bdelta\bdelta^\T+\bdelta\bgamma^\T+\bgamma\bdelta^\T)(1-\theta)+\bgamma\bgamma^\T+\bLambda$. Then one can show using the arguments establishing Theorems \ref{main-all} and \ref{main-all-2} that
$$
\left| {\lambda_T^{(1)}}-{\phi_T} \right|=O_P(NT^{-1/2}).
$$
Assumption \eqref{load-assa} implies that there is an $\epsilon >0$ for all $T$ sufficiently large
$$
\left|\frac{\lambda_1}{\phi_T}-1  \right|>\epsilon,
$$
and therefore there is a constant $c_{8,1}$ such that
$$
\left|{\lambda_1}-{\phi_T}  \right|>c_{8,1}.
$$
Observing that $\hat{v}_{1,T}=O_P(h^{1/2})$ and $\sup_{0\leq u \leq 1}|\hat{B}_{T,1}(u)|\geq |\hat{B}_{T,1}(u^*)|$, the proof of \eqref{th-c-1} is complete.
\end{proof}

\medskip
\noindent
{\it Proof of Theorem \ref{load-cons}.} It is clear that assumption \eqref{load-3} implies \eqref{load-assa},  and therefore Theorem \ref{load-cons} follows from Lemma \ref{cons-load-lem}.

\end{document}